\newtheorem{theorem}{Theorem}
\newtheorem{lemma}{Lemma}
\newtheorem{corollary}{Corollary}
\newtheorem{proposition}{Proposition}
\newtheorem{definition}{Definition}
\begin{document}
%
% paper title
% Titles are generally capitalized except for words such as a, an, and, as,
% at, but, by, for, in, nor, of, on, or, the, to and up, which are usually
% not capitalized unless they are the first or last word of the title.
% Linebreaks \\ can be used within to get better formatting as desired.
% Do not put math or special symbols in the title.
\title{Sleep, Sense or Transmit: Energy-Age Tradeoff for Status Update with Two-Thresholds Optimal Policy}

\author{Jie~Gong,~\IEEEmembership{Member,~IEEE}, Jianhang~Zhu, Xiang~Chen,~\IEEEmembership{Member,~IEEE}, and Xiao~Ma,~\IEEEmembership{Member,~IEEE}

%\thanks{Early version of the paper was presented in IEEE Globecom 2018 \cite{gong2018energy}.}
\thanks{J. Gong, J. Zhu and X. Ma are with the School of Computer Science and Engineering, and the Guangdong Key Laboratory of Information
Security Technology, Sun Yat-sen University, Guangzhou 510006, China. Emails: gongj26@mail.sysu.edu.cn, zhujh26@mail2.sysu.edu.cn, maxiao@mail.sysu.edu.cn.}
\thanks{X. Chen is with the School of Electronics and Information Engineering,  Sun Yat-sen University, Guangzhou 510006, China. Email: chenxiang@mail.sysu.edu.cn.}}

% make the title area
\maketitle

% As a general rule, do not put math, special symbols or citations
% in the abstract
\begin{abstract}
Age-of-Information (AoI), or simply age, which measures the data freshness, is essential for real-time Internet-of-Things (IoT) applications. On the other hand, energy saving is urgently required by many energy-constrained IoT devices. This paper studies the energy-age tradeoff for status update from a sensor to a monitor over an error-prone channel. The sensor can sleep, sense and transmit a new update, or retransmit by considering both sensing energy and transmit energy. An infinite-horizon average cost problem is formulated as a Markov decision process (MDP) with the objective of minimizing the weighted sum of average AoI and average energy consumption. By solving the associated discounted cost problem and analyzing the Markov chain under the optimal policy, we prove that there exists a threshold optimal stationary policy with only two thresholds, i.e., one threshold on the AoI at the transmitter (AoIT) and the other on the AoI at the receiver (AoIR). Moreover, the two thresholds can be efficiently found by a line search. Numerical results show the performance of the optimal policies and the tradeoff curves with different parameters. Comparisons with the conventional policies show that considering sensing energy is of significant impact on the policy design, and introducing sleep mode greatly expands the tradeoff range.
\end{abstract}

\begin{IEEEkeywords}
Age-of-information, sleep mode, energy-age tradeoff, Markov decision process
\end{IEEEkeywords}

% For peer review papers, you can put extra information on the cover
% page as needed:
% \ifCLASSOPTIONpeerreview
% \begin{center} \bfseries EDICS Category: 3-BBND \end{center}
% \fi
%
% For peerreview papers, this IEEEtran command inserts a page break and
% creates the second title. It will be ignored for other modes.
\IEEEpeerreviewmaketitle

\section{Introduction}
With the continuous increase of real-time Internet-of-Things (IoT) applications such as remote monitoring and control, phase data update in smart grid, environment monitoring for autonomous driving and etc., timely status information is strictly required to guarantee a fast and accurate response~\cite{Abd19on}. Thus, it is essential to persistently obtain fresh data. The \emph{age-of-information} (AoI), or simply \emph{age}, defined as the time elapsed since the generation of the latest received update~\cite{kaul2012real}, is a candidate performance metric for data freshness. Different from the conventional delay performance, AoI captures the impact of both transmission delay and data generation frequency.

On the other hand, the sensors gathering status information are usually energy-constrained. Thus, it is also vital to reduce the sensors' energy consumption. However, the two objectives, keeping data fresh and reducing energy consumption, can not be achieved simultaneously in general. To minimize AoI, the sensors should sense and transmit new status in time. To reduce energy consumption, ``lazy" policy is preferred, i.e., the sensors may sense with a low frequency and turn to sleep mode. Therefore, there exists a fundamental tradeoff between AoI and energy consumption, which is of great significance to provide a guidance to determine the sensing and transmission policy in energy-constrained status update communications.

In this paper, to exploit the energy-age tradeoff, we consider a status update system where a sensor generates and transmits status packets to a monitor over an error-prone channel. In this system, a data transmission may fail due to a channel error, and the sensor is aware of the transmission success/failure via ACK/NACK feedback protocol. If the transmission succeeds, the data at the monitor is refreshed and the AoI is reduced. As both status sensing and data transmission consume energy, a fundamental problem is when to sleep, sense, and transmit to balance data freshness and energy consumption. Intuitively, when a fresh data is successfully transmitted, the sensor may sleep to save energy. If the data becomes stale on the contrary, new data should be sensed and transmitted to reduce AoI. The main goal of this paper is to find the optimal policy to achieve the tradeoff between energy and age.

\subsection{Related Work}
The energy and age related research is originated from \cite{yates2015lazy}, where an energy harvesting source was considered. In this early work, \emph{zero-wait policy} was introduced, which generates a fresh update just as the prior update is delivered and the channel becomes idle. The optimality of zero-wait policy was analyzed in \cite{sun2017update}. Then, the age-optimal policies were extensively studied by assuming infinite battery \cite{bac2015age}, unit battery \cite{bac2017scheduling} and finite battery~\cite{bac2019optimal, arafa2020age}, respectively. Transmit power control under an energy harvesting constraint was introduced in \cite{arafa2017age}. With random arrival of status updates and energy units, the average AoI performance with finite battery was analyzed in \cite{farazi2018age, farazi2018average}. {Reinforcement learning approach was adopted to minimize the average AoI for a single energy harvesting sensor~\cite{Ceran19rein} and multiple wireless powered sources~\cite{abd20a}, respectively. Joint sampling and updating for wireless powered communication systems considering time and energy costs for sensing was studied in \cite{abd20aoi}. Peak AoI minimization with random energy arrivals was studied in~\cite{ozel20timely} considering both sensing and transmission energy.} The above works usually aim to optimize the AoI under a certain energy constraint. Different from them, in energy-age tradeoff problem, age can be sacrificed to reduce energy and vice versa.

Status update over error-prone channels has been extensively studied in the literature. Timely updates over an erasure channel for infinite incremental redundancy and fixed redundancy were considered in \cite{yates2017timely}. The optimal status update policy without feedback was studied in \cite{arafa2018online}. With ACK/NACK feedback, automatic repeat-request (ARQ) and hybrid ARQ (HARQ) protocols were adopted to keep the data fresh \cite{chen2016age, najm2017status, parag2017on, sac2018age, arafa2019on, ceran2019average, li2020age, huang2020real}. The most relevant paper to our work is \cite{ceran2019average}, where the age-optimal policy under a resource constraint was obtained for ARQ and HARQ protocols. However, the sensing energy is not considered, which is non-negligible in many applications and is even larger than the transmit energy (see \cite{liu2016energy} and references therein). More importantly, the consideration of sensing energy in policy design will induce significant difference. Specifically, if the sensing energy is not considered, sensing and transmitting a new packet can always reduce AoI without additional cost compared with retransmission. On the contrary, if the sensing energy is non-negligible, there exists a fundamental tradeoff between AoI reduction by sensing a new data and energy saving by retransmission. Thus, different from \cite{ceran2019average}, one of the main challenges to be tackled in this paper is how to balance the energy cost between sensing and transmission.

Recently, the energy-age tradeoff analysis has drawn more and more attention. The energy-age tradeoff in an error-prone channel was revealed in \cite{gong2018energy}. Then, the analysis was extended to the fading channel \cite{huang2020age}. The impact of coding on the tradeoff performance was analyzed in~\cite{Xie2020age}. Without ACK/NACK feedback, the energy-age tradeoff for random update generation was also found in \cite{gu2019timely}. The tradeoff between energy efficiency and AoI in a multicast system was optimized in \cite{nath2018optimum}. However, sleep mode is not considered in these works, although it is promising for energy saving as was shown in the conventional studies on energy-delay tradeoff~\cite{wu2013traffic, niu2015char, wu2016base}. Since age and delay are closely related, it is crucial to consider sleep mode, which however, makes the study on the energy-age tradeoff very challenging.

\subsection{Main Results}
Based on Markov decision process (MDP) \cite{bertsekas2005dynamic}, an infinite-horizon average cost problem is formulated with the objective of minimizing a weighted sum of average AoI and average energy consumption. The sensor stores only the latest sensed data packet. It can either sleep, retransmit the stored packet, or sense and transmit a new one depending on the ACK/NACK feedback and the instantaneous AoI. To characterize the system state, the AoI at the transmitter side (AoIT) and the AoI at the receiver side (AoIR) are introduced. {We show that there exists a threshold optimal policy to solve the MDP problem. It is remarkable that threshold optimal structure has been found in many works (e.g., \cite{Ceran19rein, abd20a, abd20aoi, hsu20scheduling}). However, how to find the optimal thresholds is quite challenging as they are usually state-dependent. Different from the existing works, we show that the optimal policy is determined by only two thresholds, and one of them can be expressed by the other in closed-form. Moreover, these two thresholds can be efficiently found by a line search.} In summary, the main contributions are listed as follows.
\begin{itemize}
\item {We prove that there exists a \emph{two-thresholds} optimal stationary policy for the average cost problem. In particular, given the optimal thresholds $\theta_{\mathrm t}$ and $\theta_{\mathrm r}$, the sensor
\begin{itemize}
\item senses and transmits a new packet if AoIT~$\ge \theta_{\mathrm t}$ and AoIR~$\ge \theta_{\mathrm r}$,
\item re-transmits the stored packet if AoIT~$<\theta_{\mathrm t}$ and AoIR~$\ge \theta_{\mathrm r}$,
\item stays in sleep mode if AoIR~$<\theta_{\mathrm r}$.
\end{itemize}}
\item To obtain the above result, we firstly show that the optimal policy for the average cost problem is a limit of the optimal policies for the associated discounted cost problem with discount factors tending to 1. Then, we prove that there exists a threshold optimal stationary policy for the discounted problem. Thus, the optimal policy for the average cost problem is also threshold-based. Finally, we show that the minimum average cost can be expressed in closed-form by only two thresholds, which results in the two-thresholds optimal policy.
\item Furthermore, we show that $\theta_{\mathrm r}$ can be expressed as a closed-form function of $\theta_{\mathrm t}$, and the optimal $\theta_{\mathrm t}$ is upper bounded. Accordingly, an efficient line search algorithm is proposed to find the optimal thresholds.
\item We characterize the energy-age tradeoff under the optimal policy via numerical simulations, and show the impact of system parameters on the tradeoff curves. We also compare with the existing policies to show the performance gain of our policy.
\end{itemize}

The rest of this paper is organized as follows. The system model and problem formulation is given in Sec.~\ref{sec:model}. The associated discounted cost problem is presented in Sec.~\ref{sec:relate}. Sec.~\ref{sec:threshold} proves the optimal threshold policy, and the optimal thresholds are obtained in Sec.~\ref{sec:solve}. Numerical results are shown in Sec.~\ref{sec:simulation}. Finally, Sec.~\ref{sec:conclusion} concludes the paper.

\section{System Model and Problem Formulation} \label{sec:model}
Consider a status update system in Fig.~\ref{fig:system}, where a sensor generates and transmits status packets to a monitor over an error-prone channel. Assume the system is slotted, and without loss of generality, the slot length is set to 1. At the beginning of each slot, the sensor can either sense or not. If it senses, a new packet conveying the latest status is generated. Denote the sensing energy consumption by $E_{\mathrm s}$, and assume the time for sensing is negligible. {The assumption is reasonable in many sensing applications such as taking real-time pictures by a camera with high shutter speed.} The sensor only stores the latest sensed data packet. If the sensor does not sense, the last sensed data packet remains there. Otherwise, it is replaced by the newly sensed packet. During each slot, at most one packet can be transmitted. Denote the energy consumption for transmitting a packet by $E_{\mathrm t}$. The transmitted packet may not be successfully received by the monitor due to channel error. Denote the channel error probability by $p \in (0, 1)$, which is identical and independent among slots. If the packet is successfully received by the monitor, an ACK is sent back to the sensor via an error-free feedback channel. Otherwise, a NACK is sent back. Depending on the feedback, the sensor decides its action in the next slot.

\begin{figure}
\centering
\includegraphics[width=3.2in]{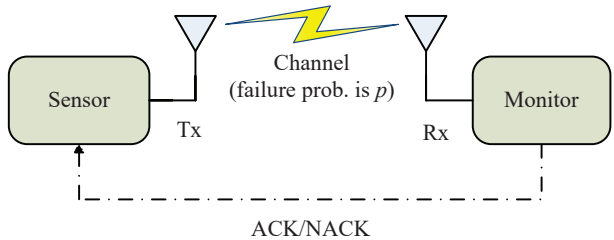}
\caption{Status update system over an error-prone channel with feedback.} \label{fig:system}
\end{figure}

Denote by $s_k = 1$ if a new packet is generated at the beginning of slot $k$, and $s_k = 0$ otherwise. Denote by $t_k = 1$ if the stored packet is transmitted in slot $k$, and $t_k = 0$ otherwise. Thus, the sensor has four possible actions in each slot: (1) sleep (neither sense nor transmit), i.e., $s_k = 0, t_k = 0$; (2) sense the latest status but do not transmit, i.e., $s_k = 1, t_k = 0$; (3) do not sense but retransmit the stored packet, i.e., $s_k = 0, t_k = 1$; (4) sense and transmit a new packet, i.e., $s_k = 1, t_k = 1$. The actions are taken by jointly considering the information freshness and the sensor's energy consumption. {Intuitively, the second action is not a good option as it consumes energy but does not contribute to AoI. In fact, as proved later in Lemma \ref{lemma:action}, the optimal policy does not contain this action.} Next, we give some definitions before the problem formulation.

\subsection{Definitions of AoI}
The definitions of AoI, AoIT, and AoIR are given as follows.
\begin{definition}
At time $t$, if the observed latest packet is generated at time $U(t)$, the AoI is
\begin{align}
\Delta(t) = t - U(t).
\end{align}
\end{definition}
\begin{definition}
The AoIT, denoted by $\Delta_T(t)$, is the AoI observed at the transmitter side.
\end{definition}
\begin{definition}
The AoIR, denoted by $\Delta_R(t)$, is the AoI observed at the receiver side.
\end{definition}

An example of AoIT and AoIR curves is depicted in Fig.~\ref{fig:aoi}. As both AoIT and AoIR grow linearly during each slot, it is sufficient to characterize the AoIT and AoIR curves by the instantaneous AoIs at the beginning of each slot. In particular, denote $a_{\mathrm t, k}$ and $a_{\mathrm r, k}$ as the instantaneous AoIT and AoIR at the beginning of slot $k$ (before taking any action), respectively. {The evolution from $(a_{\mathrm t, k},a_{\mathrm r, k})$ to $(a_{\mathrm t, k+1}, a_{\mathrm r, k+1})$ depends on the action. If the sensor does not generate a new packet, i.e., $s_k = 0$, the packet at the transmitter side is not updated. Hence, the AoIT grows linearly. If a new packet is generated at the beginning of slot $k$, i.e., $s_k = 1$, the AoIT at the beginning of slot $k+1$ becomes 1. Similarly, the packet at the receiver side is not updated if the sensor does not transmit ($t_k = 0$) or transmits ($t_k = 1$) but fails. In this case, the AoIR grows linearly. If the transmission succeeds, the AoIR becomes the same as the AoIT because the data at both sides are the same. In summary, we have}
\begin{align}
(a_{\mathrm t, k+1}, a_{\mathrm r, k+1}) = \left\{\begin{array}{ll}
(a_{\mathrm t, k}\!+\!1, a_{\mathrm r, k}\!+\!1), &\textrm{if~} s_k\! =\! t_k \!=\! 0, \textrm{~or~} s_k \!= \!0, t_k \!=\! 1 \textrm{~and~fails},\\
(a_{\mathrm t, k}\!+\!1, a_{\mathrm t, k}\!+\!1), &\textrm{if~} s_k \!=\! 0, t_k \!=\! 1 \textrm{~and~succeeds},\\
(1, a_{\mathrm r, k}\!+\!1), &\textrm{if~} s_k \!=\! 1, t_k \!=\! 0, \textrm{~or~} s_k \!=\! t_k \!=\! 1 \textrm{~and~fails},\\
(1, 1), &\textrm{if~} s_k \!=\! 1, t_k \!=\! 1 \textrm{~and~succeeds}.
 \end{array} \right.
\end{align}

%Firstly, if the sensor sleeps ($s_k = t_k = 0$) or retransmits ($s_k = 0, t_k = 1$) but the retransmission fails, both AoIT and AoIR grows linearly as the data at both sides are not updated. Secondly, when the retransmission succeeds, the data at both sides become the same, and AoIR equals to AoIT which increases by 1. Thirdly, when a new packet is generated ($s_k = 1$) at the beginning of slot $k$ but the data at the receiver side is not updated ($t_k = 0$ or $t_k = 1$ but fails), AoIT

\begin{figure}
\centering
\includegraphics[width=4in]{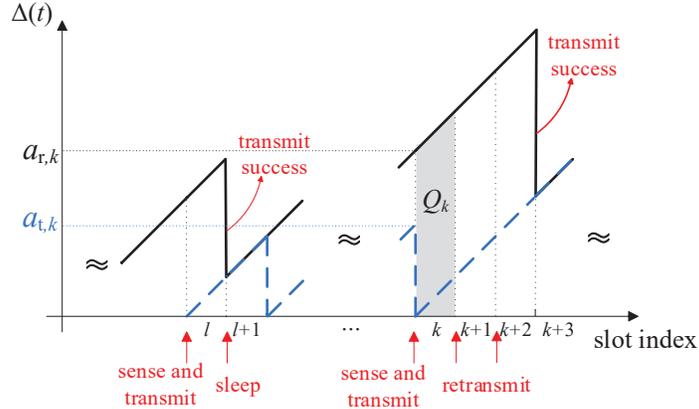}
\caption{Evolution of AoIT $\Delta_T(t)$ (blue dashed curve) and AoIR $\Delta_R(t)$ (black bold curve).} \label{fig:aoi}
\end{figure}

\subsection{Energy-Age Tradeoff Problem}
To maintain a low AoIR, a new packet should be generated and transmitted in every slot. To reduce the energy consumption, on the other hand, the sensor should reduce the number of sensing and transmission actions. Therefore, there is a tradeoff between energy and age.

From the age perspective, the average AoIR is defined as \cite{kaul2012real}
\begin{align}
\bar{\Delta} = \lim_{\tau \rightarrow \infty} \frac{1}{\tau} \int_{0}^{\tau} \Delta_R(t) \mathrm{d} t.
\end{align}
It can be geometrically computed as the sum of the area of the trapezoid $Q_k$ as in Fig.~\ref{fig:aoi} divided by the total time. As the slot length equals to one, we have
\begin{align}
\bar{\Delta} = \lim_{K\rightarrow\infty} \frac{1}{K} \mathbb{E}\left[\sum_{k=1}^{K} Q_k\right] = \lim_{K\rightarrow\infty} \frac{1}{K} \mathbb{E}\left[\sum_{k=1}^{K} \frac{a_{\mathrm r, k} + (a_{\mathrm r, k} \!+\! 1)}{2}\right] = \lim_{K\rightarrow\infty} \frac{1}{K} \mathbb{E}\left[\sum_{k=1}^{K} a_{\mathrm r, k}\right] + \frac{1}{2},
\end{align}
where the expectation $\mathbb{E}$ is taken over all possible channel realizations over all the slots.

The average energy consumption of the sensor can be calculated as
\begin{align}
\bar{E} = \lim_{K\rightarrow\infty} \frac{1}{K} \mathbb{E}\left[\sum_{k=1}^{K} (s_k E_{\mathrm s} + t_k E_{\mathrm t})\right].
\end{align}

Our objective is to exploit the tradeoff between the average AoIR and the average energy consumption. {Similar to \cite{berry02comm}}, we consider minimizing the weighted sum of the average AoIR and the average energy consumption by deciding when to sleep, sense and transmit, i.e.,
\begin{align}
\min_{\{s_k, t_k\}_{k=1}^{\infty}} \lim_{K\rightarrow\infty} \frac{1}{K} \mathbb{E}\left[\sum_{k=1}^{K} \left(a_{\mathrm r, k} + \omega (s_k E_{\mathrm s} + t_k E_{\mathrm t})\right)\right] + \frac{1}{2}, \label{prob:weightMin}
\end{align}
where $\omega > 0$ is a pre-defined weighting factor indicating the importance of energy over age. By solving \eqref{prob:weightMin} for a set of $\omega$'s, the tradeoff pairs of age and energy can be obtained. {In practice, the weighting factor can be dynamically tuning depending on the available energy.} In the rest of this paper, the constant factor $\frac{1}{2}$ in \eqref{prob:weightMin} is ignored as it does not influence the policy design.

{
\textbf{Remark 1.}
We should emphasize here that the proposed solution for the above optimization problem \eqref{prob:weightMin} can be directly applied to the multi-sensor IoT systems when the sensors are allocated with orthogonal frequency bands. It can also be extended to the case that the sensors are scheduled by the round-robin algorithm. In particular, our result can be applied to each sensor by redefining a slot as a scheduling round. As the actual AoI is a linear function of the redefined slots since generation, the structure of the proposed solution does not change.

\textbf{Remark 2.}
The energy-age tradeoff can also be characterized by considering the constrained problem $\min\limits_{\bar{E} \le E_{\max}} \bar{\Delta}$. In fact, it can be solved based our solution. In particular, if there exists an $\omega^*$ such that the corresponding average energy consumption in the minimum weighted sum is $E_{\max}$, the optimal policy to minimize the weighted sum also minimizes $\bar{\Delta}$ under constraint $\bar{E} \le E_{\max}$. Otherwise, there exists a randomized policy to solve the constrained problem that is a combination of the optimal policies for two weighted sum problems, one with the corresponding energy consumption smaller than $E_{\max}$, and the other with the energy larger than $E_{\max}$ \cite{sennott93constr}.
}

\subsection{MDP Problem Formulation}
The problem \eqref{prob:weightMin} can be reformulated as an MDP. A general MDP is characterized by four key components: state, action, state transition and per-stage cost. Firstly, the stage in our problem refers to the time slot. For ease of description, slot and stage can be used interchangeably in the rest of this paper. The state in stage $k$ includes both AoIT and AoIR at the beginning of slot $k$, which can be denoted by $x_k = (a_{\mathrm t, k}, a_{\mathrm r, k}) \in \mathcal{A}$, where the state space can be denoted by $\mathcal{A} = \{(i,j)|i \le j, i,j \in \{1,2,\cdots\}\}$. Notice that we have $a_{\mathrm t, k} \le a_{\mathrm r, k}$ as the packet at the receiver is always ``older" than the one at the transmitter.

The action to be taken includes sensing decision and transmission decision, i.e., $u_k = (s_k, t_k) \in \mathcal{U}$, where the action space can be denoted by $\mathcal{U} = \{(0,0), (0,1), (1,0), (1,1)\}$. %We will show later that the action $(1,0)$ will never appear in the optimal policy.

The state transition probability $p_{(i,j)\rightarrow (i',j')}(u) = \mathrm{Pr}(x_{k+1} = (i',j')|x_k = (i,j), u_k = u)$ can be calculated as
%\begin{align}
%p_{(i,j)\rightarrow (i+1,j+1)}(0,0) &= 1, \quad \forall i \le j \\
%p_{(i,j)\rightarrow (i+1,j+1)}(0,1) &= p, \quad \forall i < j \\
%p_{(i,j)\rightarrow (i+1,i+1)}(0,1) &= 1-p, \quad \forall i < j \\
%p_{(i,i)\rightarrow (i+1,i+1)}(0,1) &= 1, \quad \forall i \\
%p_{(i,j)\rightarrow (1,j+1)}(1,0) &= 1, \quad \forall i \le j \\
%p_{(i,j)\rightarrow (1,j+1)}(1,1) &= p, \quad \forall i \le j \\
%p_{(i,j)\rightarrow (1,1)}(1,1) &= 1-p, \quad \forall i \le j
%\end{align}
%\begin{align}
%p_{(i,j)\rightarrow (i',j')}(u) =\left\{ \begin{array}{ll} 1 & \textrm{if~} i \le j, (i',j') = (i+1, j+1), u = (0,0), \\
%{}&\quad\textrm{or~} i = j, (i',j') = (i+1, i+1), u = (0,1), \\
%{}&\quad\textrm{or~} i \le j, (i',j') = (1, j+1), u = (1,0)\\
%p &\textrm{if~} i < j, (i',j') = (i+1, j+1), u = (0,1), \\
%{}&\quad\textrm{or~} i \le j, (i',j') = (1, j+1), u = (1,1)\\
%1-p & \textrm{if~} i < j, (i',j') = (i+1, i+1), u = (0,1), \\
%{}&\quad\textrm{or~} i \le j, (i',j') = (1, 1), u = (1,1)\end{array} \right.
%\end{align}
\begin{align}
p_{(i,j)\rightarrow (i',j')}(u) =\left\{ \begin{array}{ll} 1, & \textrm{if~} i \le j, i'=i+1, j'=j+1, u = (0,0), \\
{}&\quad\textrm{or~} i = j, i'=j'=i+1, u = (0,1), \\
{}&\quad\textrm{or~} i \le j, i'=1, j'=j+1, u = (1,0),\\
p, &\textrm{if~} i < j, i'=i+1, j'=j+1, u = (0,1), \\
{}&\quad\textrm{or~} i \le j, i'=1, j'=j+1, u = (1,1),\\
1-p, & \textrm{if~} i < j, i'=j'=i+1, u = (0,1), \\
{}&\quad\textrm{or~} i \le j, i'=j'=1, u = (1,1),\\
0, &\textrm{else.}\end{array} \right. \label{eq:statetrans}
\end{align}

The cost per stage can be expressed as
\begin{align}
g(x_k, u_k) = a_{\mathrm r, k} + \omega(s_k E_{\mathrm s} + t_k E_{\mathrm t}). \label{eq:stagecost}
\end{align}

Thus, the problem \eqref{prob:weightMin} can be reformulated as
\begin{align}
\min_{\pi} \; J_{\pi}(x_0) \overset{\textrm{def}}{=} \lim_{K\rightarrow \infty} \frac{1}{K} \mathbb{E}\left( \sum_{k=0}^{K-1} g(x_k, \mu_k(x_k)) \Bigg| x_0\right), \label{prob:avgmin}
\end{align}
where $x_0$ is the initial state, and $\pi = \{\mu_0, \mu_1, \cdots\}$ is the policy. In general, $\mu_k$ is possibly randomized and follows a conditional probability distribution of the action $u_k$ under the condition of a given state $x_k$, i.e. $\mu_k \sim p_k(u_k|x_k)$, where $0\le p_k(u_k|x_k) \le 1$ and $\sum_{u_k\in\mathcal{U}} p_k(u_k|x_k) = 1, \forall x_k$. A policy is called \emph{stationary} if the distribution $p_k(u_k|x_k)$ does not change over stage $k$, which can be denoted as $\mu_k =\mu \sim p(u_k|x_k), \forall k$. Furthermore, a policy is \emph{deterministic} if $p_k(u_k|x_k) \in \{0, 1\}$. In this case, $\mu_k:\mathcal{A} \rightarrow \mathcal{U}$ is a mapping from the state space to the action space with $u_k = \mu_k(x_k)$ if $p_k(u_k|x_k) = 1$.

It is remarkable that the state space $\mathcal{A}$ is countably infinite and the per-stage cost is unbounded. Hence, the conventional iterative algorithms such as policy iteration or value iteration are difficult to be applied directly in practice. To tackle this difficulty, we next exploit the structural properties of our problem to find the optimal policy. To this end, we firstly associate the average cost minimization problem with its discounted version. Then, we prove the structural properties for the discounted problem, which turn out to hold for the average cost problem as well.

\section{Associated Discounted Cost Problem} \label{sec:relate}
It is well-known that the average cost MDP problem is closely related to its discounted version~\cite{sennott1989average}. Thus, we start by considering an infinite horizon discounted cost MDP problem as
\begin{align}
\min_{\pi} \; J_{\alpha, \pi}(x_0) \overset{\textrm{def}}{=} \lim_{K\rightarrow \infty} \mathbb{E}\left( \sum_{k=0}^{K-1} \alpha^kg(x_k, \mu_k(x_k)) \Bigg| x_0\right), \label{prob:discount}
\end{align}
where $\alpha \in (0,1)$. Since the AoIR grows at most linearly versus the stage $k$ and $g(x_k, \mu_k) \le a_{\mathrm r, k} + \omega(E_{\mathrm s} + E_{\mathrm t})$, we have
\begin{align}
J_{\alpha, \pi}(x_0) \le \sum_{k=0}^{\infty} \alpha^k (a_{\mathrm r, 0} + k + \omega(E_{\mathrm s} + E_{\mathrm t})) = \frac{1}{1-\alpha}\left( a_{\mathrm r, 0} + \omega(E_{\mathrm s} + E_{\mathrm t}) + \frac{\alpha}{1-\alpha}\right) < \infty. \label{eq:finite}
\end{align}
Hence, for any given initial state $x_0$ and discount factor $\alpha$, $J_{\alpha, \pi}(x_0)$ is well-defined for all policies. Given the initial state $x_0 = (i,j)$, denote the minimum expected discounted cost as
\begin{align}
J_{\alpha}(i,j) = J_{\alpha}(x_0) = \min_{\pi} J_{\alpha, \pi}(x_0). \label{eq:discount}
\end{align}

%\begin{lemma} \label{lamma:action2}
%The optimal policy $\pi^*$ for the problem \eqref{prob:discount} must contain the action $2$.
%\end{lemma}
%\begin{proof}
%TBD.
%\end{proof}

In the following, we will show the existence of an optimal stationary deterministic policy for the discounted cost problem \eqref{prob:discount} and its convergence to the average cost problem \eqref{prob:avgmin}. Thus, if there is a threshold optimal policy for the discounted version, the optimal policy for the average cost problem is also threshold-based.

%\subsection{Optimality of Discounted Cost Problem}
For an infinite horizon discounted cost problem with finite state space and bounded per-stage cost, it is guaranteed that the optimal discounted cost can be achieved by dynamic programming (DP) algorithm \cite[Prop.~1.2.1]{bertsekas2005dynamic}, and there exists a stationary deterministic policy to attain the minimum \cite[Prop.~1.2.3]{bertsekas2005dynamic}. However, as the problem \eqref{prob:discount} has a countably infinite state space and unbounded per-stage cost, the convergence of the DP algorithm and the existence of a stationary policy need to be reconsidered. In fact, observing that the AoIR increases at most linearly with stage $k$, the convergence of the DP algorithm can still be proved by modifying some lines in the proof of \cite[Prop.~1.2.1]{bertsekas2005dynamic}, and there is also a stationary deterministic policy. The results are summarized as follows.

\begin{proposition} \label{prop:discBellman}
Define a sequence of functions for all $x \in \mathcal{A}$ as
\begin{align}
J_{n}(x) &= \min_{u \in \mathcal{U}}\left( g(x, u) + \alpha \sum_{x'\in\mathcal{A}}p_{x\rightarrow x'}(u)J_{n-1}(x') \right), \quad n \ge 1 \label{eq:Jalphan}
\end{align}
where $J_0(x) = 0$, $g(x,u)$ is given as \eqref{eq:stagecost} and $p_{x\rightarrow x'}(u)$ is given as \eqref{eq:statetrans}. We have
\begin{align}
\lim_{n\rightarrow\infty} J_{n}(x) = J_{\alpha}(x), \label{eq:converge}
\end{align}
where $J_{\alpha}(x)$ is given as \eqref{eq:discount}. In addition, the Bellman's Equation holds as
\begin{align}
J_{\alpha}(x) = \min_{u \in \mathcal{U}} \left( g(x, u) + \alpha \sum_{x'\in\mathcal{A}}p_{x\rightarrow x'}(u)J_{\alpha}(x') \right), \label{eq:bellman}
\end{align}
and the stationary policy $\mu_{\alpha}$, where $u = \mu_{\alpha}(x)$ attaining the minimum of \eqref{eq:bellman}, is optimal.
\end{proposition}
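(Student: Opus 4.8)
The plan is to adapt the standard contraction argument for finite-state discounted MDPs to our countably-infinite, unbounded-cost setting, leaning on two features of the problem: the per-stage cost $g$ in \eqref{eq:stagecost} is nonnegative, and the AoIR grows at most linearly in the stage index, the same fact that already produced the bound \eqref{eq:finite}. First I would establish monotone convergence of the value-iteration sequence $J_n$. Since $g \ge 0$ and $J_0 \equiv 0$, we have $J_1(x) = \min_u g(x,u) \ge J_0(x)$, and because the operator in \eqref{eq:Jalphan} preserves the pointwise order, induction shows $J_n(x)$ is nondecreasing in $n$. For an upper bound, observe that $J_n(x)$ is the optimal cost of the $n$-stage truncated problem with zero terminal cost; as all omitted future costs are nonnegative, $J_n(x) \le J_\alpha(x) < \infty$ for every $n$. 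Hence $J_n(x)$ is monotone and bounded, converging pointwise to some limit $J_\infty(x) \le J_\alpha(x)$.

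Next I would pass to the limit inside the recursion \eqref{eq:Jalphan}. The crucial simplification here is structural: the action set $\mathcal{U}$ has only four elements, and by the transition law \eqref{eq:statetrans} every state-action pair reaches at most two successor states, so both the minimization over $u$ and the expectation $\sum_{x'} p_{x\rightarrow x'}(u)(\cdot)$ are finite operations. The limit therefore exchanges with them with no need for any dominated-convergence machinery, and $J_\infty$ is seen to satisfy Bellman's equation \eqref{eq:bellman}. Let $\mu$ denote the stationary policy that selects, in each state, an action attaining this minimum.

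The main obstacle, and the sole place where boundedness would ordinarily be invoked, is to verify that $\mu$ actually attains $J_\infty$, i.e. that $J_{\alpha,\mu} = J_\infty$. Iterating the fixed-point identity $N$ times along the trajectory generated by $\mu$ yields $J_\infty(x_0) = \mathbb{E}[\sum_{k=0}^{N-1}\alpha^k g(x_k,\mu(x_k)) \mid x_0] + \alpha^N \mathbb{E}[J_\infty(x_N) \mid x_0]$. I would bound the residual term using exactly the linear-growth estimate behind \eqref{eq:finite}: since $J_\infty(x_N) \le J_\alpha(x_N)$ and $a_{\mathrm r, N} \le a_{\mathrm r, 0} + N$, the quantity $\mathbb{E}[J_\infty(x_N) \mid x_0]$ grows at most linearly in $N$, whence $\alpha^N \mathbb{E}[J_\infty(x_N) \mid x_0] \to 0$ as $N \to \infty$. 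Letting $N \to \infty$ then gives $J_{\alpha,\mu}(x_0) = J_\infty(x_0)$.

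Finally, the two inequalities close the loop. By the definition \eqref{eq:discount}, $J_\alpha \le J_{\alpha,\mu} = J_\infty$, while the monotone convergence step already gave $J_\infty \le J_\alpha$; hence $J_\infty = J_\alpha$. This simultaneously establishes the convergence \eqref{eq:converge}, the Bellman equation \eqref{eq:bellman} for $J_\alpha$, and the optimality of the stationary deterministic policy $\mu = \mu_\alpha$.
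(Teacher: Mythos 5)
Your proposal is correct, but it takes a genuinely different route from the paper. The paper adapts the textbook bounded-cost argument of \cite[Prop.~1.2.1]{bertsekas2005dynamic}: for an arbitrary policy $\pi$ it bounds the tail cost after stage $n$ by $\left(M(x_0)+n\right)\frac{\alpha^n}{1-\alpha}$ using the linear growth of the AoIR, sandwiches $J_n(x_0)$ between $J_{\alpha}(x_0)\pm\left(M(x_0)+n\right)\frac{\alpha^n}{1-\alpha}$ by minimizing over policies, and then delegates the Bellman equation and the optimality of the stationary policy to the proofs of \cite[Props.~1.2.2--1.2.3]{bertsekas2005dynamic}. You instead run the nonnegative-cost (monotone value iteration) argument: $J_n\uparrow J_\infty\le J_\alpha$ because $g\ge 0$ and $J_0\equiv 0$, the limit passes through the finite minimization and the finite (at most two-term) expectation to give the fixed-point equation, and the greedy policy is shown optimal by an explicit verification step in which the residual $\alpha^N\mathbb{E}[J_\infty(x_N)\mid x_0]$ is killed by the same linear-growth estimate that underlies \eqref{eq:finite}. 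The trade-off: the paper's sandwich yields a quantitative convergence rate $O(n\alpha^n)$ and identifies the limit as $J_\alpha$ immediately, but leaves the Bellman equation and stationary-policy optimality to cited proofs that themselves require modification for unbounded costs; your route gives only pointwise convergence and identifies the limit only at the final step, but it is self-contained on exactly the points the paper outsources, and it isolates the one place where unboundedness genuinely matters (the vanishing of the residual term). Both are complete proofs of the proposition.
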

\begin{proof}
See Appendix \ref{proof:discBellman}.
\end{proof}

%\subsection{Relation between Average Cost and Discounted Cost Problems}

The following lemma shows that $J_{n}(x)$ is monotonic versus each component of $x$.
\begin{lemma} \label{lemma:monoton}
For any $n \ge 1$ and $x = (i,j)\in \mathcal{A}$, the function $J_{n}(x)$ defined in \eqref{eq:Jalphan} satisfies
\begin{align}
J_{n}(i, j+1) - J_{n}(i, j) &\ge 1 , \label{eq:nmonoj}\\
J_{n}(i+1, j) - J_{n}(i, j) &\ge 0. \label{eq:nmonoi}
\end{align}
%for all $n \ge 1$. Hence, the minimum expected discounted cost $J_{\alpha}(i, j)$ defined in \eqref{eq:discount} satisfies
%\begin{align}
%J_{\alpha}(i, j+1) - J_{\alpha}(i, j) &\ge 1, \label{eq:amono2}\\
%J_{\alpha}(i+1, j) - J_{\alpha}(i, j) &\ge 0, \label{eq:amono1}
%\end{align}
\end{lemma}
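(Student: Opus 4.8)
The plan is to prove the two inequalities \eqref{eq:nmonoj} and \eqref{eq:nmonoi} \emph{simultaneously} by induction on $n$, because the inductive step for the AoIT-monotonicity \eqref{eq:nmonoi} will turn out to also consume the AoIR-monotonicity \eqref{eq:nmonoj} at stage $n-1$. For the base case $n=1$, since $J_0\equiv 0$ the recursion \eqref{eq:Jalphan} reduces to $J_1(i,j)=\min_{u\in\mathcal U}g((i,j),u)$; because $g((i,j),(s,t))=j+\omega(sE_{\mathrm s}+tE_{\mathrm t})$ by \eqref{eq:stagecost}, the minimizer is the sleep action $(0,0)$ and $J_1(i,j)=j$. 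Hence $J_1(i,j+1)-J_1(i,j)=1$ and $J_1(i+1,j)-J_1(i,j)=0$, so both inequalities hold (with equality) at $n=1$.

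For the inductive step I would use the standard coupling argument: bound the value of the ``smaller'' state by evaluating the Bellman operator at the action that is optimal for the ``larger'' state. For \eqref{eq:nmonoj}, let $u'$ attain the minimum in $J_n(i,j+1)$; applying the (generally suboptimal) action $u'$ at $(i,j)$ gives $J_n(i,j)\le g((i,j),u')+\alpha\sum_{x'}p_{(i,j)\rightarrow x'}(u')J_{n-1}(x')$. Subtracting, the immediate-cost difference is exactly $g((i,j+1),u')-g((i,j),u')=1$, since the stage cost \eqref{eq:stagecost} is affine in the AoIR with unit slope and is independent of the AoIT; it then suffices to show the discounted continuation term is nonnegative. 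Inspecting \eqref{eq:statetrans} action by action, this continuation difference is in every case a nonnegative combination of terms of the form $J_{n-1}(i',m+1)-J_{n-1}(i',m)$, each of which is at least $1$ by the induction hypothesis \eqref{eq:nmonoj}; on the diagonal boundary ($i=j$), the retransmission action $(0,1)$ merely turns a deterministic transition into a convex combination, which only helps. The argument for \eqref{eq:nmonoi} is analogous, except the immediate-cost difference is now $0$ (the stage cost does not depend on the AoIT), so I only need the continuation term nonnegative; for $(0,0)$ this is a single horizontal application of \eqref{eq:nmonoi}, and for $(1,0)$ and $(1,1)$ the two next-state distributions coincide, giving a zero difference.

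The hard part will be the retransmission action $u'=(0,1)$ in the proof of \eqref{eq:nmonoi}, because its transition law in \eqref{eq:statetrans} changes form on the diagonal $i=j$ and at the boundary $i+1=j$. Coupling $(i,j)$ with $(i+1,j)$ produces, on the transmission-success branch, a comparison of the diagonal states $J_{n-1}(i+2,i+2)$ and $J_{n-1}(i+1,i+1)$ (and, when $i+1=j$, of $J_{n-1}(j+1,j+1)$ with $J_{n-1}(j,j)$), which is not one of the induction hypotheses directly. I would resolve this by chaining one horizontal and one vertical step,
\begin{align}
J_{n-1}(i+2,i+2) - J_{n-1}(i+1,i+1) = \big[J_{n-1}(i+2,i+2) - J_{n-1}(i+1,i+2)\big] + \big[J_{n-1}(i+1,i+2) - J_{n-1}(i+1,i+1)\big],
\end{align}
where the first bracket is nonnegative by \eqref{eq:nmonoi} and the second is at least $1$ by \eqref{eq:nmonoj}, so their sum is nonnegative; the boundary case $i+1=j$ is handled by the same decomposition applied to $J_{n-1}(j+1,j+1)-J_{n-1}(j,j)$. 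Combining the success and failure branches with weights $1-p$ and $p$ then yields a nonnegative continuation difference, completing the induction.
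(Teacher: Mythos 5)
Your proof is correct and follows essentially the same route as the paper's: induction on $n$, coupling the two states by applying the action that is optimal for one of them suboptimally at the other, and using the monotonicity of $J_{n-1}$ on the continuation term. The only difference is presentational—you run the two inequalities as a simultaneous induction and spell out the diagonal/retransmission case for \eqref{eq:nmonoi} via the decomposition $J_{n-1}(i+2,i+2)-J_{n-1}(i+1,i+1)=\bigl[J_{n-1}(i+2,i+2)-J_{n-1}(i+1,i+2)\bigr]+\bigl[J_{n-1}(i+1,i+2)-J_{n-1}(i+1,i+1)\bigr]$, a detail the paper compresses into ``can be proved in the same way.''
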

\begin{proof}
See Appendix \ref{proof:monoton}.
\end{proof}

Now, we relate the average cost minimization problem to its discounted version. It is shown in \cite{sennott1989average} that a stationary policy for an average cost problem exists under some conditions, and can be obtained by letting $\alpha$ tend to 1 in the associated discounted cost problem. According to the main results and conditions given in \cite{sennott1989average}, we can have the following proposition.

\begin{proposition} \label{prop:converge}
Let $\alpha_n$ be any sequence of discount factors converging to 1 with the associated optimal stationary policy $\mu_{\alpha_n}$ for the discounted cost problem \eqref{prob:discount}. There exists a subsequence of $\alpha_n$ denoted by $\beta_n$ and a stationary policy $\mu$ that is a limit of $\mu_{\beta_n}$. That is, for every state $x$, there exists an integer $N(x)$ such that $\mu_{\beta_n}(x) = \mu(x)$ for all $n \ge N(x)$. In addition, the stationary policy $\mu$ is optimal for the average cost minimization problem \eqref{prob:avgmin}.
\end{proposition}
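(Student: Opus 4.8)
The plan is to verify that our problem satisfies the three sufficient conditions of Sennott~\cite{sennott1989average} (which we label (SEN1)--(SEN3)) under which the average-cost problem admits a stationary optimal policy obtainable as a subsequential limit of the discount-optimal policies; the proposition then follows directly from that theorem, so the entire task reduces to checking the conditions. Throughout, I would fix the reference state $z=(1,1)$, the coordinatewise-minimal state of $\mathcal{A}$, and work with the relative value function $h_\alpha(x)=J_\alpha(x)-J_\alpha(1,1)$. Condition (SEN1), namely $J_\alpha(x)<\infty$ for every $x$ and every $\alpha\in(0,1)$, has already been established in~\eqref{eq:finite}, so nothing further is required there.

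For the uniform lower bound (SEN2), I would show $h_\alpha(x)\ge 0$, which supplies the required constant $0$. By Proposition~\ref{prop:discBellman} we have $J_n(x)\to J_\alpha(x)$ pointwise, so the coordinatewise monotonicity of $J_n$ proved in Lemma~\ref{lemma:monoton} is inherited by $J_\alpha$; in particular $J_\alpha$ is nondecreasing in both the AoIT and AoIR coordinates. Since $(1,1)$ is minimal in $\mathcal{A}$, this yields $J_\alpha(x)\ge J_\alpha(1,1)$, hence $h_\alpha(x)\ge 0$ uniformly in $\alpha$.

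The substantive step is the upper bound (SEN3): exhibiting a finite $M_x$ with $h_\alpha(x)\le M_x$ for all $\alpha$ and $\sum_{x'}p_{x\to x'}(u)M_{x'}<\infty$ for some action $u$. My approach is a first-passage hitting-cost argument. Consider the fixed stationary policy $\phi$ that always plays the ``sense and transmit'' action $(1,1)$, and let $\tau$ be the first time this policy drives the chain to $(1,1)$. Under $\phi$, each slot resets AoIT to $1$ and, with probability $1-p$, resets AoIR to $1$, so $\tau$ is geometric and, starting from $x=(i,j)$, the AoIR simply increases by one on each failure. Comparing the optimal cost against the policy that follows $\phi$ until $\tau$ and acts optimally thereafter gives $J_\alpha(x)\le \mathbb{E}_x^{\phi}\!\left[\sum_{k=0}^{\tau-1}\alpha^k g(x_k,\phi(x_k))\right]+\mathbb{E}_x^{\phi}[\alpha^\tau]\,J_\alpha(1,1)$. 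Because $J_\alpha(1,1)\ge 0$ and $\mathbb{E}_x^{\phi}[\alpha^\tau]\le 1$, the last term is dominated by $J_\alpha(1,1)$, so after rearranging and bounding $\alpha^k\le 1$ I obtain the $\alpha$-free bound $h_\alpha(x)\le M_x:=\mathbb{E}_x^{\phi}\!\left[\sum_{k=0}^{\tau-1}g(x_k,\phi(x_k))\right]$. A direct geometric-series computation yields $M_{(i,j)}=\frac{j+\omega(E_{\mathrm s}+E_{\mathrm t})}{1-p}+\frac{p}{(1-p)^2}<\infty$, finite for every state and growing only linearly in the AoIR coordinate $j$. Finally, playing $(1,1)$ from $(i,j)$ reaches only the two states $(1,1)$ and $(1,j+1)$, so the one-step condition $\sum_{x'}p_{x\to x'}(u)M_{x'}=(1-p)M_{(1,1)}+p\,M_{(1,j+1)}<\infty$ holds.

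With (SEN1)--(SEN3) verified, Sennott's theorem immediately produces the subsequence $\beta_n$, the limiting stationary policy $\mu$, and its optimality for the average-cost problem~\eqref{prob:avgmin}. I expect (SEN3) to be the crux: finiteness is inherited from~\eqref{eq:finite} and the lower bound is almost immediate from monotonicity, whereas the uniform-in-$\alpha$ upper bound on the relative value hinges on choosing the right reaching policy and performing the hitting-cost comparison carefully, so that the discount factors can be discarded ($\alpha^k\le 1$) without sacrificing finiteness of $M_x$.
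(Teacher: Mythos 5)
Your proposal is correct and follows essentially the same route as the paper: both verify Sennott's three conditions with reference state $(1,1)$, obtain the lower bound from the monotonicity of Lemma~\ref{lemma:monoton}, and establish the upper bound $h_\alpha(x)\le M_x$ via the expected undiscounted first-passage cost of the always-sense-and-transmit policy, with your closed form for $M_{(i,j)}$ agreeing with the paper's $G(x,\hat x)$ after simplification. The only cosmetic difference is the action used for the summability check ($u=(1,1)$ versus the paper's $u=(0,0)$), and either choice works.
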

\begin{proof}
See Appendix \ref{proof:converge}.
\end{proof}
Based on Proposition \ref{prop:converge}, if some structural property of the discounted cost problem can be found, it must hold for the corresponding average cost problem. Furthermore, a quick observation in the following lemma helps to reduce the size of the action space.

\begin{lemma} \label{lemma:action}
The optimal stationary policy $\mu_{\alpha}$ for the problem \eqref{prob:discount} satisfies
\begin{align}
\mu_{\alpha}(x) \neq (1,0), \quad \forall x \in \mathcal{A}
\end{align}
\end{lemma}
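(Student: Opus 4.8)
The plan is to prove that the action $(1,0)$ is strictly dominated by the sleep action $(0,0)$ at every state, so that the minimizer in Bellman's equation \eqref{eq:bellman} can never be $(1,0)$. For $x=(i,j)$, write the state--action value $Q_\alpha(x,u)=g(x,u)+\alpha\sum_{x'\in\mathcal A}p_{x\rightarrow x'}(u)J_\alpha(x')$. From the transitions \eqref{eq:statetrans} and the cost \eqref{eq:stagecost},
\begin{align}
Q_\alpha(x,(0,0)) &= j + \alpha J_\alpha(i+1,j+1), \\
Q_\alpha(x,(1,0)) &= j + \omega E_{\mathrm s} + \alpha J_\alpha(1,j+1),
\end{align}
so the comparison hinges on how much a stale stored packet (AoIT $=i+1$) can cost relative to a fresh one (AoIT $=1$) at the same AoIR $=j+1$. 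The crux is therefore the auxiliary bound
\begin{align}
J_\alpha(i,j)\le J_\alpha(1,j)+\omega E_{\mathrm s},\qquad \forall\,(i,j)\in\mathcal A, \label{eq:auxbound}
\end{align}
which formalizes the intuition that a stale packet can never be worth more than one sensing operation, since a single sense resets the AoIT to $1$.

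To establish \eqref{eq:auxbound}, I would prove the finite-horizon analogue $J_n(i,j)\le J_n(1,j)+\omega E_{\mathrm s}$ by induction on $n$ and pass to the limit via \eqref{eq:converge}. The base case holds since $J_0\equiv 0$. For the step, let $u^\star$ attain the minimum defining $J_n(1,j)$, and at $(i,j)$ pick the action $\tilde u=u^\star$ unless $u^\star=(0,1)$, in which case set $\tilde u=(1,1)$; then bounding $J_n(i,j)\le Q_n((i,j),\tilde u)$ suffices, where $Q_n(x,u):=g(x,u)+\alpha\sum_{x'\in\mathcal A}p_{x\rightarrow x'}(u)J_{n-1}(x')$ is the minimand in \eqref{eq:Jalphan}. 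If $u^\star\in\{(1,0),(1,1)\}$ both states sense and the two values coincide; if $u^\star=(0,0)$ the difference is controlled by the inductive hypothesis together with the monotonicity $J_{n-1}(1,\cdot)\le J_{n-1}(2,\cdot)$ from \eqref{eq:nmonoi}; and if $u^\star=(0,1)$ the substitution pays exactly the extra sensing energy $\omega E_{\mathrm s}$, while \eqref{eq:nmonoi} guarantees that replacing the retransmitted packet (AoIT $=2$ after the slot) by a freshly sensed one (AoIT $=1$) leaves every successor value no larger. In all cases $Q_n((i,j),\tilde u)\le J_n(1,j)+\omega E_{\mathrm s}$, which closes the induction.

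Finally, evaluating \eqref{eq:auxbound} at $(i+1,j+1)$ and subtracting the two values above yields
\begin{align}
Q_\alpha(x,(1,0))-Q_\alpha(x,(0,0)) = \omega E_{\mathrm s}+\alpha\big(J_\alpha(1,j+1)-J_\alpha(i+1,j+1)\big)\ge (1-\alpha)\,\omega E_{\mathrm s}>0,
\end{align}
so $(0,0)$ is strictly cheaper than $(1,0)$ and the optimal action is never $(1,0)$. I expect the auxiliary bound \eqref{eq:auxbound} to be the main obstacle, and within it the retransmission case: the two states compared there share the same AoIR but differ in AoIT, a gap that no single action can close directly (every sense also increments the AoIR), so the argument must defer the sensing to the transmission instant through the $(0,1)\!\to\!(1,1)$ substitution and rely on monotonicity to dominate the successor values. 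The degenerate transitions with $i=j$, where $(0,1)$ acts deterministically in \eqref{eq:statetrans}, should be checked separately but collapse to the same inequalities.
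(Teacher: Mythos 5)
Your proposal is correct, but it takes a genuinely different route from the paper. The paper argues by contradiction along trajectories: assuming $\mu_{\alpha}(x)=(1,0)$ at stage $0$, it lets $N$ be the first subsequent non-idle stage and, for each of the three possible $u_N$, constructs a modified policy that idles at stage $0$ and (when $u_N=(0,1)$) upgrades $u_N$ to $(1,1)$, showing a strict cost reduction via the monotonicity of $J_{\alpha}$ in both coordinates. You instead prove the uniform value bound $J_{\alpha}(i,j)\le J_{\alpha}(1,j)+\omega E_{\mathrm s}$ by induction on the DP iterates $J_n$ (the $(0,1)\!\to\!(1,1)$ substitution in your induction step is exactly the right device, and your case analysis closes: the $(0,0)$ case needs both the inductive hypothesis and \eqref{eq:nmonoi}, the sensing cases give equality, and the retransmission case costs exactly $\omega E_{\mathrm s}$ while monotonicity dominates the successors), and then conclude with a one-step Bellman comparison $Q_{\alpha}(x,(1,0))-Q_{\alpha}(x,(0,0))\ge(1-\alpha)\omega E_{\mathrm s}>0$. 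Your approach buys a purely local, one-step argument plus a reusable quantitative bound on the value of freshness at the transmitter, and it sidesteps the paper's implicit handling of the case where no non-idle action is ever taken after stage $0$; the paper's exchange argument avoids the auxiliary lemma at the cost of tracking the trajectory to the random time $N$. Both correctly identify $(0,0)$ (or the deferred sense) as the dominating alternative, and both rely on $\omega E_{\mathrm s}>0$ for strictness.
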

\begin{proof}
See Appendix \ref{proof:action}.
\end{proof}

Based on Proposition \ref{prop:converge}, we can further prove that $\mu(x) \neq (1,0), \forall x \in \mathcal{A}$. Therefore, only the actions $(0,0), (0,1)$, and $(1,1)$ need to be considered. For simplicity, in the rest of this paper, the action is re-denoted as $u_k = s_k + t_k$ and the action space is $\mathcal{U} = \{0, 1, 2\}$. Then, the DP algorithm for the problem \eqref{prob:discount} can be explicitly described as follows.
\begin{align}
J_{n}(i,j) &= \min_{u\in \{0,1,2\}}\left\{ Q_{n}(i,j,u)\right\}, \quad n \ge 1, \label{eq:Jn}
\end{align}
where $J_0(i,j) = 0$, and
\begin{align}
Q_{n}(i,j,0) &= j + \alpha J_{n-1}(i+1,j+1), \label{eq:Qn0}\\
Q_{n}(i,j,1) &= j + \omega E_{\mathrm t} + \alpha p J_{n-1}(i+1,j+1) + \alpha (1-p) J_{n-1}(i+1,i+1), \label{eq:Qn1}\\
Q_{n}(i,j,2) &= j + \omega E_{\mathrm t} + \omega E_{\mathrm s} + \alpha p J_{n-1}(1,j+1) + \alpha (1-p) J_{n-1}(1,1). \label{eq:Qn2}
\end{align}
Denote by
\begin{align}
\mu_n(i,j) = \arg \min_{u\in \{0,1,2\}}\left\{ Q_{n}(i,j,u)\right\}
\end{align}
According to Proposition \ref{prop:discBellman}, we have $J_{\alpha}(i,j) = \lim_{n \rightarrow \infty} J_n(i,j), \mu_{\alpha}(i,j) = \lim_{n \rightarrow \infty} \mu_n(i,j)$. Denote by $Q_{\alpha}(i,j,u) = \lim_{n \rightarrow \infty} Q_n(i,j,u), u \in \{0, 1, 2\}$. In the next section, we will show the threshold optimal policy by exploiting the properties of the above DP iteration.

\section{Threshold Optimal Policy} \label{sec:threshold}
To show the structural property of the discounted cost problem, we denote $\mathcal{A}_{\mathrm{ext}} = \{(i,j)|i,j \in \{1, 2, \cdots\}\}$. With the extended state space $\mathcal{A}_{\mathrm{ext}}$ without constraint $i \le j$, the monotonicity in Lemma \ref{lemma:monoton} still holds and the DP iteration \eqref{eq:Jn} still converges to the optimal cost of the problem \eqref{prob:discount} for $(i,j) \in \mathcal{A}$ as the cost function for state $(i,j) \in \mathcal{A}$ does not rely on any state in $\mathcal{A}_{\mathrm{ext}} \setminus \mathcal{A}$. Then, we exploit the properties of $J_n(i,j)$ and $J_{\alpha}(i,j)$ for $(i,j) \in \mathcal{A}_{\mathrm{ext}}$.

\begin{lemma} \label{lemma:i0}
	For any $n \ge 1$, if $i\ge i_{0} \overset{\textrm{def}}{=} \left\lceil \dfrac{\omega E_{\mathrm s}}{\alpha \left(1-p \right) }\right\rceil$ or $i \ge j$, we have
	\begin{align}
		\mu_{n}(i,j)&=\mu_{n}(i+1,j)\in\{0,2\},\label{L3a0}\\
		J_{n}(i,j)&=J_{n}(i+1,j).\label{L3b0}
	\end{align}
\end{lemma}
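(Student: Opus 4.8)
The plan is to prove the two claims \eqref{L3a0}--\eqref{L3b0} simultaneously by induction on $n$, using the monotonicity of $J_n$ from Lemma \ref{lemma:monoton} (which is stated to persist on $\mathcal{A}_{\mathrm{ext}}$) together with the explicit forms of $Q_n(i,j,u)$ in \eqref{eq:Qn0}--\eqref{eq:Qn2}. For the base case $n=1$, since $J_0\equiv 0$ we have $Q_1(i,j,0)=j\le Q_1(i,j,1)\le Q_1(i,j,2)$, so $\mu_1(i,j)=0\in\{0,2\}$ and $J_1(i,j)=j$ is independent of $i$; hence \eqref{L3a0}--\eqref{L3b0} hold for every $(i,j)$, in particular under the hypothesis.

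For the inductive step I would first rule out the retransmission action $u=1$, treating the two disjuncts of the hypothesis separately. When $i\ge j$, subtracting \eqref{eq:Qn0} from \eqref{eq:Qn1} gives $Q_n(i,j,1)-Q_n(i,j,0)=\omega E_{\mathrm t}+\alpha(1-p)\bigl[J_{n-1}(i+1,i+1)-J_{n-1}(i+1,j+1)\bigr]\ge \omega E_{\mathrm t}>0$, because $i+1\ge j+1$ makes the bracket nonnegative by the second-argument monotonicity \eqref{eq:nmonoj}. When $i\ge i_0$, subtracting \eqref{eq:Qn2} from \eqref{eq:Qn1} and using the first-argument monotonicity \eqref{eq:nmonoi} to drop the $\alpha p$ term, together with the telescoped estimate $J_{n-1}(i+1,i+1)-J_{n-1}(1,1)\ge i$ (obtained by chaining \eqref{eq:nmonoi} in the first coordinate and then \eqref{eq:nmonoj} in the second), yields $Q_n(i,j,1)-Q_n(i,j,2)\ge -\omega E_{\mathrm s}+\alpha(1-p)\,i\ge 0$ exactly when $i\ge \omega E_{\mathrm s}/(\alpha(1-p))$, i.e.\ $i\ge i_0$. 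In either case $u=1$ is not strictly better than the surviving actions, so the minimum in \eqref{eq:Jn} is attained in $\{0,2\}$, establishing \eqref{L3a0}.

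It then remains to transfer the values from $(i,j)$ to $(i+1,j)$. The key structural observation is that $Q_n(\cdot,j,2)$ in \eqref{eq:Qn2} does not depend on $i$, so $Q_n(i,j,2)=Q_n(i+1,j,2)$ automatically, while for $u=0$ we have $Q_n(i,j,0)-Q_n(i+1,j,0)=\alpha\bigl[J_{n-1}(i+1,j+1)-J_{n-1}(i+2,j+1)\bigr]$, which vanishes by the induction hypothesis \eqref{L3b0} applied to the state $(i+1,j+1)$. Here I would verify that the precondition propagates: if $i\ge i_0$ then $i+1\ge i_0$, and if $i\ge j$ then $i+1\ge j+1$, so the level-$(n-1)$ hypothesis for $(i+1,j+1)$ is met in both cases; likewise the hypothesis for $(i+1,j)$ holds, so by \eqref{L3a0} the optimizer there also lies in $\{0,2\}$. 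Since the two surviving $Q$-values coincide at $i$ and $i+1$, we conclude $J_n(i,j)=\min\{Q_n(i,j,0),Q_n(i,j,2)\}=J_n(i+1,j)$ and, under a consistent tie-breaking rule, $\mu_n(i,j)=\mu_n(i+1,j)$.

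I expect the elimination of the retransmission action, rather than the invariance, to be the main obstacle: it requires the right pairing of actions for each disjunct ($u=0$ versus $u=1$ when $i\ge j$, $u=2$ versus $u=1$ when $i\ge i_0$) and the correct telescoped monotonicity bounds, and it is precisely the estimate $J_{n-1}(i+1,i+1)-J_{n-1}(1,1)\ge i$ that makes the threshold $i_0=\lceil \omega E_{\mathrm s}/(\alpha(1-p))\rceil$ emerge. The invariance step is comparatively routine, the only care being the bookkeeping that both disjuncts of the hypothesis survive when $i$ is incremented and when the level-$(n-1)$ hypothesis is invoked.
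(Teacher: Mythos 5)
Your proposal is correct and follows essentially the same route as the paper: induction on $n$, eliminating action $1$ by comparing it with action $0$ when $i\ge j$ and with action $2$ when $i\ge i_0$ (via the telescoped bound $J_{n-1}(i+1,i+1)-J_{n-1}(1,1)\ge i$), and then transferring the value from $(i,j)$ to $(i+1,j)$ using the $i$-independence of $Q_n(\cdot,j,2)$ and the level-$(n-1)$ hypothesis for $Q_n(\cdot,j,0)$. The only cosmetic difference is that the paper argues equality of the difference $Q_n(i,j,0)-Q_n(i,j,2)$ across $i$ rather than of the two $Q$-values separately, which is equivalent.
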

\begin{proof}
See Appendix \ref{proof:i0}.
\end{proof}

\begin{lemma} \label{lemma:j0}
	For any $n\ge 2$ and $j\ge j_{0}\overset{\text{def}}{=}\left\lceil \dfrac{\omega E_{\mathrm t}+\omega E_{\mathrm s}}{\alpha \left(1-p \right) }\right\rceil$, there exists $i_{n}$ such that
	\begin{align}
		&\mu_{n}(i,j)=\left\{\begin{array}{ll} 1,& i<i_{n}\\2,& i\ge i_{n}\end{array} \right. \label{L4a0}\\
		&J_{n}(i,j+1)-J_{n}(i,j)=\dfrac{1-\left(\alpha p\right)^n }{1-\alpha p},\label{L4c0}
	\end{align}
\end{lemma}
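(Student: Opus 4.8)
The plan is to prove the two assertions simultaneously by induction on $n$, writing $\delta_n := \frac{1-(\alpha p)^n}{1-\alpha p}$ and using the recursion $\delta_n = 1 + \alpha p\,\delta_{n-1}$ with $\delta_1 = 1$. A direct evaluation of \eqref{eq:Jn}--\eqref{eq:Qn2} with $J_0\equiv 0$ gives $J_1(i,j)=j$, so $J_1(i,j+1)-J_1(i,j)=1=\delta_1$ for every $j$; this is the base for the increment part, and the threshold part is then based at $n=2$, verified from the explicit $Q_2$ obtained from $J_1(i,j)=j$.

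First I would establish the threshold structure at level $n$. Since sensing resets the AoIT to $1$, $Q_n(i,j,2)$ does not depend on $i$, so the comparison between actions $1$ and $2$ is governed by
\begin{align*}
D(i) &:= Q_n(i,j,2)-Q_n(i,j,1) \\
&= \omega E_{\mathrm s} + \alpha p\bigl[J_{n-1}(1,j+1)-J_{n-1}(i+1,j+1)\bigr] \\
&\quad + \alpha(1-p)\bigl[J_{n-1}(1,1)-J_{n-1}(i+1,i+1)\bigr].
\end{align*}
A term-by-term application of Lemma \ref{lemma:monoton} shows $D(i+1)-D(i)\le 0$, so $D$ is nonincreasing in $i$, and $D(i)\to-\infty$ since $J_{n-1}(i+1,i+1)$ grows at least linearly; hence $i_n := \min\{i : D(i)\le 0\}$ is finite, action $1$ is strictly better than $2$ for $i<i_n$, and action $2$ is at least as good for $i\ge i_n$. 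To rule out action $0$ I would bound, for $j\ge j_0$,
\begin{align*}
Q_n(i,j,2)-Q_n(i,j,0) &= \omega E_{\mathrm t}+\omega E_{\mathrm s} + \alpha p\bigl[J_{n-1}(1,j+1)-J_{n-1}(i+1,j+1)\bigr] \\
&\quad + \alpha(1-p)\bigl[J_{n-1}(1,1)-J_{n-1}(i+1,j+1)\bigr] \\
&\le \omega E_{\mathrm t}+\omega E_{\mathrm s}-\alpha(1-p)j \le 0,
\end{align*}
where the inequality uses Lemma \ref{lemma:monoton} (the first bracket is $\le 0$ and the second is $\le -j$). Thus action $0$ is dominated by action $2$ and the optimal action lies in $\{1,2\}$, giving the claimed threshold: $\mu_n(i,j)=1$ for $i<i_n$ and $\mu_n(i,j)=2$ for $i\ge i_n$ (ties at $i_n$ broken toward $2$).

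For the increment formula I need the optimal action to coincide at $(i,j)$ and $(i,j+1)$ for $j\ge j_0$, i.e.\ that $i_n$ does not move with $j$. This is where the induction hypothesis enters: the level-$(n-1)$ increment formula makes $J_{n-1}(1,j)-J_{n-1}(i+1,j)$ constant in $j$ for $j\ge j_0$ (the two $\delta_{n-1}$ increments cancel), so the $j$-dependent term of $D(i)$ is constant and $i_n$ is independent of $j$. Substituting the common optimal action into $J_n(i,j+1)-J_n(i,j)$ and cancelling, both branches collapse to $1+\alpha p\bigl[J_{n-1}(\,\cdot\,,j+2)-J_{n-1}(\,\cdot\,,j+1)\bigr]$, and applying the induction hypothesis at the lower index $j+1\ge j_0$ yields $1+\alpha p\,\delta_{n-1}=\delta_n$, as required.

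I expect the main obstacle to be the bookkeeping of this coupled induction: the level-$n$ threshold claim relies on the level-$(n-1)$ increment formula to make $i_n$ independent of $j$, while the level-$n$ increment formula in turn relies on the level-$n$ threshold claim to fix the optimal action. Keeping the two parts in lockstep, and carefully justifying the cancellation of the $\delta_{n-1}$ increments that renders $D(i)$ independent of $j$, is the delicate point; by contrast, the dominance of action $0$ and the monotonicity of $D$ are routine consequences of Lemma \ref{lemma:monoton}.
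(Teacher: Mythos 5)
Your proposal is correct and follows essentially the same route as the paper's proof: an induction on $n$ that (i) eliminates action $0$ for $j\ge j_0$ via the monotonicity of Lemma \ref{lemma:monoton}, (ii) obtains the threshold $i_n$ from the monotonicity in $i$ of the action-$1$-versus-$2$ comparison (your $D(i)$ is the negative of the paper's $A_n^{1-2}(i,j)$), (iii) uses the level-$(n-1)$ increment formula to make that comparison independent of $j$ so the optimal action agrees at $(i,j)$ and $(i,j+1)$, and (iv) closes with $\delta_n=1+\alpha p\,\delta_{n-1}$. The only cosmetic differences are that you justify finiteness of $i_n$ directly from $D(i)\to-\infty$ rather than via Lemma \ref{lemma:i0}, and you handle ties explicitly.
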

\begin{proof}
See Appendix \ref{proof:j0}.
\end{proof}
Based on Lemma \ref{lemma:i0} and Lemma \ref{lemma:j0}, as $J_{\alpha}(i,j) = \lim_{n \rightarrow \infty} J_n(i,j), \mu_{\alpha}(i,j) = \lim_{n \rightarrow \infty} \mu_n(i,j)$, similar properties hold for the optimal policy and the optimal cost. The results are summarized as follows.
\begin{corollary}\label{coro:i0j0}
If $i\ge i_{0}$ or $i \ge j$, we have
	\begin{align}
		\mu_{\alpha}(i,j)&=\mu_{\alpha}(i+1,j)\in\{0,2\},\label{L3a}\\
		J_{\alpha}(i,j)&=J_{\alpha}(i+1,j).\label{L3b}
	\end{align}
If $j\ge j_{0}$, there exists $i_{\alpha}$ such that
	\begin{align}
		&\mu_{\alpha}(i,j)=\left\{\begin{array}{ll} 1,& i<i_{\alpha}\\2,& i\ge i_{\alpha}\end{array} \right. \label{L4a}\\
		&J_{\alpha}(i,j+1)-J_{\alpha}(i,j)=\dfrac{1}{1-\alpha p}.\label{L4c}
	\end{align}
\end{corollary}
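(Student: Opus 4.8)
The plan is to obtain Corollary \ref{coro:i0j0} by passing to the limit $n \to \infty$ in Lemma \ref{lemma:i0} and Lemma \ref{lemma:j0}, relying on the pointwise convergences $J_{\alpha}(i,j) = \lim_{n\to\infty} J_n(i,j)$ and $\mu_{\alpha}(i,j) = \lim_{n\to\infty} \mu_n(i,j)$ established through Proposition \ref{prop:discBellman}. The two cost statements are the most direct. For \eqref{L3b}, since \eqref{L3b0} gives $J_n(i,j) = J_n(i+1,j)$ for every $n \ge 1$, taking limits on both sides immediately yields $J_{\alpha}(i,j) = J_{\alpha}(i+1,j)$. For \eqref{L4c}, \eqref{L4c0} gives $J_n(i,j+1) - J_n(i,j) = \frac{1-(\alpha p)^n}{1-\alpha p}$ for all $n \ge 2$ and $j \ge j_0$; because $\alpha \in (0,1)$ and $p \in (0,1)$ force $\alpha p \in (0,1)$, we have $(\alpha p)^n \to 0$, so the limit is $\frac{1}{1-\alpha p}$, which is exactly \eqref{L4c}.

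For the policy identity \eqref{L3a}, I would exploit that the action set $\mathcal{U} = \{0,1,2\}$ is finite, so convergence of the integer-valued sequences $\mu_n(i,j)$ and $\mu_n(i+1,j)$ amounts to eventual equality with their limits. Fixing a state with $i \ge i_0$ or $i \ge j$, I would choose $n$ large enough that simultaneously $\mu_n(i,j) = \mu_{\alpha}(i,j)$ and $\mu_n(i+1,j) = \mu_{\alpha}(i+1,j)$. By \eqref{L3a0} these two finite-stage actions coincide and lie in $\{0,2\}$, hence $\mu_{\alpha}(i,j) = \mu_{\alpha}(i+1,j) \in \{0,2\}$.

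The threshold structure \eqref{L4a} is the part that needs the most care, since it requires the limiting policy to retain the monotone-in-$i$ shape rather than merely converging pointwise. Fix $j \ge j_0$. By Lemma \ref{lemma:j0}, for every $n \ge 2$ the set $\{i : \mu_n(i,j) = 2\}$ is the up-set $\{i : i \ge i_n\}$ and $\mu_n(i,j) \in \{1,2\}$ throughout. I would show $\{i : \mu_{\alpha}(i,j) = 2\}$ is likewise an up-set: if $\mu_{\alpha}(i,j) = 2$, then $\mu_n(i,j) = 2$ for all $n \ge N(i,j)$, which forces $i \ge i_n$ for those $n$; consequently any $i' > i$ satisfies $i' \ge i_n$, so $\mu_n(i',j) = 2$ for all $n \ge N(i,j)$, giving $\mu_{\alpha}(i',j) = 2$. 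Setting $i_{\alpha} = \min\{i : \mu_{\alpha}(i,j) = 2\}$ then yields $\mu_{\alpha}(i,j) = 1$ for $i < i_{\alpha}$ and $\mu_{\alpha}(i,j) = 2$ for $i \ge i_{\alpha}$. Moreover, since the finite-stage threshold $i_n$ in Lemma \ref{lemma:j0} does not depend on $j$, the same one-sided argument applied across two values $j_1,j_2 \ge j_0$ shows $\mu_{\alpha}(i,j_1) = 2 \iff \mu_{\alpha}(i,j_2) = 2$, so $i_{\alpha}$ can be taken independent of $j$. The chief obstacle throughout is justifying the interchange of limit and $\arg\min$, i.e. that $\mu_n \to \mu_{\alpha}$ in the sense of eventual equality even when the Bellman minimizer is not unique; I would handle this by invoking the convergence asserted after Proposition \ref{prop:discBellman}, observing that, because the preceding lemmas already pin the finite-stage minimizer to the stated structure, the up-set argument needs only the one-sided implication $\mu_n(i,j) = 2 \Rightarrow \mu_n(i',j) = 2$ for $i' > i$, which is robust to any tie-breaking.
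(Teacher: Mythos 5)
Your proposal is correct and follows essentially the same route as the paper, which states the corollary as a direct consequence of passing to the limit $n\to\infty$ in Lemmas \ref{lemma:i0} and \ref{lemma:j0} using $J_{\alpha}=\lim_n J_n$ and $\mu_{\alpha}=\lim_n\mu_n$, without giving a separate appendix proof. Your added care with the eventual-equality of the finite-action policy sequence and the up-set argument for \eqref{L4a} (noting that $\mu_n(i,j)=2$ for $i\ge i_0$, $j\ge j_0$ guarantees the up-set is nonempty and $i_{\alpha}\le i_0$) simply fills in details the paper leaves implicit.
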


\begin{figure}
  \centering
  % Requires \usepackage{graphicx}
  \includegraphics[width=3in]{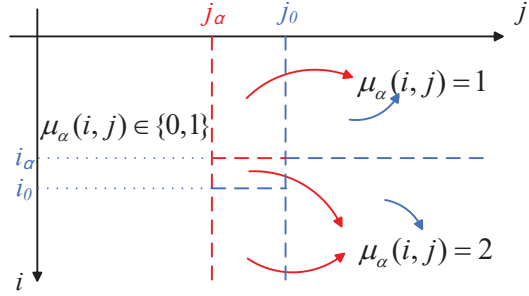}\\
  \caption{The structure of the optimal policy for the discounted cost problem.}\label{fig:optproof}
\end{figure}

Based on Corollary \ref{coro:i0j0}, as $\mu_{\alpha}(i,j)\in \{0,2\}$ for $i\ge i_{0}$ and $\mu_{\alpha}(i,j)\in \{1,2\}$ for $j\ge j_{0}$, we have $\mu_{\alpha}(i,j)=2$ for $i\ge i_{0},j\ge j_{0}$. Hence, according to \eqref{L4a}, we have $i_{\alpha} \le i_{0}$, and $\mu_{\alpha}(i,j)=2$ for $i\ge i_{\alpha},j\ge j_{0}$, $\mu_{\alpha}(i,j)=1$ for $i < i_{\alpha },j\ge j_{0}$. Therefore, the structure of the optimal policy for $j \ge j_0$ is determined, as illustrated in Fig.~\ref{fig:optproof}. Then, we examine the case for $j < j_0$. According to \eqref{L3a}, the optimal policy in each column is the same (0 or 2) for $i \ge i_0$. Thus, denote by
\begin{align}
j_{\alpha} = \max\{j: \mu_{\alpha}(i,j-1) = 0, \mu_{\alpha}(i,j) = \mu_{\alpha}(i,j+1) = \cdots = \mu_{\alpha}(i,j_0) = 2, \forall i\ge i_0\}.
\end{align}
By definition, we have $\mu_{\alpha}(i,j)=2$ for $i\ge i_{0}, j\ge j_{\alpha}$. In the following, we will show that $\mu_{\alpha}(i,j) = 1$ for all $i < i_{\alpha}, j \ge j_{\alpha}$, and $\mu_{\alpha}(i,j) = 2$ for all $i \ge i_{\alpha}, j \ge j_{\alpha}$. While for $j < j_{\alpha}$, the optimal policy is threshold-based between actions 0 and 1. Before that, denote by
	\begin{align}
		L_{\alpha}(i,j)&=J_{\alpha}(i+1,j)-J_{\alpha}(i,j),\label{Lij}\\
		D_{\alpha}(i)&=J_{\alpha}(i+1,i+1)-J_{\alpha}(i,i).\label{Di}
	\end{align}
We have the following lemma.

\begin{lemma} \label{lemma:LD}
For any $i \ge 1, j \ge j_0$, we have
	\begin{align}
		&D_{\alpha}(i+1)\le D_{\alpha}(i), \label{L5b}\\
		&L_{\alpha}(i,j)=L_{\alpha}(i,j+1)< D_{\alpha}(i), \label{L5a}
	\end{align}
\end{lemma}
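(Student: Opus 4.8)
The plan is to split the statement into its easy and hard parts, reduce the strict inequality in \eqref{L5a} to the monotonicity \eqref{L5b}, and concentrate the real effort on \eqref{L5b}.

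The equality $L_{\alpha}(i,j)=L_{\alpha}(i,j+1)$ in \eqref{L5a} is immediate from Corollary \ref{coro:i0j0}: applying \eqref{L4c} at columns $i$ and $i+1$ gives $J_{\alpha}(i+1,j+1)-J_{\alpha}(i+1,j)=J_{\alpha}(i,j+1)-J_{\alpha}(i,j)=\frac{1}{1-\alpha p}$ for $j\ge j_{0}$, and subtracting yields $L_{\alpha}(i,j+1)=L_{\alpha}(i,j)$. For the strict bound $L_{\alpha}(i,j)<D_{\alpha}(i)$ I would run a downward induction on $i$ at fixed $j\ge j_{0}$, \emph{assuming} \eqref{L5b}. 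For $i\ge i_{\alpha}$ the optimal action is $2$ by \eqref{L4a}, and since $Q_{\alpha}(i,j,2)$ in \eqref{eq:Qn2} does not depend on the first coordinate, $J_{\alpha}(i+1,j)=J_{\alpha}(i,j)$, so $L_{\alpha}(i,j)=0$; as $D_{\alpha}(i)\ge 1>0$ (write $D_{\alpha}(i)$ as the sum of an $i$-difference $\ge 0$ and a $j$-difference $\ge 1$, both from the limiting form of Lemma \ref{lemma:monoton}), the claim holds. For $i<i_{\alpha}$ the optimal action is $1$, so $J_{\alpha}(i,j)=Q_{\alpha}(i,j,1)$ while $J_{\alpha}(i+1,j)\le Q_{\alpha}(i+1,j,1)$; subtracting the two instances of \eqref{eq:Qn1} and using the equality just proved gives $L_{\alpha}(i,j)\le \alpha p\,L_{\alpha}(i+1,j)+\alpha(1-p)D_{\alpha}(i+1)$. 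The induction hypothesis $L_{\alpha}(i+1,j)<D_{\alpha}(i+1)$ then yields $L_{\alpha}(i,j)<\alpha D_{\alpha}(i+1)\le \alpha D_{\alpha}(i)<D_{\alpha}(i)$, where \eqref{L5b} is used in the middle step. This closes the reduction, so everything rests on \eqref{L5b}.

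It remains to prove $D_{\alpha}(i+1)\le D_{\alpha}(i)$, which is equivalent to the diagonal concavity $J_{\alpha}(i+2,i+2)+J_{\alpha}(i,i)\le 2J_{\alpha}(i+1,i+1)$, and this is the crux. I would establish it at the level of the value iteration and then pass to the limit $J_{\alpha}=\lim_{n}J_{n}$. On the diagonal only actions $0$ and $2$ are admissible by Lemma \ref{lemma:i0}, so $J_{n}(i,i)=i+\min\{\alpha J_{n-1}(i+1,i+1),\,c_{n}+\alpha p\,J_{n-1}(1,i+1)\}$ with $c_{n}$ independent of $i$. In the range where action $0$ is optimal this unfolds to the recursion $D_{\alpha}(i)=1+\alpha D_{\alpha}(i+1)$, whereas where action $2$ is optimal and $i+1\ge j_{0}$ we get the constant $D_{\alpha}(i)=\frac{1}{1-\alpha p}$ from \eqref{L4c}; combined with the a priori bound $1\le D_{\alpha}(i)\le\frac{1}{1-\alpha}$ (the upper bound propagating backward from $\frac{1}{1-\alpha p}\le\frac{1}{1-\alpha}$ through that recursion, since $1+\alpha\frac{1}{1-\alpha}=\frac{1}{1-\alpha}$), each range is individually non-increasing.

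The hard part will be gluing the two ranges, i.e.\ controlling $D_{\alpha}$ across the value of $i$ at which the optimal diagonal action switches. My approach is to carry, as a joint induction invariant on $n$, both the diagonal concavity of $J_{n}(i,i)$ and the concavity of $J_{n}(1,j)$ in $j$: the diagonal update is a minimum of two concave sequences, and such a minimum is again concave \emph{provided} the difference of the two sequences changes sign only once. Verifying this single-crossing property—that the sense-and-transmit value becomes relatively cheaper as the age grows, so that the optimal diagonal action is monotone in $i$—is precisely the threshold behaviour underlying Lemmas \ref{lemma:i0}–\ref{lemma:j0}, and checking it while simultaneously propagating the auxiliary $j$-concavity through the $\min$ operator is the delicate technical step I expect to consume most of the work. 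Once \eqref{L5b} holds for every $J_{n}$, taking $n\to\infty$ transfers it to $J_{\alpha}$ and, together with the reduction above, completes the proof.
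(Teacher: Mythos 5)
Your handling of the equality $L_{\alpha}(i,j)=L_{\alpha}(i,j+1)$ and the reduction of $L_{\alpha}(i,j)<D_{\alpha}(i)$ to \eqref{L5b} by downward induction on $i$ is correct and is essentially the paper's own argument (action $2$ gives $L_{\alpha}=0$ for $i\ge i_{\alpha}$; action $1$ gives $L_{\alpha}(i,j)\le \alpha p L_{\alpha}(i+1,j)+\alpha(1-p)D_{\alpha}(i+1)<\alpha D_{\alpha}(i+1)\le\alpha D_{\alpha}(i)<D_{\alpha}(i)$). The problem is that \eqref{L5b}, which you correctly identify as the crux, is never actually proved: the ``gluing across the switch point'' is left as a step you ``expect to consume most of the work,'' and the program you sketch for it does not close as stated. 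Your proposed induction invariant is diagonal concavity of $J_n$ plus concavity of $J_n(1,\cdot)$ in $j$; but the Bellman update of row $1$ under actions $0$ and $1$ involves $J_{n-1}(2,\cdot)$, not $J_{n-1}(1,\cdot)$, so row-$1$ concavity is not self-propagating and you are forced to carry concavity in $j$ of \emph{every} row $J_n(i,\cdot)$. (Separately, the single-crossing condition you worry about is a red herring: the pointwise minimum of two discretely concave sequences is automatically discretely concave, since $2\min(f,g)(i+1)=2f(i+1)\ge f(i)+f(i+2)\ge \min(f,g)(i)+\min(f,g)(i+2)$ when the min at $i+1$ is attained by $f$. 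Also, your identification $D_{\alpha}(i)=\frac{1}{1-\alpha p}$ in the action-$2$ regime relies on \eqref{L4c}, which at this stage is only available for $j\ge j_0$, leaving the regime $j<j_0$ uncovered.)

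The paper closes this gap with a two-step argument that makes the entire diagonal analysis unnecessary. First, it proves by a routine second-difference induction over all three actions that $J_{n}(i,j+2)-J_{n}(i,j+1)\le J_{n}(i,j+1)-J_{n}(i,j)$ for all $i$ and $j$ (this is exactly the all-rows concavity your invariant would have to be strengthened to, and it does self-propagate because each $Q_{n}(i,j,u)$ involves $j$-differences of $J_{n-1}$ along a single row). Second, working on the extended state space $\mathcal{A}_{\mathrm{ext}}$, Corollary \ref{coro:i0j0} gives $J_{\alpha}(i,j)=J_{\alpha}(i+1,j)$ whenever $i\ge j$, which converts the diagonal difference into a $j$-difference along one fixed row:
\begin{align*}
D_{\alpha}(i+1)=J_{\alpha}(i+2,i+2)-J_{\alpha}(i+2,i+1)\le J_{\alpha}(i+2,i+1)-J_{\alpha}(i+2,i)=J_{\alpha}(i+1,i+1)-J_{\alpha}(i,i)=D_{\alpha}(i),
\end{align*}
the middle inequality being row concavity. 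If you adopt this device, your diagonal recursion, the a priori bounds on $D_{\alpha}$, and the switch-point analysis can all be discarded.
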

\begin{proof}
See Appendix \ref{proof:LD}.
\end{proof}
Then, we can show the structure of the optimal policy in the whole state space $\mathcal{A}_{\mathrm{ext}}$.

\begin{lemma} \label{lemma:jalpha}
	Eqs. \eqref{L4a}, \eqref{L4c} and \eqref{L5a} hold for any $i \ge 1, j \ge j_{\alpha}$.
\end{lemma}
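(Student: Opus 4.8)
The plan is to extend Corollary~\ref{coro:i0j0} and \eqref{L5a} downward from the level $j_0$ to the level $j_\alpha$ by a backward induction on $j$. The base case $j\ge j_0$ is precisely Corollary~\ref{coro:i0j0} together with Lemma~\ref{lemma:LD}. For the inductive step I fix a level $j$ with $j_\alpha\le j<j_0$, assume that \eqref{L4a}, \eqref{L4c} and \eqref{L5a} already hold at every level strictly above $j$, and then establish them at level $j$ in the order \eqref{L4a}$\,\to\,$\eqref{L4c}$\,\to\,$\eqref{L5a}.

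To obtain \eqref{L4a} I first separate the comparison of actions $1$ and $2$ from the exclusion of action $0$. Writing $Q_\alpha(i,j,2)-Q_\alpha(i,j,1)$ out from the limiting forms of \eqref{eq:Qn1}--\eqref{eq:Qn2}, its dependence on $j$ enters only through $J_\alpha(1,j+1)-J_\alpha(i+1,j+1)=-\sum_{k=1}^{i}L_\alpha(k,j+1)$. By the induction hypothesis \eqref{L5a} at level $j+1$, each $L_\alpha(k,j+1)$ equals its value at $j_0$, so $Q_\alpha(i,j,2)-Q_\alpha(i,j,1)$ is independent of $j$ and coincides with the quantity that already produced the threshold $i_\alpha$ for $j\ge j_0$. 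Since this difference is nonincreasing in $i$ (the partial sums of the nonnegative $L_\alpha$ grow with $i$ by Lemma~\ref{lemma:monoton}), the $1$-versus-$2$ boundary stays at the same $i_\alpha$, which yields the stated threshold once action $0$ has been ruled out.

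The heart of the argument, and the step I expect to be the main obstacle, is showing that action $0$ is never optimal for $j\ge j_\alpha$. Because $Q_\alpha(i,j,1)-Q_\alpha(i,j,0)=\omega E_{\mathrm t}-\alpha(1-p)\bigl(J_\alpha(i+1,j+1)-J_\alpha(i+1,i+1)\bigr)$ is nonincreasing in $j$ by Lemma~\ref{lemma:monoton} and, by the induction hypothesis \eqref{L5a}, nondecreasing in $i$, it suffices to verify the single corner inequality $Q_\alpha(i_0-1,j_\alpha,1)\le Q_\alpha(i_0-1,j_\alpha,0)$: the threshold then forces $Q_\alpha(i,j,2)\le Q_\alpha(i,j,1)\le Q_\alpha(i,j,0)$ for $i_\alpha\le i<i_0$ and $Q_\alpha(i,j,1)\le Q_\alpha(i,j,0)$ for $i<i_\alpha$, while $i\ge i_0$ is already action $2$ by the definition of $j_\alpha$. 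To close this corner inequality I will feed in the two optimality relations that define $j_\alpha$, namely that action $2$ beats both action $0$ and action $1$ at $(i_0,j_\alpha)$, together with the column-constancy $J_\alpha(i_0+1,\cdot)=J_\alpha(i_0,\cdot)$ from \eqref{L3b}. Reconciling the diagonal terms $J_\alpha(i_0,i_0)$ that surface here with the off-diagonal increments controlled by \eqref{L4c} is where the genuine work lies, and I anticipate needing the monotonicity \eqref{L5b} of $D_\alpha$ to keep these diagonal contributions under control.

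Once \eqref{L4a} holds at level $j$, the remaining two claims follow quickly. Substituting the now-known action (retransmit for $i<i_\alpha$, sense-and-transmit for $i\ge i_\alpha$, identical at levels $j$ and $j+1$) into the recursions \eqref{eq:Qn0}--\eqref{eq:Qn2} gives $J_\alpha(i,j+1)-J_\alpha(i,j)=1+\alpha p\,\tfrac{1}{1-\alpha p}=\tfrac{1}{1-\alpha p}$, which is \eqref{L4c}; here I invoke the induction hypothesis \eqref{L4c} at level $j+1$ for row $i+1$ (action $1$) or for row $1$ (action $2$). Finally, applying \eqref{L4c} to rows $i$ and $i+1$ gives $L_\alpha(i,j+1)-L_\alpha(i,j)=0$, and combining with the hypothesis $L_\alpha(i,j+1)<D_\alpha(i)$ yields $L_\alpha(i,j)=L_\alpha(i,j+1)<D_\alpha(i)$, which is \eqref{L5a}. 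This completes the inductive step and hence the lemma.
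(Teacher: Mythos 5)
Your skeleton --- backward induction on $j$ from $j_0$ down to $j_\alpha$, preservation of the $1$-versus-$2$ threshold $i_\alpha$ via the $j$-independence of $Q_\alpha(\cdot,\cdot,1)-Q_\alpha(\cdot,\cdot,2)$ under the inductive hypothesis \eqref{L5a}, and the closing derivations of \eqref{L4c} and \eqref{L5a} --- matches the paper. The gap is in the step you yourself identify as the heart of the matter: excluding action $0$. You reduce it to the single corner inequality $Q_\alpha(i_0-1,j_\alpha,1)\le Q_\alpha(i_0-1,j_\alpha,0)$ and then chain $Q_\alpha(i,j,2)\le Q_\alpha(i,j,1)\le Q_\alpha(i,j,0)$ for $i_\alpha\le i<i_0$. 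That corner inequality is false in general: since $Q_\alpha(i,j,0)-Q_\alpha(i,j,1)=\alpha(1-p)\bigl(J_\alpha(i+1,j+1)-J_\alpha(i+1,i+1)\bigr)-\omega E_{\mathrm t}$, whenever $i\ge j$ this is at most $-\omega E_{\mathrm t}<0$ (exactly the computation in the proof of Lemma \ref{lemma:i0}), i.e., action $0$ strictly beats action $1$ there. The lemma lives on $\mathcal{A}_{\mathrm{ext}}$ and $i_0-1\ge j_\alpha$ is the typical situation (e.g., in Fig.~\ref{fig:OptP}(b) one has $i_0\approx 19$ while $j_\alpha=8$), so your corner point sits precisely in this bad regime. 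More structurally, in the region $i_\alpha\le i<i_0$, $j\ge j_\alpha$, action $1$ need not dominate action $0$ at all; it is action $2$ that does. The monotonicity you invoke (your $A^{1-0}$ nondecreasing in $i$) is correct but works against you: it says the $1$-versus-$0$ comparison is worst at large $i$, which is exactly where it breaks.

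The paper's route never chains through action $1$ in that range. For $i_\alpha-1\le i\le i_0$ it compares $0$ against $2$ directly: by the induction hypothesis $\mu_\alpha(i,k)=2$ for $i\ge i_\alpha$, the value $J_\alpha(i+1,k)$ is constant over this range, so $A_{\alpha}^{0-2}(i,k-1)$ (which depends on $i$ only through $J_\alpha(i+1,k)$) is constant in $i$ and equals $A_{\alpha}^{0-2}(i_0,k-1)>0$, the latter forced by the definition of $j_\alpha$. For $i<i_\alpha$ the paper likewise does not prove a global domination of $0$ by $1$; it shows by contradiction, using $L_\alpha(i,k)<D_\alpha(i)$, that action $0$ at $(i-1,k-1)$ is incompatible with action $1$ at $(i,k-1)$, and propagates downward from the anchor $\mu_\alpha(i_\alpha-1,k-1)=1$ supplied by the two exclusions above. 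You would need to replace your corner-inequality reduction with an argument of this kind; as written --- and by your own admission left unfinished --- the exclusion of action $0$ does not go through.
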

\begin{proof}
See Appendix \ref{proof:jalpha}.
\end{proof}

\begin{lemma} \label{lemma:ij}
	For any $j<j_{\alpha}$, there exist $I_{\alpha}(1)\le I_{\alpha}(2)\le \cdots\le I_{\alpha}(j_{\alpha}-1)\le i_{\alpha}$ so that
    \begin{align}
		\mu_{\alpha}(i,j)=\left\{\begin{array}{ll} 0,& i \ge I_{\alpha}(j),\\1,& i < I_{\alpha}(j).\end{array} \right. \label{L7a}
	\end{align}
\end{lemma}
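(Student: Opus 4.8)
The plan is to prove the statement by backward induction on $j$, starting from $j=j_{\alpha}$ (where Lemma~\ref{lemma:jalpha} and Corollary~\ref{coro:i0j0} already give the full structure, including the constant column marginal $J_{\alpha}(i,j+1)-J_{\alpha}(i,j)=\tfrac{1}{1-\alpha p}$ and $L_{\alpha}(i,j)<D_{\alpha}(i)$) and descending to $j=1$. Throughout I would work with the action-value differences obtained from \eqref{eq:Qn0}--\eqref{eq:Qn2} in the limit $n\to\infty$. The decisive one is
\begin{align}
Q_{\alpha}(i,j,1)-Q_{\alpha}(i,j,0)=\omega E_{\mathrm t}-\alpha(1-p)\,\Phi_{\alpha}(i,j),\qquad \Phi_{\alpha}(i,j):=J_{\alpha}(i+1,j+1)-J_{\alpha}(i+1,i+1),
\end{align}
together with the two structural observations that $Q_{\alpha}(i,j,2)$ does not depend on $i$ (action $2$ always resets the transmitter age to $1$), while $Q_{\alpha}(i,j,0)$ and $Q_{\alpha}(i,j,1)$ are non-decreasing in $i$ by Corollary~\ref{coro:i0j0} and saturate for $i\ge i_{0}$ by Lemma~\ref{lemma:i0}. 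The invariant I would carry from level $j+1$ to level $j$ is: the description \eqref{L7a} at level $j+1$, the inequality $L_{\alpha}(i,j+1)<D_{\alpha}(i)$ for all $i$, and the needed monotonicity of the column marginals.

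The $0$-versus-$1$ threshold is the first thing I would extract at level $j$. The propagated inequality $L_{\alpha}(i+1,j+1)<D_{\alpha}(i+1)$ is exactly what makes $\Phi_{\alpha}(i,j)$ non-increasing in $i$, so the difference above is non-decreasing in $i$ and crosses zero once; this defines $I_{\alpha}(j)$ and yields \eqref{L7a}. Monotonicity in $j$ is then immediate, since $\Phi_{\alpha}(i,j)$ is non-decreasing in $j$ by Corollary~\ref{coro:i0j0}, so the crossing point can only move right as $j$ grows, giving $I_{\alpha}(j)\le I_{\alpha}(j+1)$. To exclude action $2$ for $j<j_{\alpha}$ I would exploit that $Q_{\alpha}(\cdot,j,2)$ is flat in $i$ whereas $Q_{\alpha}(\cdot,j,0)$ is non-decreasing and saturates: it therefore suffices to check $Q_{\alpha}(i_{0},j,0)\le Q_{\alpha}(i_{0},j,2)$, after which $Q_{\alpha}(i,j,0)\le Q_{\alpha}(i,j,2)$ for every $i$ and action $2$ is weakly dominated by action $0$. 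The saturated comparison holds at $j=j_{\alpha}-1$ by the very definition of $j_{\alpha}$ (namely $\mu_{\alpha}(i,j_{\alpha}-1)=0$ for $i\ge i_{0}$), and I would push it down to all $j<j_{\alpha}$ by showing that $Q_{\alpha}(i_{0},j,2)-Q_{\alpha}(i_{0},j,0)$ is non-increasing in $j$, which reduces to $p\,[J_{\alpha}(1,j+1)-J_{\alpha}(1,j)]\le J_{\alpha}(i_{0}{+}1,j+1)-J_{\alpha}(i_{0}{+}1,j)$; the constant marginal $\tfrac{1}{1-\alpha p}$ from Lemma~\ref{lemma:jalpha} anchors this at $j=j_{\alpha}$, and an auxiliary monotonicity of the column marginals in $i$ carries it below. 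Finally, the bound $I_{\alpha}(j_{\alpha}-1)\le i_{\alpha}$ I would read off from the identity $\Phi_{\alpha}(i,j_{\alpha}-1)=\Phi_{\alpha}(i,j_{\alpha})-\tfrac{1}{1-\alpha p}$ (again the constant $j$-marginal at $j_{\alpha}$) combined with the level-$j_{\alpha}$ threshold $i_{\alpha}$ coming from \eqref{L4a}.

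It remains to close the induction by propagating $L_{\alpha}(i,j)<D_{\alpha}(i)$ one level down, using the action structure at level $j$ just established. I would split on the optimal actions at $(i,j)$ and $(i+1,j)$: (i) if both play action $1$, then $L_{\alpha}(i,j)=\alpha p\,L_{\alpha}(i+1,j+1)+\alpha(1-p)\,D_{\alpha}(i+1)$; (ii) if both play action $0$, then $L_{\alpha}(i,j)=\alpha\,L_{\alpha}(i+1,j+1)$; (iii) if they straddle the threshold, optimality gives $L_{\alpha}(i,j)\le Q_{\alpha}(i+1,j,1)-Q_{\alpha}(i,j,1)=\alpha p\,L_{\alpha}(i+1,j+1)+\alpha(1-p)\,D_{\alpha}(i+1)$. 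In each case the hypothesis $L_{\alpha}(i+1,j+1)<D_{\alpha}(i+1)$ together with $D_{\alpha}(i+1)\le D_{\alpha}(i)$ from \eqref{L5b} and $D_{\alpha}\ge 0$ yields $L_{\alpha}(i,j)<\alpha D_{\alpha}(i+1)\le D_{\alpha}(i)$, which reinstates the invariant at level $j$.

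I expect the main obstacle to be precisely the two places where the clean closed form $J_{\alpha}(i,j+1)-J_{\alpha}(i,j)=\tfrac{1}{1-\alpha p}$ is no longer available, i.e.\ below $j_{\alpha}$: ruling out action $2$ and matching the threshold at the boundary both hinge on tight control of the column marginals $J_{\alpha}(i,j+1)-J_{\alpha}(i,j)$ for $j<j_{\alpha}$, including their monotonicity in $i$. These estimates are coupled to the $L_{\alpha}<D_{\alpha}$ invariant and to the threshold location, so the induction must carry all of them at once and check that each survives every branch of the action split. Keeping this bookkeeping mutually consistent, rather than any single inequality, is the delicate part.
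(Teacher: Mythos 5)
Your architecture matches the paper's: backward induction in $j$ starting below $j_{\alpha}$, the single crossing of $Q_{\alpha}(i,j,1)-Q_{\alpha}(i,j,0)$ in $i$ driven by $L_{\alpha}(i+1,j+1)<D_{\alpha}(i+1)$, exclusion of action $2$ by comparing at $i_{0}$ where $Q_{\alpha}(\cdot,j,0)$ saturates, and a case split to push $L_{\alpha}<D_{\alpha}$ down one level (that case split is correct as written). But there is a genuine gap: the invariant you actually need for the action-$2$ exclusion below $j_{\alpha}-1$ is not $L_{\alpha}<D_{\alpha}$ but the monotonicity $L_{\alpha}(i,j)\le L_{\alpha}(i,j+1)$ — equivalently, that the column marginal $J_{\alpha}(i,j+1)-J_{\alpha}(i,j)$ is non-decreasing in $i$; your reduction $p\,[J_{\alpha}(1,j+1)-J_{\alpha}(1,j)]\le J_{\alpha}(i_{0}+1,j+1)-J_{\alpha}(i_{0}+1,j)$ is exactly this statement. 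You list it as part of the carried invariant but never propagate it, and it does not follow from the $L_{\alpha}<D_{\alpha}$ propagation you do write out. Its propagation is the bulk of the paper's proof (the auxiliary relation $L_{\alpha}(i,j)\le L_{\alpha}(i,j+1)<D_{\alpha}(i)$ and the three-case analysis with sub-cases on the optimal actions at $(i,j)$, $(i+1,j)$ versus $(i,j+1)$, $(i+1,j+1)$, including the straddling configuration $\mu_{\alpha}(i,k)=1$, $\mu_{\alpha}(i+1,k)=0$, which requires a separate computation through the $Q$-functions). Deferring this as ``bookkeeping'' leaves the induction open, because the very step that rules out action $2$ at level $j-1$ consumes this monotonicity at level $j$.

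A second, smaller issue is the anchor $I_{\alpha}(j_{\alpha}-1)\le i_{\alpha}$. Your identity $\Phi_{\alpha}(i,j_{\alpha}-1)=\Phi_{\alpha}(i,j_{\alpha})-\tfrac{1}{1-\alpha p}$ is correct, but it cannot deliver the bound by itself: at level $j_{\alpha}$ the threshold $i_{\alpha}$ from \eqref{L4a} separates actions $1$ and $2$, not $1$ and $0$, so you have no control on where $\Phi_{\alpha}(\cdot,j_{\alpha})$ sits relative to the $0$-versus-$1$ crossing value $\omega E_{\mathrm t}/(\alpha(1-p))$ for $i\ge i_{\alpha}$. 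The paper closes this by using action $2$ as an intermediary: for $i_{\alpha}\le i<i_{0}$ one has $A_{\alpha}^{1-2}(i,j_{\alpha}-1)=A_{\alpha}^{1-2}(i,j_{\alpha})>0$ and $A_{\alpha}^{0-2}(i,j_{\alpha}-1)\le A_{\alpha}^{0-2}(i_{0},j_{\alpha}-1)<0$ by the definition of $j_{\alpha}$, whence $Q_{\alpha}(i,j_{\alpha}-1,0)<Q_{\alpha}(i,j_{\alpha}-1,2)<Q_{\alpha}(i,j_{\alpha}-1,1)$ and the $0$-region already begins at or before $i_{\alpha}$. Your observation that $\Phi_{\alpha}$ is non-decreasing in $j$, so the crossing point moves right, is a clean way to get $I_{\alpha}(j)\le I_{\alpha}(j+1)$ for $j+1<j_{\alpha}$, but the boundary step needs the sandwich argument above rather than the $\Phi_{\alpha}$-shift identity.
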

\begin{proof}
See Appendix \ref{proof:ij}.
\end{proof}

Now, we can present the optimal policy for the average cost problem \eqref{prob:avgmin} as follows.

\begin{proposition} \label{thm:average}
There exists a threshold optimal stationary policy for the average cost problem \eqref{prob:avgmin}. In particular, with thresholds satisfying $\theta(1) \le \cdots \le \theta(\theta_{\mathrm r}-1) \le \theta_{\mathrm t} \le \theta_{\mathrm r}$, we have
\begin{align}
\mu(i,j) = \left\{\begin{array}{ll}
0, &\textrm{if~} i \ge \theta(j), j < \theta_{\mathrm{r}},\\
1, &\textrm{if~} i < \theta(j), j < \theta_{\mathrm{r}} \textrm{~or~} i < \theta_{\mathrm{t}}, j \ge \theta_{\mathrm{r}},\\
2, &\textrm{if~} i \ge \theta_{\mathrm{t}}, j \ge \theta_{\mathrm{r}}.
 \end{array} \right. \label{eq:mu}
\end{align}
\end{proposition}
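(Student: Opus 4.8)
The plan is to first assemble the collection of discounted-problem results into a complete description of the optimal policy $\mu_{\alpha}$ for each fixed $\alpha \in (0,1)$, and then transport that structure to the average-cost problem \eqref{prob:avgmin} through the limiting argument of Proposition \ref{prop:converge}. Concretely, I would read off the three regimes as follows. For columns $j \ge j_{\alpha}$, Lemma \ref{lemma:jalpha} (together with the $j \ge j_0$ base case in Corollary \ref{coro:i0j0}) gives a single row-threshold $i_{\alpha}$ separating action $1$ (for $i < i_{\alpha}$) from action $2$ (for $i \ge i_{\alpha}$). For columns $j < j_{\alpha}$, Lemma \ref{lemma:ij} gives a column-dependent threshold $I_{\alpha}(j)$ separating action $1$ (for $i < I_{\alpha}(j)$) from action $0$ (for $i \ge I_{\alpha}(j)$), with the monotone ordering $I_{\alpha}(1) \le \cdots \le I_{\alpha}(j_{\alpha}-1) \le i_{\alpha}$. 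Identifying $\theta_{\mathrm t} = i_{\alpha}$, $\theta_{\mathrm r} = j_{\alpha}$, and $\theta(j) = I_{\alpha}(j)$, this is exactly the shape of \eqref{eq:mu} for the discounted problem, and the chain $\theta(1) \le \cdots \le \theta(\theta_{\mathrm r}-1) \le \theta_{\mathrm t}$ is inherited verbatim.

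The remaining ordering $\theta_{\mathrm t} \le \theta_{\mathrm r}$, i.e. $i_{\alpha} \le j_{\alpha}$, I would obtain from a short consistency argument on the diagonal. By Lemma \ref{lemma:i0}, every state with $i \ge j$ --- in particular every diagonal state $(j,j)$ --- takes an action in $\{0,2\}$, hence never action $1$. But for $j \ge j_{\alpha}$ the row-threshold rule of Lemma \ref{lemma:jalpha} assigns action $1$ whenever $i < i_{\alpha}$; applying this at the diagonal point $(j,j)$ forces $j \ge i_{\alpha}$ for every $j \ge j_{\alpha}$, and taking $j = j_{\alpha}$ yields $i_{\alpha} \le j_{\alpha}$. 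This closes the structural description of $\mu_{\alpha}$ for every $\alpha$.

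Next I would pass $\alpha \to 1$. Proposition \ref{prop:converge} supplies a sequence $\beta_n \to 1$ and a stationary policy $\mu$, optimal for \eqref{prob:avgmin}, with $\mu_{\beta_n}(x) = \mu(x)$ eventually for each fixed $x$. The crucial point is that all the thresholds are \emph{uniformly bounded integers}: $I_{\alpha}(j) \le i_{\alpha} \le i_0$ and $j_{\alpha} \le j_0$, while $i_0 = \lceil \omega E_{\mathrm s}/(\alpha(1-p))\rceil$ and $j_0 = \lceil \omega(E_{\mathrm t}+E_{\mathrm s})/(\alpha(1-p))\rceil$ stay bounded as $\alpha \to 1$. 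Hence only finitely many distinct integer values can occur, and along a further subsequence of $\beta_n$ the finitely many relevant thresholds $i_{\beta_n}$, $j_{\beta_n}$, and $\{I_{\beta_n}(j)\}_{j < j_{\beta_n}}$ all stabilize to constants $\theta_{\mathrm t}, \theta_{\mathrm r}, \theta(j)$. For any fixed state $x$, choosing $n$ large enough that both the thresholds have reached their limits and $\mu_{\beta_n}(x) = \mu(x)$ shows that $\mu(x)$ obeys \eqref{eq:mu} with these constant thresholds, and the ordering inequalities, holding for every $\beta_n$, pass to the limit.

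The step I expect to be the main obstacle is precisely this limit transfer: pointwise convergence of the $\mu_{\beta_n}$ does not by itself yield a \emph{threshold} policy, since a priori the $\alpha$-dependent thresholds could drift and the limiting policy could fail to be monotone. The resolution hinges on the uniform boundedness of $i_0, j_0$ near $\alpha = 1$ (so the thresholds live in a fixed finite set) and on the fact that only the finitely many columns $j < \theta_{\mathrm r} \le j_0$ carry a nontrivial staircase threshold; this reduces the limit to stabilizing finitely many bounded integer sequences, after which a diagonal subsequence extraction finishes the argument. A secondary care point is verifying that $j_{\alpha}$ is well defined and bounded --- it is, by its definition as a maximum over indices $\le j_0$ --- so that the number of staircase columns does not grow as $\alpha \to 1$.
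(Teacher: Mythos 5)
Your proposal is correct and follows the same overall architecture as the paper --- assemble the discounted-cost structure from Corollary \ref{coro:i0j0} and Lemmas \ref{lemma:jalpha} and \ref{lemma:ij}, establish $i_{\alpha}\le j_{\alpha}$ via the diagonal (the paper does this by computing $A_{\alpha}^{0-1}(i,i)=-\omega E_{\mathrm t}<0$ directly, which is the same fact underlying your appeal to Lemma \ref{lemma:i0}), and then push the structure through Proposition \ref{prop:converge}. Where you genuinely diverge is in the step you yourself flag as the main obstacle, the limit transfer. You control it by observing that all thresholds live in the fixed finite set determined by the uniform bounds $I_{\alpha}(j)\le i_{\alpha}\le i_0$ and $j_{\alpha}\le j_0$ (with $i_0,j_0$ bounded for $\alpha$ near $1$), and then extracting a further subsequence along which the finitely many bounded integer thresholds stabilize; since the same limit policy $\mu$ is obtained along any further subsequence, the structure transfers. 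The paper instead never invokes the explicit bounds $i_0,j_0$ at this stage: it argues that the optimal average-cost policy $\mu$ must take action $2$ at some state $(i',j')$ (else the AoIR, and hence the average cost, diverges), uses pointwise convergence at that single state to force $j_{\beta_n}\le j'$ for large $n$, stabilizes the finitely many states with $j\le j'$, and then propagates to $j>j'$ using the threshold shape of each $\mu_{\beta_n}$. Your route is arguably cleaner and handles the columns $j\ge\theta_{\mathrm r}$ more directly (no separate propagation step), at the price of relying on the quantitative bounds $i_0,j_0$ being uniform near $\alpha=1$; the paper's route buys independence from those explicit constants by using optimality of $\mu$ to manufacture the needed bound. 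Both are complete; the only small point to make explicit in your write-up is the degenerate case in the definition of $j_{\alpha}$ (when no column $j\le j_0$ has $\mu_{\alpha}(i,j-1)=0$ for all $i\ge i_0$, so that $j_{\alpha}$ should be read as $1$), which you correctly note but should state rather than defer.
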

\begin{proof}
See Appendix \ref{proof:average}.
\end{proof}

\section{Finding the Optimal Thresholds} \label{sec:solve}
We have shown that the optimal policy for the average cost problem \eqref{prob:avgmin} is threshold based. Thus, given the thresholds $\theta_{\mathrm t}, \theta_{\mathrm r}$,  and $\theta(j), 1\le j < \theta_{\mathrm r}$, a Markov chain can be obtained, as depicted in Fig.~\ref{fig:markov}. Denote by $\tilde{\mathcal{A}} = \{(i,i)| 1 \le i \le \theta_{\mathrm r}\} \cup \{(i,j)| 1 \le i \le \theta_{\mathrm t}, j = \theta_{\mathrm r}+ k \theta_{\mathrm t} +i, k \ge 0\}$. It contains all the states in the Markov chain. We have the following lemma.

\begin{figure}[t]
\centering
\includegraphics[width=5in]{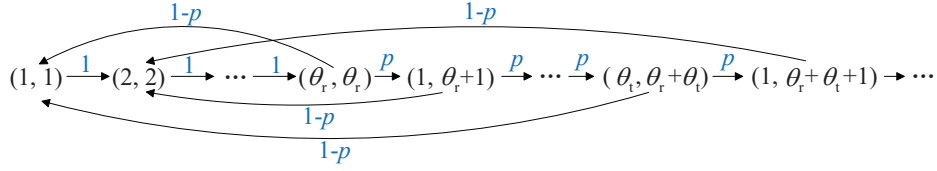}
\caption{Markov chain under the threshold optimal policy.} \label{fig:markov}
\end{figure}

\begin{lemma} \label{prop:transient}
Under the threshold policy \eqref{eq:mu}, all the states in $\mathcal{A} \setminus \tilde{\mathcal{A}}$ are transient.
\end{lemma}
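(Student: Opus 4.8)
The plan is to invoke the standard dichotomy for countable-state Markov chains: a state $x$ is transient whenever there is a state $y$ accessible from $x$ (i.e.\ $x\to y$) for which $x$ is not accessible from $y$ (i.e.\ $y\not\to x$). Concretely, I would establish two facts: (a) $\tilde{\mathcal{A}}$ is \emph{closed} under the transition kernel induced by the policy \eqref{eq:mu}, meaning every successor of a state in $\tilde{\mathcal{A}}$ again lies in $\tilde{\mathcal{A}}$; and (b) every state $x\in\mathcal{A}\setminus\tilde{\mathcal{A}}$ can reach the state $(1,1)\in\tilde{\mathcal{A}}$ along a path of positive probability. Given (a) and (b), once the chain enters $\tilde{\mathcal{A}}$ it can never leave, so no state of $\tilde{\mathcal{A}}$—in particular $(1,1)$—can reach $x$; taking $y=(1,1)$ in the dichotomy then shows every such $x$ is transient.

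For (a) I would partition $\tilde{\mathcal{A}}$ into the main-diagonal part $\{(i,i):1\le i\le\theta_{\mathrm r}\}$ and the skew-diagonal part $\{(i,\theta_{\mathrm r}+k\theta_{\mathrm t}+i):1\le i\le\theta_{\mathrm t},\,k\ge 0\}$, read off the action prescribed by \eqref{eq:mu} in each case, and verify the successors using the kernel \eqref{eq:statetrans}. The key checks are: on the main diagonal with $i<\theta_{\mathrm r}$, whichever of actions $0$ and $1$ the policy prescribes sends $(i,i)\mapsto(i+1,i+1)$ (using that $j=i$ collapses both branches of action $1$), still on the diagonal; at the diagonal top $(\theta_{\mathrm r},\theta_{\mathrm r})$ action $2$ sends it to $(1,1)$ or $(1,\theta_{\mathrm r}+1)$, the latter being the $k=0,\,i=1$ skew state; on a skew state with $i<\theta_{\mathrm t}$ action $1$ sends it to $(i+1,\theta_{\mathrm r}+k\theta_{\mathrm t}+(i+1))$ (same $k$) or to $(i+1,i+1)$ with $i+1\le\theta_{\mathrm t}\le\theta_{\mathrm r}$ (back on the diagonal); and at a skew top $i=\theta_{\mathrm t}$ action $2$ sends it to $(1,1)$ or to $(1,\theta_{\mathrm r}+(k{+}1)\theta_{\mathrm t}+1)$, the $k{+}1$ skew state. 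In each case the successor is manifestly in $\tilde{\mathcal{A}}$.

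For (b), given any $x=(i_0,j_0)\notin\tilde{\mathcal{A}}$ I would exhibit an explicit positive-probability route to $(1,1)$: repeatedly take the ``aging'' transition $(i,j)\mapsto(i+1,j+1)$, which is realized with probability $1$ whenever the policy picks action $0$ and with probability $p>0$ whenever it picks action $1$, thereby increasing both coordinates by one each step while the chain remains in the region where the action is $0$ or $1$. After at most $\max(\theta_{\mathrm t}-i_0,\theta_{\mathrm r}-j_0)$ such steps the state satisfies $i\ge\theta_{\mathrm t}$ and $j\ge\theta_{\mathrm r}$, so the policy selects action $2$; then the successful-transmission branch, of probability $1-p>0$, carries the chain to $(1,1)$. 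Since $\theta_{\mathrm t}$ and $\theta_{\mathrm r}$ are finite, this is a finite path of strictly positive probability.

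I expect the main obstacle to be the bookkeeping in step (a): one must correctly track the skew-diagonal index $k$ as it is incremented by each action-$2$ ``sense'' event and confirm that the arithmetic relation $j=\theta_{\mathrm r}+k\theta_{\mathrm t}+i$ is preserved, while carefully handling the boundary cases—the top of each skew diagonal ($i=\theta_{\mathrm t}$), the crossing of the level $j=\theta_{\mathrm r}$, and the degenerate thresholds $\theta_{\mathrm t}=1$ or $\theta_{\mathrm r}=1$—so that no successor escapes $\tilde{\mathcal{A}}$. Step (b) is comparatively routine once the transition kernel is in hand, and the final transience conclusion follows immediately from the accessibility dichotomy.
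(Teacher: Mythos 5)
Your proposal is correct, and it reaches the conclusion by a slightly different route than the paper. The paper argues directly on sample paths: from any $(i,j)\in\mathcal{A}\setminus\tilde{\mathcal{A}}$, every one-step successor is $(1,1)$, $(i+1,i+1)$, or $(N,j+1)$ with $N\in\{1,i+1\}$; the first two land in $\tilde{\mathcal{A}}$ (the latter because action $1$ forces $i<\theta_{\mathrm t}\le\theta_{\mathrm r}$), and the third strictly increases the AoIR. Iterating, every trajectory either falls into $\tilde{\mathcal{A}}$ and stays there, or has $j\to\infty$; in either case $(i,j)$ is never revisited, so the return probability is exactly $0$. You instead invoke the accessibility dichotomy ($x\to y$, $y\not\to x$ implies $x$ transient) with $y=(1,1)$, which only needs return probability strictly less than $1$; for this you must additionally exhibit a positive-probability path from every outside state to $(1,1)$, which your ``age until action $2$ fires, then succeed'' construction does correctly (using $p\in(0,1)$ so both branches have positive probability). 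Both arguments rest on the same essential fact that $\tilde{\mathcal{A}}$ is closed under the policy; the paper asserts this implicitly via the Markov chain structure, whereas you verify it by an explicit case analysis on the diagonal and skew-diagonal parts, including the index bookkeeping $j=\theta_{\mathrm r}+k\theta_{\mathrm t}+i$, which is a worthwhile addition. The trade-off is that your route is the more standard textbook criterion but requires the extra reachability step, while the paper's direct argument yields the stronger ``almost surely never returns'' conclusion with less machinery.
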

\begin{proof}
%See Appendix \ref{proof:transient}.
With all the possible actions 0, 1, and 2, the state $(i,j) \in \mathcal{A} \setminus \tilde{\mathcal{A}}$ will transit to either $(1,1)$, $(i+1, i+1)$ or $(N, j+1)$, where $N = 1$ or $i+1$. Notice that to transit to the state $(i+1, i+1)$, the action 1 is taken, which means $i < \theta_{\mathrm t} \le \theta_{\mathrm r}$ according to the threshold policy. Thus, $(i+1, i+1) \in \tilde{\mathcal{A}}$. As $(1,1) \in \tilde{\mathcal{A}}$, if $(i,j)$ transients to either $(1,1)$ or $(i+1, i+1)$, the system state will transit within $\tilde{\mathcal{A}}$ afterwards. Otherwise, the AoIR will increase by 1. Repeating the process, the state transition ends up either within $\tilde{\mathcal{A}}$, or with the AoIR going to infinity. For both cases, $(i,j)$ will never be revisited.
\end{proof}

Based on Lemma \ref{prop:transient} and the fact that the stationary probabilities of transient states are all zero, to calculate the optimal average cost, we only need to calculate the stationary distribution for the states in $\tilde{\mathcal{A}}$. Thus, we have the following result.

\begin{theorem}\label{thm:closed}
The average AoIR and the average energy consumption for the problem \eqref{prob:avgmin} with the threshold policy \eqref{eq:mu} can be expressed as
\begin{align}
\bar{\Delta} &= \frac{\theta_{\mathrm t}}{2} + \frac{\theta_{\mathrm r}(\theta_{\mathrm r}-\theta_{\mathrm t})(1-p^{\theta_{\mathrm t}})}{2(\theta_{\mathrm r}(1-p^{\theta_{\mathrm t}})+\theta_{\mathrm t}p^{\theta_{\mathrm t}})} + \frac{1}{1-p}, \label{eq:aoiclosed}\\
\bar{E} &= \frac{1}{\theta_{\mathrm r}(1-p^{\theta_{\mathrm t}})+\theta_{\mathrm t}p^{\theta_{\mathrm t}}} \left(\frac{1-p^{\theta_{\mathrm t}}}{1-p} E_{\mathrm t} + E_{\mathrm s}\right). \label{eq:eclosed}
\end{align}
\end{theorem}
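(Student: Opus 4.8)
The plan is to evaluate both long-run averages directly from the stationary distribution of the Markov chain induced by the policy \eqref{eq:mu}. By Lemma~\ref{prop:transient} every state outside $\tilde{\mathcal{A}}$ is transient and hence carries no stationary mass, so I only need the stationary distribution restricted to $\tilde{\mathcal{A}}$. First I would split $\tilde{\mathcal{A}}$ into its two natural families: the \emph{diagonal} states $(m,m)$, $1\le m\le\theta_{\mathrm r}$, along which the chain climbs under the sleep action, and the \emph{retransmission} states $(i,\theta_{\mathrm r}+k\theta_{\mathrm t}+i)$, $1\le i\le\theta_{\mathrm t}$, $k\ge 0$, visited after a failed transmission once AoIR~$\ge\theta_{\mathrm r}$. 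Reading the action from \eqref{eq:mu} and the transition law \eqref{eq:statetrans}, I would record the dynamics: $(m,m)$ with $m<\theta_{\mathrm r}$ moves deterministically to $(m+1,m+1)$; both $(\theta_{\mathrm r},\theta_{\mathrm r})$ and every retransmission state with $i=\theta_{\mathrm t}$ take action~$2$, returning to $(1,1)$ with probability $1-p$ and (re-)entering the retransmission family with probability $p$; and each retransmission state with $i<\theta_{\mathrm t}$ takes action~$1$, jumping to the diagonal state $(i+1,i+1)$ with probability $1-p$ and advancing one step in the family with probability $p$. The chain restricted to $\tilde{\mathcal{A}}$ is irreducible, and the geometric decay found below makes it positive recurrent, so the ergodic theorem lets me replace time averages by stationary expectations.

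Next I would write the global balance equations and solve them recursively. Writing $\rho_m$ for the stationary probability of $(m,m)$ and $\sigma_{k,i}$ for that of $(i,\theta_{\mathrm r}+k\theta_{\mathrm t}+i)$, the retransmission dynamics give a pure geometric cascade, $\sigma_{k,i+1}=p\,\sigma_{k,i}$ and $\sigma_{k+1,1}=p\,\sigma_{k,\theta_{\mathrm t}}$ with entry $\sigma_{0,1}=p\,\rho_{\theta_{\mathrm r}}$, hence $\sigma_{k,i}=p^{\,k\theta_{\mathrm t}+i}\rho_{\theta_{\mathrm r}}$. Feeding the action-$1$ successes into the diagonal balance equations gives $\rho_m=\tfrac{1-p^{m}}{1-p^{\theta_{\mathrm t}}}\rho_{\theta_{\mathrm r}}$ for $m\le\theta_{\mathrm t}$ and $\rho_m=\rho_{\theta_{\mathrm r}}$ for $\theta_{\mathrm t}\le m\le\theta_{\mathrm r}$, the two forms agreeing at $m=\theta_{\mathrm t}$. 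Summing the resulting geometric series and imposing $\sum_m\rho_m+\sum_{k,i}\sigma_{k,i}=1$ pins down $\rho_{\theta_{\mathrm r}}=\tfrac{1-p^{\theta_{\mathrm t}}}{\theta_{\mathrm r}(1-p^{\theta_{\mathrm t}})+\theta_{\mathrm t}p^{\theta_{\mathrm t}}}$, precisely the reciprocal of the common denominator in \eqref{eq:aoiclosed}--\eqref{eq:eclosed}.

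The energy formula then follows quickly: $\bar E=E_{\mathrm t}\Pr(\text{action }1\text{ or }2)+E_{\mathrm s}\Pr(\text{action }2)$, where the sensing probability is $\rho_{\theta_{\mathrm r}}+\sum_k\sigma_{k,\theta_{\mathrm t}}=\rho_{\theta_{\mathrm r}}/(1-p^{\theta_{\mathrm t}})$ and the transmission probability collapses to $\rho_{\theta_{\mathrm r}}/(1-p)$; substituting $\rho_{\theta_{\mathrm r}}$ yields \eqref{eq:eclosed}. For the age I would use $\bar\Delta=\tfrac12+\sum_{m}m\,\rho_m+\sum_{k,i}(\theta_{\mathrm r}+k\theta_{\mathrm t}+i)\sigma_{k,i}$, the $\tfrac12$ being the trapezoid correction retained from the definition of $\bar\Delta$. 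I expect this last sum to be the main obstacle: the retransmission family is infinite with linearly growing AoIR ($j=\theta_{\mathrm r}+k\theta_{\mathrm t}+i$), so its contribution is an arithmetic--geometric sum carrying linear weights in both $k$ and $i$. The key is to evaluate these sums carefully and track the cancellations --- the $\theta_{\mathrm t}$-dependent cross terms cancel, and the $1/(1-p)$-type pieces recombine into $\tfrac{p}{1-p}\bigl(\theta_{\mathrm r}(1-p^{\theta_{\mathrm t}})+\theta_{\mathrm t}p^{\theta_{\mathrm t}}\bigr)$, which divides out against $\rho_{\theta_{\mathrm r}}$ --- leaving $\bar\Delta=\tfrac{1}{1-p}+\tfrac{\theta_{\mathrm t}^{2}p^{\theta_{\mathrm t}}+\theta_{\mathrm r}^{2}(1-p^{\theta_{\mathrm t}})}{2(\theta_{\mathrm r}(1-p^{\theta_{\mathrm t}})+\theta_{\mathrm t}p^{\theta_{\mathrm t}})}$. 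A final identity, $\theta_{\mathrm t}^{2}p^{\theta_{\mathrm t}}+\theta_{\mathrm r}^{2}(1-p^{\theta_{\mathrm t}})=\theta_{\mathrm t}\bigl(\theta_{\mathrm r}(1-p^{\theta_{\mathrm t}})+\theta_{\mathrm t}p^{\theta_{\mathrm t}}\bigr)+\theta_{\mathrm r}(\theta_{\mathrm r}-\theta_{\mathrm t})(1-p^{\theta_{\mathrm t}})$, recasts this into the stated form \eqref{eq:aoiclosed}.
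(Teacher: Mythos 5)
Your proposal is correct and follows essentially the same route as the paper: restrict to the recurrent set $\tilde{\mathcal{A}}$ (the paper re-indexes its states by the AoIR value $j$ and solves $\pi=\pi\mathbf{P}$, which is exactly your two-family balance computation), obtain the same stationary distribution with normalizing constant $\frac{1-p^{\theta_{\mathrm t}}}{\theta_{\mathrm r}(1-p^{\theta_{\mathrm t}})+\theta_{\mathrm t}p^{\theta_{\mathrm t}}}$, and evaluate $\bar{\Delta}$ and $\bar{E}$ as stationary expectations. Your intermediate expressions (the sensing/transmission probabilities $\rho_{\theta_{\mathrm r}}/(1-p^{\theta_{\mathrm t}})$ and $\rho_{\theta_{\mathrm r}}/(1-p)$, and the final algebraic identity for $\bar{\Delta}$) all check out.
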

\begin{proof}
See Appendix \ref{proof:closed}.
\end{proof}

Theorem \ref{thm:closed} shows that the average cost is only relative to $\theta_{\mathrm t}$ and $\theta_{\mathrm r}$. The reason is that the state set related to the thresholds $\theta(j)$ can be expressed as $\hat{\mathcal{A}} = \{(i,j)| 1 \le i < \theta_{\mathrm t}, 1 \le j < \theta_{\mathrm r}, i < j\}$. Since $\hat{\mathcal{A}} \subset \mathcal{A} \setminus \tilde{\mathcal{A}}$, all the states in $\hat{\mathcal{A}}$ are transient, which do not contribute to the infinite horizon average cost. Thus, the values of $\theta(j)$ in the optimal policy can be arbitrarily set. For simplicity, a \emph{two-thresholds} optimal policy can be obtained by setting $\theta(1) = \cdots = \theta(\theta_{\mathrm r}-1) = 1$, as summarized in the following theorem.
{\begin{theorem} \label{thm:two}
There exists a two-thresholds optimal stationary policy for the average cost problem \eqref{prob:avgmin} as follows
\begin{align}
\mu(i,j) = \left\{\begin{array}{ll}
0, &\textrm{if~} j < \theta_{\mathrm{r}},\\
1, &\textrm{if~} i < \theta_{\mathrm{t}}, j \ge \theta_{\mathrm{r}},\\
2, &\textrm{if~} i \ge \theta_{\mathrm{t}}, j \ge \theta_{\mathrm{r}}.
 \end{array} \right. \label{eq:mutwo}
\end{align}
\end{theorem}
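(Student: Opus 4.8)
The plan is to derive Theorem \ref{thm:two} directly from Proposition \ref{thm:average}, Lemma \ref{prop:transient} and Theorem \ref{thm:closed}, by recognizing the two-thresholds policy \eqref{eq:mutwo} as precisely the special case of the general threshold policy \eqref{eq:mu} obtained by setting $\theta(j) = 1$ for every $1 \le j < \theta_{\mathrm r}$. First I would check that this assignment is admissible: the ordering constraint $\theta(1) \le \cdots \le \theta(\theta_{\mathrm r}-1) \le \theta_{\mathrm t}$ of Proposition \ref{thm:average} reduces to $1 \le \cdots \le 1 \le \theta_{\mathrm t}$, which holds since $\theta_{\mathrm t} \ge 1$. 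Substituting $\theta(j)=1$ into \eqref{eq:mu} then collapses it to \eqref{eq:mutwo}: because $i \ge 1$ for every state $(i,j) \in \mathcal{A}$, the condition $i \ge \theta(j)$ holds throughout the region $j < \theta_{\mathrm r}$, forcing action $0$ there, while the clause $i < \theta(j)$ defining part of the action-$1$ region becomes vacuous, leaving action $1$ only for $i < \theta_{\mathrm t}, j \ge \theta_{\mathrm r}$.

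It remains to argue that this reassignment preserves the optimal average cost. The decisive observation, already made in the discussion following Theorem \ref{thm:closed}, is that the thresholds $\theta(j)$ influence the chosen action only on $\hat{\mathcal{A}} = \{(i,j) : 1 \le i < \theta_{\mathrm t}, 1 \le j < \theta_{\mathrm r}, i < j\}$; on the diagonal the action is confined to $\{0,2\}$ by Corollary \ref{coro:i0j0}, and for $j \ge \theta_{\mathrm r}$ the action depends solely on $\theta_{\mathrm t}$ and $\theta_{\mathrm r}$. Since $\hat{\mathcal{A}} \subset \mathcal{A} \setminus \tilde{\mathcal{A}}$, Lemma \ref{prop:transient} guarantees that every state of $\hat{\mathcal{A}}$ is transient and thus carries zero stationary probability, so the infinite-horizon average cost — determined by the stationary distribution over $\tilde{\mathcal{A}}$ — is insensitive to the values of $\theta(j)$. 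This is exactly what Theorem \ref{thm:closed} makes quantitative, its expressions \eqref{eq:aoiclosed} and \eqref{eq:eclosed} involving only $\theta_{\mathrm t}$ and $\theta_{\mathrm r}$. Applying Theorem \ref{thm:closed} to both the optimal policy of Proposition \ref{thm:average} and to \eqref{eq:mutwo}, which share the same $\theta_{\mathrm t}, \theta_{\mathrm r}$, yields identical average AoIR and energy, hence identical weighted cost; since the former is optimal, \eqref{eq:mutwo} is optimal as well.

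The step I expect to require the most care is confirming that the transient/recurrent decomposition of Lemma \ref{prop:transient} is genuinely unchanged when $\theta(j)$ is set to $1$, so that Theorem \ref{thm:closed} may legitimately be invoked for \eqref{eq:mutwo}. Here I would verify reachability explicitly: under $\theta(j)=1$ the sleep action moves any state with $j < \theta_{\mathrm r}$ along $(i,j) \mapsto (i+1, j+1)$ until the receiver age reaches $\theta_{\mathrm r}$, after which the $(\theta_{\mathrm t}, \theta_{\mathrm r})$-governed dynamics — identical to those already analyzed in Lemma \ref{prop:transient} — drive the chain into $\tilde{\mathcal{A}}$. Because the transitions within $\tilde{\mathcal{A}}$ and the exits from $\mathcal{A} \setminus \tilde{\mathcal{A}}$ are fixed by $\theta_{\mathrm t}$ and $\theta_{\mathrm r}$ alone, the recurrent class remains $\tilde{\mathcal{A}}$ and the argument of Lemma \ref{prop:transient} applies verbatim to \eqref{eq:mutwo}. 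With this confirmed, the optimality of the two-thresholds policy follows, completing the proof.
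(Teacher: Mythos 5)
Your proposal is correct and follows essentially the same route as the paper, which justifies Theorem \ref{thm:two} by the one-paragraph argument preceding it: the states governed by $\theta(j)$ lie in $\hat{\mathcal{A}} \subset \mathcal{A}\setminus\tilde{\mathcal{A}}$, are transient by Lemma \ref{prop:transient}, and hence do not affect the average cost given by Theorem \ref{thm:closed}, so one may set $\theta(1)=\cdots=\theta(\theta_{\mathrm r}-1)=1$. Your additional check that the recurrent class $\tilde{\mathcal{A}}$ is unchanged under this reassignment is a worthwhile explicit verification of a point the paper leaves implicit, but it does not constitute a different approach.
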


It turns out that the two-thresholds policy is quite intuitive and easy to be implemented. During transmission, it follows the truncated ARQ protocol~\cite{gong2018energy}. That is, a packet is retransmitted after a channel failure until either a successful transmission or the number of retransmissions reaches $\theta_{\mathrm t}-1$. In the latter case, a new packet is generated and is transmitted following the truncated ARQ protocol as well. The difference is that after a successful transmission, the sensor turns to sleep mode until the AoIR increases to $\theta_{\mathrm r}$. Then, another packet is generated and the truncated ARQ protocol restarts.
}

Furthermore, the optimal threshold $\theta_{\mathrm r}$ can be expressed in closed-form in terms of $\theta_{\mathrm t}$.
\begin{theorem} \label{coro:thetaropt}
The optimal threshold $\theta_{\mathrm r}$ for the problem \eqref{prob:avgmin} satisfies
\begin{align}
\theta_{\mathrm r} \in \{\max\{\theta_{\mathrm t}, \lceil \sqrt{A^2 + \theta_{\mathrm t} A + 2\omega B}-A \rceil\}, \max\{\theta_{\mathrm t}, \lfloor \sqrt{A^2 + \theta_{\mathrm t} A + 2\omega B}-A \rfloor \}\}, \label{eq:thetaropt}
\end{align}
where $A = \frac{\theta_{\mathrm t}p^{\theta_{\mathrm t}}}{1-p^{\theta_{\mathrm t}}}, B = \frac{1}{1-p}E_{\mathrm t} + \frac{1}{1-p^{\theta_{\mathrm t}}}E_{\mathrm s}$.
\end{theorem}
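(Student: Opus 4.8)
The plan is to treat the weighted-sum objective $\bar{\Delta} + \omega \bar{E}$ as a function of $\theta_{\mathrm r}$ with $\theta_{\mathrm t}$ held fixed, relax $\theta_{\mathrm r}$ to a continuous variable, locate its unique interior minimizer in closed form, and then round to the nearest feasible integer. First I would substitute the expressions from Theorem \ref{thm:closed} into the objective. Writing $a = 1-p^{\theta_{\mathrm t}} > 0$ (positive since $p \in (0,1)$) and $b = \theta_{\mathrm t}p^{\theta_{\mathrm t}}$, the common denominator becomes $a\theta_{\mathrm r} + b$, and the terms $\frac{\theta_{\mathrm t}}{2}$ and $\frac{1}{1-p}$ are constant in $\theta_{\mathrm r}$ and may be dropped. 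Using the identity $\frac{a}{1-p}E_{\mathrm t} + E_{\mathrm s} = aB$ to introduce $B$, the $\theta_{\mathrm r}$-dependent part collapses, after factoring out $a/2 > 0$, to a positive multiple of
\[
h(x) = \frac{x^2 - \theta_{\mathrm t}x + 2\omega B}{ax+b}.
\]
Thus minimizing the objective over $\theta_{\mathrm r}$ is equivalent to minimizing $h$ over integers $\theta_{\mathrm r} \ge \theta_{\mathrm t}$.

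Second, I would differentiate $h$ with respect to the continuous variable $x$. By the quotient rule, the numerator of $h'(x)$ is the upward parabola
\[
N(x) = ax^2 + 2bx - (b\theta_{\mathrm t} + 2a\omega B),
\]
whose constant term is negative, so its two real roots have opposite signs. Identifying the unique positive root and using $b/a = A$ gives exactly
\[
x^* = \sqrt{A^2 + \theta_{\mathrm t}A + 2\omega B} - A,
\]
the quantity appearing in the statement. Since $(ax+b)^2 > 0$ and $N$ changes sign from negative to positive at $x^*$, the function $h$ is strictly decreasing on $(0, x^*)$ and strictly increasing on $(x^*, \infty)$; that is, $h$ is unimodal with a single minimum at $x^*$.

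Third, I would restore integrality together with the feasibility constraint $\theta_{\mathrm r} \ge \theta_{\mathrm t}$, which comes from the ordering $\theta_{\mathrm t} \le \theta_{\mathrm r}$ established in Proposition \ref{thm:average}. By unimodality, the integer minimizer over $\{\theta_{\mathrm t}, \theta_{\mathrm t}+1, \dots\}$ is $\lfloor x^* \rfloor$ or $\lceil x^* \rceil$ whenever $x^* \ge \theta_{\mathrm t}$; if instead $x^* < \theta_{\mathrm t}$, then $h$ is increasing on $[\theta_{\mathrm t}, \infty)$ and the minimizer is $\theta_{\mathrm t}$ itself. Both cases are captured simultaneously by $\theta_{\mathrm r} \in \{\max\{\theta_{\mathrm t}, \lceil x^* \rceil\}, \max\{\theta_{\mathrm t}, \lfloor x^* \rfloor\}\}$, which is precisely \eqref{eq:thetaropt}.

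The hard part will be the bookkeeping in the second step: carrying out the quotient-rule differentiation, verifying that the numerator simplifies to the stated parabola $N(x)$, and cleanly recognizing the substitutions $A = b/a$ and $\omega a B$ so that the positive root matches the closed form in the statement. Once $N(x)$ is in hand, the unimodality argument follows essentially for free from its sign pattern, and the integer-rounding step is the standard argument for a unimodal function restricted to a one-sided integer lattice.
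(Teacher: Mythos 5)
Your proposal is correct and follows essentially the same route as the paper: both reduce the $\theta_{\mathrm r}$-dependent part of $\bar{\Delta}+\omega\bar{E}$ to a ratio of a quadratic over a linear term, differentiate in the continuous relaxation to find the unique positive stationary point $\sqrt{A^2+\theta_{\mathrm t}A+2\omega B}-A$, and conclude by unimodality plus integer rounding subject to $\theta_{\mathrm r}\ge\theta_{\mathrm t}$. Your normalization via $h(x)$ and the explicit sign analysis of $N(x)$ are just a slightly more detailed bookkeeping of the paper's one-line computation of $\partial F/\partial\theta_{\mathrm r}$.
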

\begin{proof}
Denote $F(\theta_{\mathrm t}, \theta_{\mathrm r}) = \bar{\Delta} + \omega \bar{E}.$ We have
$$\frac{\partial F}{\partial \theta_{\mathrm r}} = \frac{\theta_{\mathrm r}^2 + 2A\theta_{\mathrm r} - \theta_{\mathrm t}A - 2\omega B}{2(\theta_{\mathrm r} + A)^2}.$$
The positive zero point is $\theta_{\mathrm r}^* = \sqrt{A^2 + \theta_{\mathrm t} A + 2\omega B}-A$, and $\frac{\partial F}{\partial \theta_{\mathrm r}} < 0$ for $0 < \theta_{\mathrm r} < \theta_{\mathrm r}^*$, $\frac{\partial F}{\partial \theta_{\mathrm r}} > 0$ for $\theta_{\mathrm r} > \theta_{\mathrm r}^*$. Therefore, to minimize $F$ with $\theta_{\mathrm r} \in \{1, 2, \cdots\}$ and $\theta_{\mathrm r} \ge \theta_{\mathrm t}$, we have \eqref{eq:thetaropt}.
\end{proof}

{The optimal thresholds are determined by the energy consumption model parameters $E_{\mathrm s}$ and $E_{\mathrm t}$, the channel parameter $p$, and the available energy condition implicitly indicated by $\omega$. Different sensors may have different thresholds, although the threshold structure is the same.} Notice that \cite[Lemma 2]{ceran2019average} is a special case of Theorem \ref{coro:thetaropt} when $E_{\mathrm s} = 0$. In this case, the action 2 is superior to the action 1 as the former possibly attains a lower AoIR without additional energy cost. Therefore, $\theta_{\mathrm t} = 1$, i.e., only the actions 0 and 2 are possibly taken.

Based on Theorems \ref{thm:closed} and \ref{coro:thetaropt}, the optimal thresholds $\theta_{\mathrm r}$ and $\theta_{\mathrm t}$ to minimize $\bar{\Delta} + \omega \bar{E}$ can be found by a line search over $\theta_{\mathrm t}$ and a comparison for the two values of $\theta_{\mathrm r}$ as in \eqref{eq:thetaropt}. Furthermore, when $\theta_{\mathrm t} \rightarrow \infty$, we have $p^{\theta_{\mathrm t}} \rightarrow 0$, and hence, $\sqrt{A^2 + \theta_{\mathrm t} A + 2\omega B}-A \rightarrow \sqrt{2\omega ((1-p)^{-1} E_{\mathrm t} + E_{\mathrm s})}$. Thus, there exists a constant $C > \sqrt{2\omega ((1-p)^{-1} E_{\mathrm t} + E_{\mathrm s})}$ such that when $\theta_{\mathrm t} \ge C$, we have $\theta_{\mathrm r} = \theta_{\mathrm t}$ and $\bar{\Delta} + \omega \bar{E} \approx \frac{1}{1-p} + \frac{\theta_{\mathrm t}}{2} + \omega \frac{(1-p)^{-1}E_{\mathrm t} + E_{\mathrm s}}{\theta_{\mathrm t}}$. As the average cost is non-decreasing over $\theta_{\mathrm t} \ge C$, the line search in the range $\theta_{\mathrm t} \le C$ is sufficient. In practice, when $p < 0.5$, $C \ge 50$ is sufficient for line search as $p^{50} < 10^{-16}$.

\section{Numerical Results} \label{sec:simulation}
In this section, numerical results are provided to verify our analytical conclusions. {Firstly, Fig.~\ref{fig:OptP} shows the optimal policies for the average cost problem \eqref{prob:avgmin} with different parameters. In Fig.~\ref{fig:OptP}(a), with $\omega = 2$, the optimal thresholds are $\theta_{\mathrm t} = 1, \theta_{\mathrm r} = 3$. With $\omega = 15$ as in Fig.~\ref{fig:OptP}(b), we have $\theta_{\mathrm t} = 3, \theta_{\mathrm r} = 8$. The reason is that the weighting factor $\omega$ indicates the relative importance of the energy consumption over the AoI. As $\omega$ increases, the optimal policy tends to be ``lazy" to save energy. Therefore, larger thresholds are adopted to have more chance to sleep. Comparing Fig.~\ref{fig:OptP}(b) with Fig.~\ref{fig:OptP}(c), we can see that as $E_{\mathrm t}$ increases, $\theta_{\mathrm t}$ decreases but $\theta_{\mathrm r}$ increases. The increase of $\theta_{\mathrm r}$ is because the sensor tends to sleep to save energy as the relative importance of energy increases. The decrease of $\theta_{\mathrm t}$ is because when the sensor should transmit, transmitting a new packet is more preferred than retransmission as the ratio between the transmit energy and the sensing energy increases.}

\begin{figure}
\centering
\includegraphics[width=6.0in]{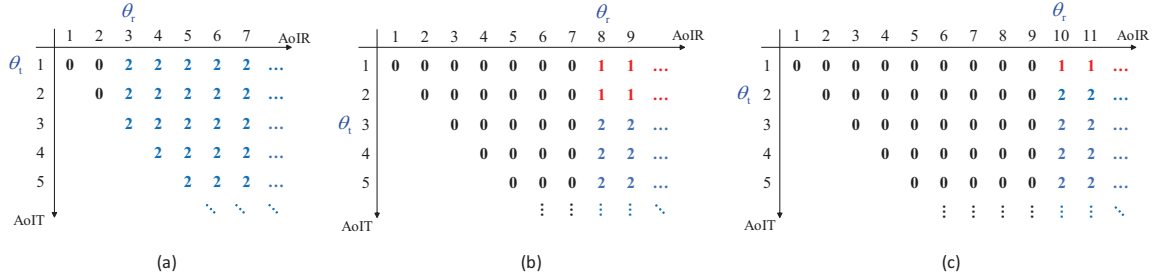}
\caption{The optimal policies for the average cost problem \eqref{prob:avgmin} with different parameters. (a) $p = 0.2, E_{\mathrm t} = E_{\mathrm s} = 1, \omega = 2$; (b) $p = 0.2, E_{\mathrm t} = E_{\mathrm s} = 1, \omega = 15$; (c) $p = 0.2, E_{\mathrm t} = 2, E_{\mathrm s} = 1, \omega = 15$.} \label{fig:OptP}
\end{figure}

{The minimum weighted sum cost performance is depicted in Fig.~\ref{fig:optcost}. Firstly, it is shown that the weighted sum cost is an increasing and concave function of the weighting factor. Secondly, the weighted sum cost increases as $p$ increases which causes the increase of transmission failures. In this case, both the average AoI and the average energy consumption may be larger due to more retransmissions. Finally, when $E_{\mathrm s}$ increases, the weighted sum cost increases as well because the total energy consumption increases.}

\begin{figure}
\centering
\includegraphics[width=3.4in]{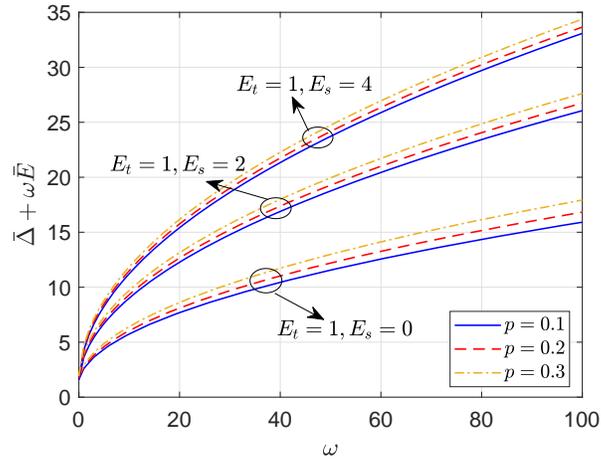}
\caption{The minimum weighted sum $\bar{\Delta} + \omega \bar{E}$ versus the weighting factor $\omega$.} \label{fig:optcost}
\end{figure}

The impacts of the channel parameter $p$ and the energy model parameters $E_{\mathrm s}$ and $E_{\mathrm t}$ on the energy-age tradeoff are shown in Figs.~\ref{fig:tradeoffp} and \ref{fig:tradeoffe}, respectively. In these figures, we compare our proposed two-thresholds policy with the single-threshold policy proposed in \cite{ceran2019average} and the truncated ARQ policy proposed in \cite{gong2018energy}. In the single-threshold policy, the sensor always senses and transmits a new packet when the AoIR exceeds the optimal threshold, and sleeps otherwise. Retransmission action is not included. In the truncated ARQ policy, the sensor transmits the same packet until either it is successfully received or the number of retransmissions reaches a threshold $M$. Then, a new packet is generated and transmitted. Thus, sleep mode is not considered.

{In Fig.~\ref{fig:tradeoffp}, we set $E_{\mathrm t} = E_{\mathrm s} = 1$.} With the increase of $p$, the tradeoff curve shifts towards upper right as the increase of channel failures causes both larger age and larger energy consumption. We also find that our policy performs better than the single-threshold policy. The reason is that the latter does not take retransmission action which can reduce energy cost when the sensing energy is large. In addition, the tradeoff range of our policy is far more wider than the truncated ARQ policy. This is because the truncated ARQ policy can only tradeoff age with energy by tuning $M$. While as sleep action is introduced in our policy, the tradeoff is more flexible.

\begin{figure}
\centering
\includegraphics[width=3.4in]{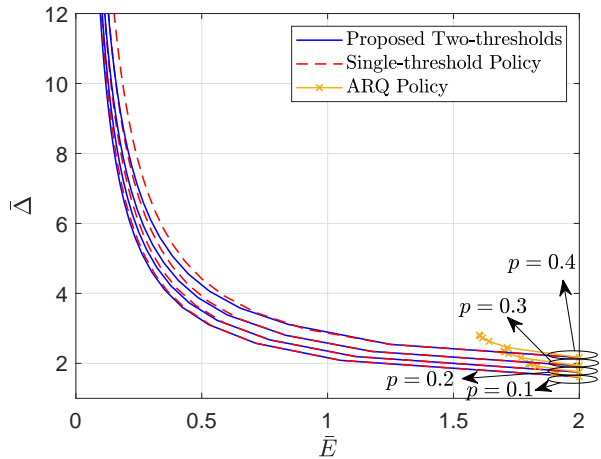}
\caption{The energy-age tradeoff curve for different values of $p$. $E_{\mathrm t} = E_{\mathrm s} = 1$.} \label{fig:tradeoffp}
\end{figure}

{In Fig.~\ref{fig:tradeoffe}, we set $p = 0.3$ and the total energy $E_{\mathrm t} + E_{\mathrm s} = 2$} to see the impact of sensing energy and transmit energy. It can be found that with the increase of the ratio $E_{\mathrm s}/E_{\mathrm t}$, the tradeoff curve is shifted towards left lower, meaning that our policy is more effective when the sensing energy dominates the transmit energy. The single-threshold policy with any ratio $E_{\mathrm s}/E_{\mathrm t}$ performs the same since the sensing action and the transmit action are bonded together. It performs the same as our policy with $E_{\mathrm t} = 0, E_{\mathrm s} = 2$, which validates the conclusion presented after Theorem \ref{coro:thetaropt}. Finally, similar to Fig.~\ref{fig:tradeoffp}, the truncated ARQ policy results in a narrow tradeoff range.

\begin{figure}
\centering
\includegraphics[width=3.4in]{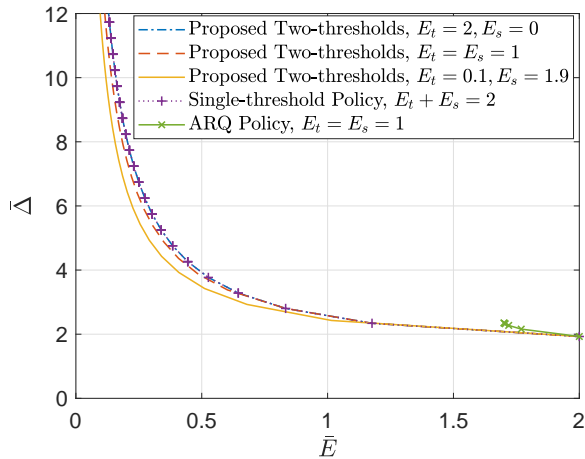}
\caption{The energy-age tradeoff curve for different energy models. $p = 0.3$.} \label{fig:tradeoffe}
\end{figure}

\section{Conclusion} \label{sec:conclusion}
In this paper, the tradeoff between AoI and energy consumption over an error-prone channel considering sleep and retransmission has been studied. We proved that there exists a threshold optimal stationary policy for the average cost minimization problem. In addition, the minimum cost only rely on two thresholds $\theta_{\mathrm t}$ and $\theta_{\mathrm r}$, and $\theta_{\mathrm r}$ can be expressed in closed-form in terms of $\theta_{\mathrm t}$. Thus, the optimal policy can be easily obtained by a line search over $\theta_{\mathrm t}$, and a two-thresholds optimal policy can be obtained. Then, we numerically illustrate the tradeoff between the average AoI and the average energy consumption. Compared with the truncated ARQ protocol, our optimal policy attains a tradeoff curve towards left lower, and the tradeoff range is substantially expanded. { Future work can exploit the practical channel correlations by introducing channel training, and consider transmit power control and data rate adaptation.}

\appendix

\subsection{Proof of Proposition \ref{prop:discBellman}} \label{proof:discBellman}

Due to the optimality of the DP algorithm for a finite horizon problem, it can be verified by induction that $J_{n}(x)$ is the optimal cost for the $n$-stage discounted problem with initial state $x$, per-stage cost $g$ and terminal cost 0, i.e.,
\begin{align*}
J_{n}(x) = \min_{\pi_n} J_{\pi_n}^{(n)}(x) = \min_{\pi_n} \mathbb{E}\left( \sum_{k=0}^{n-1} \alpha^k g(x_k, \mu_k(x_k))\Bigg| x_0 = x\right),
\end{align*}
where $\pi_n = \{\mu_0, \cdots, \mu_{n-1}\}$.

Secondly, we follow the lines of proof of \cite[Prop.~1.2.1]{bertsekas2005dynamic} to show the convergence of $J_{n}(x)$. For every positive integer $n$, initial state $x_0 = (i,j) \in \mathcal{A}$ and policy $\pi = \{\mu_0, \mu_1, \cdots\}$, consider the ``tail" cost after stage $n$, we have
\begin{align}
\left| J_{\alpha,\pi}(x_0) - J_{\pi_n}^{(n)}(x_0) \right| &= \left|\lim_{K\rightarrow \infty} \mathbb{E}\left( \sum_{k=n}^{K-1} \alpha^k g(x_k, \mu_k(x_k)) \right)\right| \nonumber\\
&\le \lim_{K\rightarrow \infty} \sum_{k=n}^{K-1} \alpha^k \left(j + k + \omega(E_{\mathrm s} + E_{\mathrm t})\right) = \left(M(x_0) + n\right) \frac{\alpha^n}{1-\alpha}, \label{eq:tail}
\end{align}
where $\pi_n$ equals to the first $n$ stage policy of $\pi$, $M(x_0) = j + \omega(E_{\mathrm s} + E_{\mathrm t})+\frac{\alpha}{1-\alpha}$, and the inequality holds as the AoIR grows at most linearly versus $k$. Therefore, we have
\begin{align*}
J_{\alpha,\pi}(x_0) - \left(M(x_0) + n\right) \frac{\alpha^n}{1-\alpha} \le J_{\pi_n}^{(n)}(x_0) \le J_{\alpha,\pi}(x_0) + \left(M(x_0) + n\right) \frac{\alpha^n}{1-\alpha}.
\end{align*}
By taking the minimum over $\pi_n$ and $\pi$, we have
\begin{align*}
J_{\alpha}(x_0) - \left(M(x_0) + n\right) \frac{\alpha^n}{1-\alpha} \le J_{n}(x_0) \le J_{\alpha}(x_0) + \left(M(x_0) + n\right) \frac{\alpha^n}{1-\alpha}.
\end{align*}
By taking $n \rightarrow \infty$, $\left(M(x_0) + n\right) \frac{\alpha^n}{1-\alpha} \rightarrow 0$. Therefore, we have $J_{\alpha}(x_0) \le \lim_{n \rightarrow \infty}J_{n}(x_0) \le J_{\alpha}(x_0)$, which means
$J_{\alpha}(x_0) = \lim_{n \rightarrow \infty} J_{n}(x_0).$

Then, following the proofs of \cite[Prop.~1.2.2]{bertsekas2005dynamic} and \cite[Prop.~1.2.3]{bertsekas2005dynamic}, we can show that the optimal cost $J_{\alpha}(x_0)$ satisfies the Bellman's Equation \eqref{eq:bellman}, and the optimal solution $\mu_{\alpha}$ for \eqref{eq:bellman} forms the optimal stationary policy.

\subsection{Proof of Lemma \ref{lemma:monoton}} \label{proof:monoton}
Firstly, we prove \eqref{eq:nmonoj} by induction.

(a) It can be easily obtained that $J_{1}(i,j) = j, \forall (i,j) \in \mathcal{A}$. Therefore, \eqref{eq:nmonoj} holds for $n = 1$.

(b) Suppose $J_{n}(i, j+1) - J_{n}(i, j) \ge 1$ holds for $n \le k$. Thus, $J_{k}(i, j+1) \ge J_{k}(i, j)$ for any $1 \le i \le j$. We consider $n = k+1$. Denote
$u_{k+1}^* = \arg \min_{u\in\mathcal{U}} \left[g(x, u) + \alpha \sum_{x'\in \mathcal{A}}p_{x\rightarrow x'}(u)J_{k}(x')\right]$.
If $u_{k+1}^* = (0,0)$, we have
\begin{align*}
J_{k+1}(i, j+1) = j+1 + \alpha J_{k}(i + 1, j + 2) \ge j + 1 +  \alpha J_{k}(i + 1, j + 1) \ge 1+ J_{k+1}(i, j),
\end{align*}
where the second inequality is due to the optimality of $J_{k+1}(i, j)$ over all possible actions. Similarly, we can prove that the inequality $J_{k+1}(i, j+1) - J_{k+1}(i, j) \ge 1$ holds for any $u_{k+1}^* \in \mathcal{U}$. Therefore, \eqref{eq:nmonoj} holds for any $n \ge 1$. %Then, by letting $n$ go to infinity, we have \eqref{eq:amono2}.

The inequality \eqref{eq:nmonoi} can be proved in the same way. This completes the proof.

\subsection{Proof of Proposition \ref{prop:converge}} \label{proof:converge}
The convergence of $\mu_{\beta_m}$ and the existence of the stationary policy $\mu$ is given by \cite[Lemma]{sennott1989average}. According to \cite[Theorem]{sennott1989average}, to prove that $\mu$ is the optimal policy for the average cost minimization problem, the following conditions should be satisfied:

1) $J_{\alpha}(x)$ is finite for every $x$ and $\alpha$.

2) There exists a nonnegative $N$ such that $h_{\alpha}(x) = J_{\alpha}(x) - J_{\alpha}(\hat x) \ge -N$ for all $x$ and $\alpha$, where $\hat x$ is any fixed state.

3) There exists nonnegative $M_x$, such that $h_{\alpha}(x) \le M_x$ for every $x$ and $\alpha$. Moreover, for every $x$, there exists an action $u$ such that $\sum_{x'}p_{x \rightarrow x'}(u)M_{x'} < \infty$.

We verify the above conditions one by one.

\emph{Validity of condition 1):} Denote by $J_{\alpha, \pi}(x)$ the discounted cost under the policy $\pi$. Due to the optimality of $J_{\alpha}(x)$, we have
$J_{\alpha}(x) \le J_{\alpha, \pi}(x) < \infty$
for any $x$ and $\alpha$ according to \eqref{eq:finite}.

\emph{Validity of condition 2):} By setting $\hat x = (1,1)$ and $N = 0$, according to Lemma \ref{lemma:monoton}, we have $J_{\alpha}(x) \ge J_{\alpha}(\hat x), \forall x \in \mathcal{A}$. Therefore, we have $h_{\alpha}(x) = J_{\alpha}(x) - J_{\alpha}(\hat x) \ge 0.$

\emph{Validity of condition 3):} We set $\hat x = (1,1)$, define $K_{\min} = \min\{k:k\ge 1, x_k = (1,1)\}$, and consider the policy $\hat{\pi} = \{\hat{\mu}_0, \hat{\mu}_1, \cdots\}$ that always adopts the action $(1,1)$ until the $K_{\min}$-th slot, and then is the same as the optimal stationary policy $\mu$. Denote $G(x, \hat x)$ as the expected sum cost of the first passage from any state $x = (i, j)$ to $\hat x$ under policy $\hat \pi$. Notice that $\mathrm{Pr}(K_{\min} = k) = (1-p)p^{k-1}$, and the cost is fixed for a given $K_{\min}$. Then, $G(x, \hat x)$ can be computed as
\begin{align*}
G(x, \hat x) &=  \mathbb{E}\left( \sum_{k=0}^{K_{\min}\!-\!1} g(x_k, \hat{\mu}_k)\bigg|x_0 = x, \hat{\pi}\right) = \sum_{k=1}^{\infty} (1-p)p^{k-1} \left(\sum_{m=0}^{k-1} \left(j + m + \omega(E_{\mathrm s} + E_{\mathrm t}) \right)\right)\nonumber\\
&= \frac{j - \frac{1}{2} + \omega(E_{\mathrm s} + E_{\mathrm t})}{1-p} + \frac{1+p}{2(1-p)^2},
\end{align*}
Thus, we have
\begin{align*}
J_{\alpha}(x) \le &{~}\mathbb{E}\!\left( \sum_{k=0}^{K_{\min}-1} \alpha^kg(x_k, \hat{\mu}_k(x_k))\bigg|x_0 \!=\! x, \hat{\pi}\right) \!+\! \lim_{K\rightarrow \infty}\mathbb{E}\!\left( \sum_{k = K_{\min}}^K \alpha^kg(x_k, \hat{\mu}_k(x_k))\bigg|x_{K_{\min}} \!=\! \hat x, \hat{\pi}\right)\nonumber\\
\le &{~}\mathbb{E}\left( \sum_{k=0}^{K_{\min}-1} g(x_k, \hat{\mu}_k(x_k))\bigg|x_0 \!=\! x, \hat{\pi}\right) \!+\! \lim_{K\rightarrow \infty}\mathbb{E}\left( \sum_{k = 0}^K \alpha^kg(x_k, {\mu}(x_k))\bigg|x_{0} \!=\! \hat x, {\mu}\right) \nonumber\\
=&{~}G(x, \hat x) \!+\! J_{\alpha}(\hat x).
\end{align*}

By setting $M_x = G(x, \hat x)$, we have $h_{\alpha}(x) = J_{\alpha}(x) - J_{\alpha}(\hat x) \le G(x, \hat x) = M_x$ for every $x$ and $\alpha$. In addition, by taking $u = (0,0)$ in state $x = (i, j)$, we have $\sum_{x'}p_{x \rightarrow x'}(u)M_{x'} = M_{(i+1, j+1)} < \infty$.

In summary, all the assumptions in \cite[Theorem]{sennott1989average} hold. This completes the proof.

\subsection{Proof of Lemma \ref{lemma:action}} \label{proof:action}
We prove it by contradiction. Suppose for a certain state $x = (i, j)$ we have $\mu_{\alpha}(x) = (1,0)$. Without loss of generality, we set $x_0 = x$ and the optimal action is $u_0 = (1,0)$. Denote $N = \min\{k: k > 0, u_k \neq (0,0)\}$ as the stage where the first non-idle action is taken under the optimal policy. Then, $u_N$ has three possible choices.

\emph{Case 1:} If $u_N = (1, 0)$, consider the policy $\pi'$ in which the action $(0,0)$ is taken in stage 0, and the optimal policy is taken afterwards, we have
\begin{align*}
J_{\alpha}(i, j) &= (j + \omega E_{\mathrm s}) + \sum_{k=1}^{N-1} \alpha^k (j+k) + \alpha^N(j+N + \omega E_{\mathrm s}) + \alpha^{N+1} J_{\alpha}(1, j + N+1) \nonumber\\
&> \sum_{k=0}^{N-1} \alpha^k (j+k) + \alpha^N(j+N + \omega E_{\mathrm s}) + \alpha^{N+1} J_{\alpha}(1, j + N+1) = J_{\alpha, \pi'}(i, j).
\end{align*}

\emph{Case 2:} If $u_N = (0, 1)$, consider the policy $\pi'$ in which the action $(0,0)$ is taken in stages $k = 0, 1, \cdots, N-1$, the action $(1,1)$ is taken in stage $N$, and the optimal policy is taken afterwards, we have
\begin{align*}
J_{\alpha}(i, j) &= (j + \omega E_{\mathrm s}) + \sum_{k=1}^{N-1} \alpha^k (j+k) + \alpha^N(j+N + \omega E_{\mathrm t})  \nonumber\\
&\qquad\qquad +\alpha^{N+1} \left(pJ_{\alpha}(N+1, j + N+1) + (1-p)J_{\alpha}(N+1, N+1)\right).  \nonumber\\
&> \sum_{k=0}^{N-1} \alpha^k (j\!+\!k) \!+\! \alpha^N(j\!+\!N \!+ \!\omega E_{\mathrm s} \!+ \!\omega E_{\mathrm t}) \!+\!\alpha^{N+1} \left(pJ_{\alpha}(1, j \!+\! N\!+\!1) \!+\!(1\!-\!p)J_{\alpha}(1, 1) \right) \nonumber\\
&= J_{\alpha, \pi'}(i, j)
\end{align*}
as $J_{\alpha}(i,j)$ is nondecreasing versus $i$ and $j$ according to Lemma \ref{lemma:monoton}.

\emph{Case 3:} If $u_N = (1, 1)$, consider the same policy $\pi'$ as in case 2, we also have $J_{\alpha, \pi'}(i, j)
< J_{\alpha}(i, j)$ as the policy $\pi'$ reduces the cost in stage $k=0$ without impacting the future costs.

In summary, we end up with a contradiction that $\mu_{\alpha}(x) = (1,0)$ is not optimal. Therefore, the optimal policy must not contain the action $(1,0)$.

\subsection{Proof of Lemma \ref{lemma:i0}} \label{proof:i0}
We prove the lemma by induction.

(a) For $n=1$, \eqref{L3a0} and \eqref{L3b0} hold as $\mu_1(i,j) = 0$ and $J_{1}(i,j)=j$ for all $i \ge 1, j \ge 1$.

(b) Suppose \eqref{L3a0} and \eqref{L3b0} hold for $n\le k$, we consider $n=k+1$. Firstly, we prove $\mu_{k+1}(i,j)\in \{0,2\}$. Denote by
\begin{align}
	A_{n}^{u-v}(i,j)&=Q_{n}(i,j,u)-Q_{n}(i,j,v), \quad n \ge 1, u, v \in \{0, 1, 2\}. \label{Auv}
\end{align}
If $i \ge i_0$, according to Lemma \ref{lemma:monoton}, we have
	\begin{align*}
		A_{k+1}^{1-2}(i,j)%&=Q_{k+1}(i,j,1)-Q_{k+1}(i,j,2)\\
		&=\alpha p\left( J_{k}(i\!+\!1,j\!+\!1)-J_{k}(1,j\!+\!1)\right) +\alpha \left(1-p\right)\left( J_{k}(i\!+\!1,i\!+\!1)-J_{k}(1,1)\right) -\omega E_{\mathrm{s}} \nonumber\\
		%&\ge \alpha\left(1-p \right) \left(  J_{k}(i+1,i+1)-J_{k}(1,1)\right) - \omega E_{\mathrm{s}}\\
		&\ge \alpha \left(1-p \right) \left(  J_{k}(1,i+1)-J_{k}(1,1)\right) - \omega E_{\mathrm{s}}> \alpha \left(1-p \right) i-\omega E_{\mathrm{s}} > 0,
	\end{align*}
which means $Q_{k+1}(i,j,1)>Q_{k+1}(i,j,2)$. Hence, $\mu_{k+1}(i,j) \in \{0, 2\}$. Similarly, if $j \le i$,
	\begin{align*}
		A_{k+1}^{0-1}(i,j)%&=Q_{k+1}(i,j,1)-Q_{k+1}(i,j,2)\\
		&=\alpha(1-p)(J_{k}(i+1,j+1)-J_{k}(i+1,i+1))-\omega E_{\mathrm t} \le -\omega E_{\mathrm t} < 0.
	\end{align*}
Thus, we also have $\mu_{k+1}(i,j)\in \{0,2\}$ if $i \ge j$.

	Secondly, we prove $\mu_{k+1}(i,j)=\mu_{k+1}(i+1,j)$. As \eqref{L3b0} holds for $n=k$, we have
	\begin{align*}
		A_{k+1}^{0-2}(i,j)&=\alpha J_{k}(i+1,j+1) - \alpha p J_{k}(1,j+1) - \alpha \left(1-p\right)J_{k}(1,1)-\omega E_{\mathrm{t}}-\omega E_{\mathrm{s}} \nonumber\\
		&=\alpha J_{k}(i+2,j+1) - \alpha p J_{k}(1,j+1) - \alpha \left(1-p\right)J_{k}(1,1)-\omega E_{\mathrm{t}}-\omega E_{\mathrm{s}}\nonumber\\
&=A_{k+1}^{0-2}(i+1,j).
	\end{align*}
	Therefore, we have $\mu_{k+1}(i,j)=\mu_{k+1}(i+1,j)$. Thus, \eqref{L3a0} holds for $n=k+1$.

    Next, we prove \eqref{L3b0}. As \eqref{L3a0} holds for $n=k+1$, there are two possible cases.

    If $\mu_{k+1}(i,j)=\mu_{k+1}(i+1,j)=2$, we have
    \begin{align*}
    J_{k+1}(i,j)= j + \omega E_{\mathrm{t}} + \omega E_{\mathrm{s}} + \alpha p J_k(1,j+1) + \alpha (1-p) J_k(1,1) = J_{k+1}(i+1,j).
    \end{align*}
	
    If $\mu_{k+1}(i,j)=\mu_{k+1}(i+1,j)=0$, we have
    \begin{align*}
    J_{k+1}(i,j)-J_{k+1}(i+1,j)=\alpha\left( J_{k}(i+1,j+1)-J_{k}(i+2,j+1)\right) =0.
    \end{align*}
	Therefore, \eqref{L3b0} holds for $n=k+1$. This completes the proof.

\subsection{Proof of Lemma \ref{lemma:j0}} \label{proof:j0}
We prove the lemma by induction.

(a) For $n=2$, $\mu_2(i,j)$ can be directly calculated as
\begin{align*}
\mu_2(i,j) = \left\{\begin{array}{ll}0, & \textrm{if~} i <i_0,j < \Big\lceil \dfrac{\omega E_{\mathrm t}}{\alpha(1-p)}\Big\rceil+i \textrm{~or~} i \ge i_0, j < j_0, \\
1, & \textrm{if~} i <i_0,j \ge \Big\lceil \dfrac{\omega E_{\mathrm t}}{\alpha(1-p)}\Big\rceil+i,\\
2, & \textrm{if~}i \ge i_0, j \ge j_0.
\end{array} \right.
\end{align*}
Then, it can be easily obtained that for $j \ge j_0$, $J_{2}(i,j+1)-J_{2}(i,j) = 1 + \alpha p$. Hence, \eqref{L4a0} and \eqref{L4c0} hold for $n = 2, j \ge j_0$ with $i_2 = i_0$.

(b) Suppose \eqref{L4a0} and \eqref{L4c0} hold for $n\le k$, we consider $n\!=\!k\!+\!1$. Based on Lemma \ref{lemma:monoton}, we have
\begin{align*}
	A_{k\!+\!1}^{0\!-\!2}(i,j)&=\alpha p(J_{k}(i\!+\!1,j\!+\!1)\!-\!J_{k}(1,j\!+\!1))\!+\! \alpha(1\!-\!p)(J_{k}(i\!+\!1,j\!+\!1)\!-\!J_{k}(1,1))\!-\!\omega E_{\mathrm t}\!-\!\omega E_{\mathrm s}\nonumber\\
    %&\ge \alpha(1-p)(J_{k}(i+1,j+1)-J_{k}(1,1))-\omega E_{\mathrm t}-\omega E_{\mathrm s}\\
	&\ge \alpha(1\!-\!p)(J_{k}(1,j\!+\!1)\!-\!J_{k}(1,1))\!-\!\omega E_{\mathrm t}\!-\!\omega E_{\mathrm s} \ge \alpha(1\!-\!p)j\!-\!\omega E_{\mathrm t}\!-\!\omega E_{\mathrm s}>0
\end{align*}
if $j \ge j_0$, where $A_{k+1}^{0-2}(i,j)$ is defined in \eqref{Auv}. Thus, we have $Q_{n}(i,j,0)>Q_{n}(i,j,2)$, which means $\mu_{k+1}(i,j)\in \{1,2\} $. Then, as \eqref{L4c0} holds for $n=k$, we have
\begin{align*}
	A_{k\!+\!1}^{1\!-\!2}(i,j)\!-\!A_{k\!+\!1}^{1\!-\!2}(i,j\!+\!1) = \alpha p \left(J_{k}(i\!+\!1,j\!+\!1)\!-\!J_{k}(i\!+\!1,j\!+\!2)\!-\!\left(J_{k}(1,j\!+\!1)\!-\!J_{k}(1,j\!+\!2)\right)\right)=0.
\end{align*}
Therefore, $\mu_{k+1}(i,j)=\mu_{k+1}(i,j+1)$ for $j \ge j_0$. As
\begin{align*}
A_{k+1}^{1-2}(i,j)=\alpha p(J_{k}(i\!+\!1,j\!+\!1)\!-\!J_{k}(1,j\!+\!1))\!+\!\alpha(1\!-\!p)(J_{k}(i\!+\!1,i\!+\!1)\!-\!J_{k}(1,1)) \!-\!\omega E_{\mathrm s}
\end{align*}
is strictly increasing versus $i$, there exists $i_{k+1}$ so that $A_{k+1}^{1-2}(i,j)<0$ for $i<i_{k+1}$ and $A_{k+1}^{1-2}(i,j)>0$ for $i\ge i_{k+1}$. Therefore, \eqref{L4a0} holds for $n = k+1$.

As \eqref{L4a0} holds for $n = k+1$ and \eqref{L4c0} holds for $n = k$, we have
\begin{align*}
	J_{k+1}(i,j+1)-J_{k+1}(i,j)=1+\alpha p \left(J_{k}(N,j+2)-J_{k}(N,j+1)\right)=\dfrac{1-\left(\alpha p\right)^{k+1} }{1-\alpha p},
\end{align*}
where $N=1$ if $\mu_{k+1}(i,j)=\mu_{k+1}(i,j+1)=2$ and $N=i+1$ if $\mu_{k+1}(i,j)=\mu_{k+1}(i,j+1)=1$. Therefore, \eqref{L4c0} holds for $n=k+1$. This completes the proof.

\subsection{Proof of Lemma \ref{lemma:LD}} \label{proof:LD}
Firstly, we prove by induction that
	\begin{align}
		J_{n}(i,j+2)-J_{n}(i,j+1)\le J_{n}(i,j+1)-J_{n}(i,j).\label{L2a}
	\end{align}

(a)	As $J_{1}(i,j)=j$, \eqref{L2a} holds for $n=1$.

(b) Suppose \eqref{L2a} holds for $n\le k$, we consider $n = k+1$. If $\mu_{k+1}(i,j+1) = 0$, we have $J_{k+1}(i,j+1) = Q_{k+1}(i,j+1,0)$. Due to the optimality of $J_{k+1}(i,j)$, we have
\begin{align*}
	&{~}J_{k+1}(i,j+2)-J_{k+1}(i,j+1)-\left(  J_{k+1}(i,j+1)-J_{k+1}(i,j)\right) \nonumber\\
	=&{~}J_{k+1}(i,j+2)-Q_{k+1}(i,j+1,0)-\left(  Q_{k+1}(i,j+1,0)-J_{k+1}(i,j)\right) \nonumber\\
	\le&{~} Q_{k+1}(i,j+2,0)-Q_{k+1}(i,j+1,0)-\left(  Q_{k+1}(i,j+1,0)-Q_{k+1}(i,j,0)\right) \nonumber\\
	=&{~} \alpha \left(J_{k}(i+1,j+3)-J_{k}(i+1,j+2)-\left( J_{k}(i+1,j+2)-J_{k}(i+1,j+1)\right) \right) \le 0.
\end{align*}

If $\mu_{k+1}(i,j+1) = 1$ or $\mu_{k+1}(i,j+1) = 2$, the inequality can be proved in the same way. Therefore, \eqref{L2a} holds for all $n \ge 1$. As $n \rightarrow \infty$, we have
	\begin{align}
		J_{\alpha}(i,j+2)-J_{\alpha}(i,j+1)\le J_{\alpha}(i,j+1)-J_{\alpha}(i,j). \label{eq:concave}
	\end{align}

According to Corollary \ref{coro:i0j0}, we have $J_{\alpha}(i,j)=J_{\alpha}(i+1,j)$ for any $j\le i$. Therefore,
\begin{align*}
	D_{\alpha}(i+1)&=J_{\alpha}(i+2,i+2)-J_{\alpha}(i+1,i+1)=J_{\alpha}(i+2,i+2)-J_{\alpha}(i+2,i+1) \nonumber\\
	&\le J_{\alpha}(i+2,i+1)-J_{\alpha}(i+2,i)=J_{\alpha}(i+1,i+1)-J_{\alpha}(i,i)=D_{\alpha}(i),
\end{align*}
where the inequality holds according to \eqref{eq:concave}. Hence, \eqref{L5b} is proved.

According to \eqref{L4c}, we have
\begin{align*}
	L_{\alpha}(i,j)-L_{\alpha}(i,j+1)&=J_{\alpha}(i+1,j)-J_{\alpha}(i,j)-\left(J_{\alpha}(i+1,j+1)-J_{\alpha}(i,j+1)\right) \nonumber\\
	&=J_{\alpha}(i+1,j)-J_{\alpha}(i+1,j+1)-\left(J_{\alpha}(i,j)-J_{\alpha}(i,j+1)\right)=0.
\end{align*}
Thus, we have $L_{\alpha}(i,j)=L_{\alpha}(i,j+1)$.

Finally, we prove that $L_{\alpha}(i,j)<D_{\alpha}(i)$. Based on Lemma \ref{lemma:monoton}, we have $D_{\alpha}(i)\ge 1$. Thus, for $i\ge i_{\alpha}, j\ge j_{0}$, as $\mu_{\alpha}(i,j) = 2$, we have
\begin{align*}
L_{\alpha}(i,j)=Q_{\alpha}(i+1,j,2)-Q_{\alpha}(i,j,2)=0<D_{\alpha}(i).
\end{align*}

For $i \le i_{\alpha}, j\ge j_{0}$, we prove the inequality $L_{\alpha}(i,j)< D_{\alpha}(i)$ by induction.

(a) The inequality holds for $i = i_{\alpha}$.

(b) Suppose the inequality holds for $i\ge k$, where $1 < k \le i_{\alpha}$, and consider $i=k-1$. As $\mu_{\alpha}(k-1,j) = 1$ according to \eqref{L4a}, we have
\begin{align*}
	L_{\alpha}(k-1,j)&\le Q_{\alpha}(k,j,1)-Q_{\alpha}(k-1,j,1)=\alpha p L_{\alpha}(k,j+1)+\alpha \left(1-p\right)D_{\alpha}(k) \nonumber\\
	&<\alpha p D_{\alpha}(k)+\alpha \left(1-p\right)D_{\alpha}(k)=\alpha D_{\alpha}(k)<D_{\alpha}(k-1).
\end{align*}
Therefore, the inequality holds for  $i=k-1$. This completes the proof.

\subsection{Proof of Lemma \ref{lemma:jalpha}} \label{proof:jalpha}
Based on Lemma \ref{lemma:j0} and Lemma \ref{lemma:LD}, \eqref{L4a}, \eqref{L4c} and \eqref{L5a} hold for $j \ge j_0$. %Moreover, for $i \ge i_0, j \ge j_{\alpha}$, by definition we have $\mu_{\alpha}(i,j) = 2$. Hence, $J_{\alpha}(i,j+1)-J_{\alpha}(i,j) = 1+\alpha p (J_{\alpha}(1,j+2)-J_{\alpha}(1,j+1))$

For $j_{\alpha} \le j \le j_0$, we prove by induction. Firstly, \eqref{L4a}, \eqref{L4c} and \eqref{L5a} hold for $j=j_{0}$. Secondly, suppose the equations hold for $j \ge k$, where $j_{\alpha} < k \le j_{0}$, and consider $j=k-1$. The proof of \eqref{L4a} is divided into the following four steps.

(a) For $i \ge i_0$, by definition, we have $\mu_{\alpha}(i,k-1) = 2$. Hence, $J_{\alpha}(i,k)-J_{\alpha}(i,k-1) = 1+\alpha p (J_{\alpha}(1,k+1)-J_{\alpha}(1,k)) =\frac{1}{1-\alpha p}$. In addition, $L_{\alpha}(i,k-1)=L_{\alpha}(i,k)=0<D_{\alpha}(i)$. Therefore, the lemma holds for $j = k-1$ when $i \ge i_0$.

(b) Denote by $A_{\alpha}^{u-v}(i,j) = \lim_{n \rightarrow \infty} A_{n}^{u-v}(i,j).$ For $i_{\alpha} \le i \le i_0$, since $\mu_{\alpha}(i,k)=2$, we have $A_{\alpha}^{0-2}(i,k)>0$ and $J_{\alpha}(i_{\alpha},k)=J_{\alpha}(i_{\alpha}+1,k)=\cdots=J_{\alpha}(i_{0},k)$. As
\begin{align}
A_{\alpha}^{0-2}(i,j)=\alpha J_{\alpha}(i + 1,j + 1)  -  \alpha p J_{\alpha}(1,j + 1)  -  \alpha \left(1 - p\right)J_{\alpha}(1,1) - \omega E_{\mathrm{t}} - \omega E_{\mathrm{s}}, \label{eq:A02}
\end{align}
we have $A_{\alpha}^{0-2}(i_{\alpha}-1,k-1)=A_{\alpha}^{0-2}(i_{\alpha},k-1)=\cdots=A_{\alpha}^{0-2}(i_{0},k-1)>0$. Therefore, $\mu_{\alpha}(i,k-1)\in \{1,2\}$ for $i_{\alpha}-1 \le i \le i_{0}$.

(c) As \eqref{L5a} holds for $j\ge k$, we have
\begin{align}
	A_{\alpha}^{1-2}(i,k-1)&=\alpha p(J_{\alpha}(i+1, k) - J_{\alpha}(1,k)) + \alpha (1-p) (J_{\alpha}(i+1,i+1) - J_{\alpha}(1,1)) - \omega E_{\mathrm{s}} \nonumber\\
    &=\alpha p\sum_{m=1}^{i}L_{\alpha}(m,k)+\alpha \left(1-p\right)\sum_{m=1}^{i} D_{\alpha}(m)-\omega E_{\mathrm{s}}\nonumber\\
	&=\alpha p\sum_{m=1}^{i}L_{\alpha}(m,k+1)+\alpha \left(1-p\right)\sum_{m=1}^{i} D_{\alpha}(m)-\omega E_{\mathrm{s}}=A_{\alpha}^{1-2}(i,k). \label{eq:A12l}
\end{align}
As \eqref{L4a} holds for $j=k$, we have $A_{\alpha}^{1-2}(i,k)>0$ for $i\ge i_{\alpha}$. Thus, $A_{\alpha}^{1-2}(i,k-1)>0$ and hence, $\mu_{\alpha}(i,k-1)\in \{0,2\}$ for $i\ge i_{\alpha}$. Joint with (b), we have $\mu_{\alpha}(i,k-1)=2$ for $i\ge i_{\alpha}$.

(d) For $i < i_{\alpha}$, we prove that $\mu_{\alpha}(i,k-1) = 1$ by contradiction. Suppose $\mu_{\alpha}(i,k-1)=1$ and $\mu_{\alpha}(i-1,k-1)=0$. On one hand, we have
\begin{align*}
	L_{\alpha}(i-1,k-1) \le Q_{\alpha}(i,k-1,0)-Q_{\alpha}(i-1,k-1,0)=\alpha L_{\alpha}(i,k).
\end{align*}
On the other hand, as $L_{\alpha}(i,k)<D_{\alpha}(i)$,
\begin{align*}
	L_{\alpha}(i-1,k-1)&\ge Q_{\alpha}(i,k-1,1)-Q_{\alpha}(i-1,k-1,1)=\alpha p L_{\alpha}(i,k)+\alpha \left(1-p\right)D_{\alpha}(i) \nonumber\\
	&>\alpha p L_{\alpha}(i,k)+\alpha \left(1-p\right)L_{\alpha}(i,k)=\alpha L_{\alpha}(i,k).
\end{align*}
Thus, we have $\alpha L_{\alpha}(i,k)<\alpha L_{\alpha}(i,k)$, which is a contradiction. Therefore, if $\mu_{\alpha}(i,k-1)=1$, we must have $\mu_{\alpha}(i-1,k-1)\neq 0$. According to (c), we have $A_{\alpha}^{1-2}(i,k-1)<0$ for $i< i_{\alpha}$. Hence, $\mu_{\alpha}(i,k-1)\in \{0,1\}$. Joint with (b), we have $\mu_{\alpha}(i_{\alpha}-1,k-1)=1$. Hence, we have $\mu_{\alpha}(i,k-1)=1$ for any $i<i_{\alpha}$. Thus, \eqref{L4a} holds for $j = k-1$.

As $\mu_{\alpha}(i,k-1)=\mu_{\alpha}(i,k) \in \{1,2\}$, we have
\begin{align*}
	J_{\alpha}(i,k)-J_{\alpha}(i,k-1)=1+\alpha p \left(J_{\alpha}(N,k+1)-J_{\alpha}(N,k)\right)=\frac{1}{1-\alpha p},
\end{align*}
where $N=1$ if $\mu_{\alpha}(i,k)=2$ and $N=i+1$ if $\mu_{\alpha}(i,k)=1$. Thus, \eqref{L4c} holds for $j=k-1$.

Similar to the proof of Lemma \ref{lemma:LD}, \eqref{L5a} holds for $j=k-1$. This completes the proof.

\subsection{Proof of Lemma \ref{lemma:ij}} \label{proof:ij}
We introduce an auxiliary equation
\begin{align}
	L_{\alpha}(i,j)\le L_{\alpha}(i,j+1)<D_{\alpha}(i),\label{L7c}
\end{align}
and prove \eqref{L7a} and \eqref{L7c} for $j < j_{\alpha}$ by induction.

(a) Firstly, we prove \eqref{L7a} and \eqref{L7c} hold for $j=j_{\alpha}-1$.

(a.1) According to the definition of $j_{\alpha}$, we have $\mu_{\alpha}(i,j_{\alpha}-1)=0$ for $i\ge i_{0}$. Therefore, $A_{\alpha}^{0-2}(i,j_{\alpha}-1) \le A_{\alpha}^{0-2}(i_{0},j_{\alpha}-1)<0$ for $i<i_{0}$ as $A_{\alpha}^{0-2}(i,j)$ is non-decreasing in $i$ according to \eqref{eq:A02}. We have $\mu_{\alpha}(i,j_{\alpha}-1)\in \{0,1\}, i< i_{0}$.
According to Lemma \ref{lemma:jalpha}, \eqref{L5a} holds for $j = j_{\alpha}$. Then, we have $A_{\alpha}^{1-2}(i,j_{\alpha}-1)=A_{\alpha}^{1-2}(i,j_{\alpha})$ according to  \eqref{eq:A12l}. For $i\ge i_{\alpha}$, as \eqref{L4a} holds, we have $A_{\alpha}^{1-2}(i,j_{\alpha}-1)=A_{\alpha}^{1-2}(i,j_{\alpha})>0$, which means $\mu_{\alpha}(i,j_{\alpha}-1)\neq 1$. Therefore, we have $\mu_{\alpha}(i,j_{\alpha}-1)=0$ for any $i_{\alpha}\le i<i_{0}$.

(a.2) As $L_{\alpha}(i,j_{\alpha})<D_{\alpha}(i)$, similar to subsection (d) in the proof of Lemma \ref{lemma:jalpha}, we have $\mu_{\alpha}(i-1,j_{\alpha}-1)\neq 0$ if $\mu_{\alpha}(i,j_{\alpha}-1)=1$. As $\mu_{\alpha}(i,j_{\alpha}-1)\in \{0,1\}$, we have $\mu_{\alpha}(i-1,j_{\alpha}-1)=1$ if $\mu_{\alpha}(i,j_{\alpha}-1)=1$. Hence, there exists $I_{\alpha}(j_{\alpha}-1) \le i_{\alpha}$ so that \eqref{L7a} holds for $j=j_{\alpha}-1$.

(a.3) As \eqref{L7a} holds for $j=j_{\alpha}-1$, if $\mu_{\alpha}(i,j_{\alpha}-1) = 0$, we have $\mu_{\alpha}(i+1,j_{\alpha}-1) = 0$. Hence,
\begin{align*}
L_{\alpha}(i,j_{\alpha}-1) = \alpha L_{\alpha}(i+1,j_{\alpha}) < \alpha p L_{\alpha}(i+1,j_{\alpha})+\alpha \left( 1-p\right)D_{\alpha}(i+1)
\end{align*}
as $L_{\alpha}(i,j_{\alpha}) < D_{\alpha}(i)$. If $\mu_{\alpha}(i,j_{\alpha}-1) = 1$, as $J_{\alpha}(i,j)\le Q_{\alpha}(i,j,1)$, we have
\begin{align*}
L_{\alpha}(i,j_{\alpha}\!-\!1)\le Q_{\alpha}(i+1,j_{\alpha}\!-\!1,1)-Q_{\alpha}(i,j_{\alpha}\!-\!1,1)=\alpha p L_{\alpha}(i+1,j_{\alpha})+\alpha \left(1-p\right)D_{\alpha}(i+1).
\end{align*}
In both cases, we have
\begin{align*}
	L_{\alpha}(i,j_{\alpha}-1)&\le\alpha p L_{\alpha}(i+1,j_{\alpha})+\alpha\left(1-p\right)D_{\alpha}(i+1) \nonumber\\
	&=\alpha p L_{\alpha}(i+1,j_{\alpha}+1)+\alpha\left(1-p\right)D_{\alpha}(i+1)=L_{\alpha}(i,j_{\alpha})
\end{align*}
for $i<i_{\alpha}-1$ as $L_{\alpha}(i,j_{\alpha})=L_{\alpha}(i,j_{\alpha}+1)$ and $\mu_{\alpha}(i+1,j_{\alpha}) = \mu_{\alpha}(i,j_{\alpha}) = 1$ based on Lemma \ref{lemma:jalpha}.

For $i\ge i_{\alpha}$, $L_{\alpha}(i,j_{\alpha}-1)=\alpha\left( J_{\alpha}(i+2,j_{\alpha})-J_{\alpha}(i+1,j_{\alpha})\right)=0$ as $\mu_{\alpha}(i,j_{\alpha}-1)=0$ and $\mu_{\alpha}(i,j_{\alpha})=2$. Similarly,  $L_{\alpha}(i_{\alpha}-1,j_{\alpha}-1)=0$ if $\mu_{\alpha}(i_{\alpha}-1,j_{\alpha}-1)=0$. If $\mu_{\alpha}(i_{\alpha}-1,j_{\alpha}-1)=1$,
\begin{align*}
J_{\alpha}(i_{\alpha}\!-\!1,j_{\alpha})\!-\!J_{\alpha}(i_{\alpha}\!-\!1,j_{\alpha}\!-\!1) &= Q_{\alpha}(i_{\alpha}-1,j_{\alpha},1)-Q_{\alpha}(i_{\alpha}-1,j_{\alpha}-1,1) \nonumber\\
&= 1 + \alpha p (J_{\alpha}(i_{\alpha},j_{\alpha}+1)-J_{\alpha}(i_{\alpha},j_{\alpha})) \nonumber\\
&= 1 + \alpha p (J_{\alpha}(1,j_{\alpha}+1)-J_{\alpha}(1,j_{\alpha})) \nonumber\\
&= Q_{\alpha}(i_{\alpha},j_{\alpha},2)\!-\!Q_{\alpha}(i_{\alpha},j_{\alpha}\!-\!1,2) \nonumber\\
&\le J_{\alpha}(i_{\alpha},j_{\alpha})\!-\!J_{\alpha}(i_{\alpha},j_{\alpha}\!-\!1)
\end{align*}
as \eqref{L4c} holds for $j\ge j_{\alpha}$. Therefore, $L_{\alpha}(i_{\alpha}-1,j_{\alpha}-1)\le L_{\alpha}(i_{\alpha}-1,j_{\alpha})$. Hence, we have $L_{\alpha}(i,j_{\alpha}-1)\le L_{\alpha}(i,j_{\alpha})$ for all $i$. According to Lemma \ref{lemma:jalpha}, we have $L_{\alpha}(i,j_{\alpha}) < D_{\alpha}(i)$. Thus, \eqref{L7c} holds for $j = j_{\alpha}-1$.

(b) Suppose \eqref{L7a} and \eqref{L7c} hold for $j \ge k$, where $1 < k < j_{\alpha}$ and $I_{\alpha}(k)\le I_{\alpha}(k+1)\le \cdots \le i_{\alpha}$, consider $j=k-1$.

(b.1) As $\mu_{\alpha}(i,k)=0$ for $i\ge I_{\alpha}(k)$, as $A_{\alpha}^{0-1}(i,j)=\alpha(1-p)(J_{\alpha}(i+1,j+1)-J_{\alpha}(i+1,i+1))-\omega E_{\mathrm t}$ increases with $j$,we have $A_{\alpha}^{0-1}(i,k-1)<A_{\alpha}^{0-1}(i,k)<0$. Thus,  $\mu_{\alpha}(i,k-1)\in \{0,2\}$ for $i\ge I_{\alpha}(k)$. As \eqref{L7c} holds for $j=k$, we have
\begin{align*}
	A_{\alpha}^{0\!-\!2}(i,k\!-\!1)&=\alpha p\sum_{m=1}^{i}L_{\alpha}(m,k)\!+\!\alpha\left(1\!-\!p\right)\left(J_{\alpha}(i\!+\!1,k)\!-\!J_{\alpha}(1,1)\right)\!-\!\omega E_{\mathrm{t}}\!-\!\omega E_{\mathrm{s}} \nonumber\\
	&\le \alpha p\sum_{m=1}^{i}L_{\alpha}(m,k\!+\!1)\!+\!\alpha\left(1\!-\!p\right)\left(J_{\alpha}(i\!+\!1,k\!+\!1)\!-\!J_{\alpha}(1,1)\right)\!-\!\omega E_{\mathrm{t}}\!-\!\omega E_{\mathrm{s}} =A_{\alpha}^{0\!-\!2}(i,k),
\end{align*}
As \eqref{L7a} holds for $j=k$,  we have $A_{\alpha}^{0-2}(i,k-1) \le A_{\alpha}^{0-2}(i,k)<0$. Hence, we have $\mu_{\alpha}(i,k-1)=0$ for $i\ge I_{\alpha}(k)$. In addition, as $A_{\alpha}^{0-2}(i,j)$ in non-decreasing in $i$, we have $A_{\alpha}^{0-2}(i,k-1) \le A_{\alpha}^{0-2}(I_{\alpha}(k),k-1)<0$ for $i<I_{\alpha}(k)$. Thus, $\mu_{\alpha}(i,k-1)\in \{0,1\}$ for $i< I_{\alpha}(k)$.

(b.2) Similar to (a.2), we have $\mu_{\alpha}(i-1,k-1)\neq 0$ if $\mu_{\alpha}(i,k-1)=1$ as $L_{\alpha}(i,k)<D_{\alpha}(i)$. Thus, according to (b.1), there exists $I_{\alpha}(k-1)\le I_{\alpha}(k)$ so that \eqref{L7a} holds.

(b.3) We now prove \eqref{L7c} for $j=k-1$. By assumption, we have $L_{\alpha}(i,k) < D_{\alpha}(i)$. Then, we prove $L_{\alpha}(i,k-1) \le L_{\alpha}(i,k)$ for all three cases.
\begin{enumerate}[I.]
	\item If $\mu_{\alpha}(i,k)=\mu_{\alpha}(i+1,k)=0$, since $I_{\alpha}(i,k-1)\le I_{\alpha}(i,k)$, we have $\mu_{\alpha}(i,k-1)=\mu_{\alpha}(i+1,k-1)=0$. As \eqref{L7c} holds for $j=k$, we have
	\begin{align*}
		L_{\alpha}(i,k-1)=\alpha L_{\alpha}(i+1,k)\le \alpha L_{\alpha}(i+1,k+1)=L_{\alpha}(i,k).
	\end{align*}
	\item If $\mu_{\alpha}(i,k)=\mu_{\alpha}(i+1,k)=1$, there are two possible cases.
	\begin{enumerate}[i.]
		\item $\mu_{\alpha}(i,k-1)=\mu_{\alpha}(i+1,k-1)=0$, we have 		
		\begin{align*}
			L_{\alpha}(i,k-1)=\alpha L_{\alpha}(i+1,k)<\alpha p L_{\alpha}(i+1,k)+\alpha \left(1-p\right)D_{\alpha}(i+1).
		\end{align*}
		\item $\mu_{\alpha}(i,k-1)=1,\mu_{\alpha}(i+1,k-1)\in \{0,1\}$, as $J_{\alpha}(i,j)\le Q_{\alpha}(i,j,1)$, we have
		\begin{align*}
			L_{\alpha}(i,k\!-\!1) \le Q_{\alpha}(i\!+\!1,k\!-\!1,1)\!-\!Q_{\alpha}(i,k\!-\!1,1) =\alpha p L_{\alpha}(i\!+\!1,k)\!+\!\alpha \left(1\!-\!p\right)D_{\alpha}(i\!+\!1).
		\end{align*}
	\end{enumerate}
	In both cases, since $L_{\alpha}(i,k)\le L_{\alpha}(i,k+1)$, we have
	\begin{align*}
		L_{\alpha}(i,k-1)\le \alpha p L_{\alpha}(i+1,k+1)+\alpha\left(1-p\right)D_{\alpha}(i+1)=L_{\alpha}(i,k).
	\end{align*}
	\item $\mu_{\alpha}(i,k)=1,\mu_{\alpha}(i+1,k)=0$. We have $\mu_{\alpha}(i+1,k-1)=0$  as $I_{\alpha}(k-1)\le I_{\alpha}(k)$. Thus,
	$\mu_{\alpha}(i,k-1)$ has two possible values.
	\begin{enumerate}[i.]
		\item $\mu_{\alpha}(i,k-1)=0$, due to the optimality of $J_{\alpha}(i,j)$, we have
		\begin{align*}
			L_{\alpha}(i,k)&\ge Q_{\alpha}(i+1,k,0)-Q_{\alpha}(i,k,0)=\alpha L_{\alpha}(i+1,k+1) \nonumber\\
			&\ge \alpha L_{\alpha}(i+1,k)=Q_{\alpha}(i+1,k-1,0)-Q_{\alpha}(i,k-1,0)\nonumber\\
            &=L_{\alpha}(i,k-1).
		\end{align*}
		\item $\mu_{\alpha}(i,k-1)=1$, we have
		\begin{align*}
			&{~}L_{\alpha}(i,k)-L_{\alpha}(i,k-1) \nonumber\\
			=&{~}Q_{\alpha}(i+1,k,0)-Q_{\alpha}(i,k,1)-\left(Q_{\alpha}(i+1,k-1,0)-Q_{\alpha}(i,k-1,1)\right) \nonumber\\
			%&\quad =Q_{\alpha}(i+1,k,0)-Q_{\alpha}(i+1,k-1,0)-\left(Q_{\alpha}(i,k,1)-Q_{\alpha}(i,k-1,1)\right)\\
			=&{~}\alpha\left(J_{\alpha}(i+2,k+1)-J_{\alpha}(i+2,k)\right)-\alpha p \left(J_{\alpha}(i+1,k+1)-J_{\alpha}(i+1,k)\right) \nonumber\\
			>&{~}\alpha \left(J_{\alpha}(i+2,k+1)-J_{\alpha}(i+2,k)-J_{\alpha}(i+1,k+1)+J_{\alpha}(i+1,k)\right) \nonumber\\
			=&{~}\alpha \left(L_{\alpha}(i+1,k+1)-L_{\alpha}(i+1,k)\right) \ge 0.
		\end{align*}
	\end{enumerate}
\end{enumerate}
For all possible cases, we proved that \eqref{L7c} hold for $j=k-1$. This completes the proof.

\subsection{Proof of Proposition \ref{thm:average}} \label{proof:average}
Firstly, according to Lemma \ref{lemma:jalpha} and Lemma \ref{lemma:ij}, the optimal policy for the problem \eqref{prob:discount} with the extended state set $\mathcal{A}_{\mathrm{ext}}$ is
\begin{align}
\mu_{\alpha}(i,j) = \left\{\begin{array}{ll}
0, &\textrm{if~} i \ge I_{\alpha}(j), j < j_{\alpha},\\
1, &\textrm{if~} i < I_{\alpha}(j), j < j_{\alpha} \textrm{~or~} i < i_{\alpha}, j \ge j_{\alpha},\\
2, &\textrm{if~} i \ge i_{\alpha}, j \ge j_{\alpha},
 \end{array} \right. \label{eq:mualpha}
\end{align}
where $I_{\alpha}(1)\le I_{\alpha}(2)\le \cdots\le I_{\alpha}(j_{\alpha}-1)\le i_{\alpha}$. Furthermore, as $A_{\alpha}^{0-1}(i,i) = -\omega E_{\mathrm t} < 0$, we have $\mu_{\alpha}(i,i) \in \{0, 2\}$ for all $i \ge 1$. Hence, we have $i_{\alpha} \le j_{\alpha}$. Therefore, the optimal policy for the problem \eqref{prob:discount} with state set $\mathcal{A}$ is the same as \eqref{eq:mualpha}.

Then, we consider the average cost problem \eqref{prob:avgmin}. Note that there must exist some state $(i', j')$ such that $\mu(i', j') = 2$ since otherwise, the average cost is infinity as the AoIR grows to infinity with probability 1, which is obviously not optimal. According to Proposition \ref{prop:converge}, there is a sequence of discount factors $\{\beta_n\}$ such that $\mu_{\beta_n}(i,j) \rightarrow \mu(i,j)$. Then, there is an integer $N'$ such that $\mu_{\beta_n}(i',j') = \mu(i',j') = 2$ for all $n \ge N'$. Since $\mu_{\beta_n}(i,j)$ is a threshold policy as in \eqref{eq:mualpha}, we have $\mu_{\beta_n}(i,j) = 2$ for all $i \ge i', j \ge j'$ and $n \ge N'$. Therefore, $\mu(i,j) = 2$ for all $i \ge i', j\ge j'$.

For each state $(i,j) \in \mathcal{A}' = \{(i,j) \in \mathcal{A}| j \le j'\}$, there exists an integer $N(i,j)$ such that $\mu_{\beta_n}(i,j) = \mu(i,j)$ for all $n \ge N(i,j)$. As $\mathcal{A}'$ is finite, by setting $N_{\max} = \max_{(i,j) \in \mathcal{A}'} N(i,j)$, we have for any $n \ge N_{\max}$, $\mu_{\beta_n}(i,j) = \mu(i,j)$ for all $(i,j) \in \mathcal{A}'$. As $\mu_{\beta_n}(i', j') = 2$, we have $j_{\beta_n} \le j'$. Therefore, there exist $\theta(1) \le \cdots \le \theta(\theta_{\mathrm r}) \le \theta_{\mathrm t} \le \theta_{\mathrm r} \le j'$ such that $\mu(i,j)$ is expressed as \eqref{eq:mu} for $(i,j) \in \mathcal{A}'$.

According to the above, for $n \ge N_{\max}$, we have $\mu_{\beta_n}(i,j')= \mu(i,j') = 1$ for $i < \theta_{\mathrm t}$ and $\mu_{\beta_n}(i,j') = \mu(i,j') = 2$ for $i \ge \theta_{\mathrm t}$. As $\mu_{\beta_n}(i,j)$ is a threshold policy, we have $\mu_{\beta_n}(i,j) = 1$ for all $i < \theta_{\mathrm t}, j \ge j'$ and $\mu_{\beta_n}(i,j) = 2$ for all $i \ge \theta_{\mathrm t}, j \ge j'$. Therefore, $\mu(i,j) = 1$ for all $i < \theta_{\mathrm t}, j \ge j'$ and $\mu(i,j) = 2$ for all $i \ge \theta_{\mathrm t}, j \ge j'$.

In summary, \eqref{eq:mu} holds for all $(i, j) \in \mathcal{A}$. This completes the proof.

\subsection{Proof of Theorem \ref{thm:closed}} \label{proof:closed}
As the second element of the state $(i,j) \in \tilde{\mathcal{A}}$ is unique, we can re-index the state $(i,j)$ by $j$ for convenience. The state transition probability from $j$ to $k$ can be expressed as
\begin{align*}
p_{jk} = \left\{ \begin{array}{ll} 1, & \textrm{if~} k = j+1 \textrm{~and~} 1 < j < \theta_{\mathrm r},\\
p, & \textrm{if~} k = j+1 \textrm{~and~} j \ge \theta_{\mathrm r},\\
1-p, & \textrm{if~} j = \theta_{\mathrm r} + m \theta_{\mathrm t} + k-1 \textrm{~and~} m \ge 0, 1 \le k \le \theta_{\mathrm t},\\
0, & \textrm{else}.
\end{array}\right.
\end{align*}

By solving $\pi = \pi \mathbf{P}$, where $\pi = (\pi_1, \pi_2, \cdots)$ with $\pi_j$ denoting the stationary probability of the $j$-th state, and $\mathbf{P}$ is the state transition matrix with elements $p_{jk}, \forall j \ge 1, k \ge 1$, we have
\begin{align}
\pi_j = \left\{ \begin{array}{ll} \frac{1-p^j}{1-p^{\theta_{\mathrm t}}}\pi_0, & 1 \le j \le \theta_{\mathrm t}, \\
\pi_{0}, & \theta_{\mathrm t} < j \le \theta_{\mathrm r}, \\
p^{j-\theta_{\mathrm r}}\pi_{0}, & j > \theta_{\mathrm r}, \end{array}\right. \label{eq:pi_j}
\end{align}
where $\pi_0 = \dfrac{1-p^{\theta_{\mathrm t}}}{\theta_{\mathrm r}(1-p^{\theta_{\mathrm t}})+\theta_{\mathrm t}p^{\theta_{\mathrm t}}}$. Then, the average AoIR and the average energy consumption can be calculated as
\begin{align}
\bar{\Delta} &= \sum_{j=1}^{\infty} j\pi_j + \frac{1}{2}, \label{eq:delta}\\
\bar{E} &=  \sum_{m = 0}^{\infty}(E_{\mathrm t} + E_{\mathrm s})\pi_{\theta_{\mathrm r} + m\theta_{\mathrm t}} + \sum_{m = 0}^{\infty}\sum_{n=1}^{\theta_{\mathrm t}-1}E_{\mathrm t}\pi_{\theta_{\mathrm r} + m\theta_{\mathrm t}+n}. \label{eq:energy}
\end{align}
Substituting \eqref{eq:pi_j} into \eqref{eq:delta} and \eqref{eq:energy}, we can obtain \eqref{eq:aoiclosed} and \eqref{eq:eclosed}. This completes the proof.

\bibliographystyle{IEEEtran}
\bibliography{ref}

% that's all folks
\end{document}